\newtheorem{theorem}{\textbf{Theorem}}
\newtheorem{lemma}{\textbf{Lemma}}
\newtheorem{proposition}{\textbf{Proposition}}
\newtheorem{corollary}{\textbf{Corollary}}
\newtheorem{myremark}{\textbf{Remark}}
\newtheorem{definition}{\textbf{Definition}}
\newtheorem{remark*}{Remark}
\newtheorem{conjecture}{\textbf{Conjecture}}
\newtheorem{open}{\textbf{Open question}}
\newcommand{\set}[1]{\ensuremath{\bgroup\left\{#1\right\}\egroup}\xspace}
\newcommand{\ie}{\textit{i.e.}\@\xspace}
\newcommand{\wrt}{\textit{w.r.t.}\@\xspace}
\newcommand{\etc}{\textit{etc.}\@\xspace}
\newcommand{\N}{\mathbb{N}}
\newcommand{\Z}{\mathbb{Z}}
\newcommand{\B}{\mathbb{B}}
\renewcommand{\O}{\mathcal{O}}
\renewcommand{\o}{o}
\newcommand{\NP}{{\mathsf{NP}}}
\newcommand{\Poly}{{\mathsf{P}}}
\newcommand{\NC}{{\mathsf{NC}}}
\newcommand{\AC}{{\mathsf{AC}}}
\newcommand{\TC}{{\mathsf{TC}}}
\renewcommand{\L}{{\mathsf{L}}}
\newcommand{\NL}{{\mathsf{NL}}}
\newcommand{\dom}{{\mathsf{dom}}}
\newcommand{\spanv}{{\mathsf{span}^+}}
\newcommand{\red}[1]{\leq_{#1}}
\newcommand{\CVP}{\ensuremath{\mathsf{CVP}}\xspace}
\newcommand{\MCVP}{\ensuremath{\mathsf{MCVP}}\xspace}
\newcommand{\PCVP}{\ensuremath{\mathsf{PCVP}}\xspace}
\newcommand{\MPCVP}{\ensuremath{\mathsf{MPCVP}}\xspace}
\newcommand{\LOGCFL}{\ensuremath{\mathsf{LOGCFL}}\xspace}
\newcommand{\neighborhood}{\mathcal{N}}
\newcommand{\neighborhoodM}{\mathcal{N}_\text{M}}
\newcommand{\neighborhoodVN}{\mathcal{N}_\text{VN}}
\newcommand{\cfgs}{\N^{\Z^d}}
\renewcommand{\H}[1]{\ensuremath{\mathsf{H}\left(#1\right)}\xspace}
\newcommand{\distribution}{\mathcal{D}}
\newcommand{\structure}[1]{\ensuremath{\left\langle#1\right\rangle}\xspace}
\newcommand{\indic}{{\bf 1}}
\newcommand{\degree}{d}
\newcommand{\distributionone}{\mathcal{D}_1}
\newcommand{\odometer}{{\mathsf{odo}}}
\newcommand{\odometerseq}{{\mathsf{odo}_{\toseq}}}
\newcommand{\PRED}{\ensuremath{\mathsf{PRED}}\xspace}
\newcommand{\SPRED}{\ensuremath{\mathsf{S\text{-}PRED}}\xspace}
\newcommand{\FSPRED}{\ensuremath{\mathsf{1^\text{st}col\text{-}S\text{-}PRED}}\xspace}
\newcommand{\CPRED}{\ensuremath{\mathsf{Compute\text{-}PRED}}\xspace}
\newcommand{\toseq}{\rightharpoonup}
\newcommand{\donne}{\to}
\newcommand{\sdonne}{\stackrel{*}{\donne}}
\newcommand{\donnes}[1]{\stackrel{#1}{\rightharpoonup}}
\newcommand{\sdonnes}{\stackrel{*}{\rightharpoonup}}
\title{How hard is it to predict sandpiles on lattices? A survey.}
\author[1]{Enrico Formenti}
\author[2]{K\'evin Perrot}
\affil[1]{Université Côte d’Azur, CNRS, I3S, France.}
\affil[2]{Aix-Marseille Univ., Toulon Univ., CNRS, LIS, Marseille, France.}
\date{}
\begin{document}
\setlist[itemize,enumerate]{nosep}
\maketitle

\begin{abstract}
  Since their introduction in the 80s, sandpile models have raised
  interest for their simple definition and their surprising dynamical
  properties. In this survey we focus on the computational complexity 
  of the prediction problem, namely, the complexity of knowing, 
  given a finite configuration $c$ and a cell $x$ in $c$, if cell $x$
  will eventually become unstable. This is an attempt to
  formalize the intuitive notion of ``behavioral complexity'' that one
  easily observes in simulations. However, despite many efforts and nice
  results, the original question remains open: how hard is it to
  predict the two-dimensional sandpile model of Bak, Tang and Wiesenfeld?
\end{abstract}

\section{Introduction}

Langton proposed to describe complex dynamical systems as being at the
``edge of chaos'' \cite{Langton1990}. Complexity arises in a context that is
neither too ordered, \ie not exhibiting a rigid structure allowing to efficiently understand
and predict the future state of the system, nor completely chaotic, \ie avoiding
pseudo-random behaviour and uncomputable long-term effects. 
From a computer science point of view, complex dynamical systems are an object of great
interest because they precisely model physical systems that are able to perform
non-trivial computation.

In 1990, Moore \emph{et. al.}\xspace started to formalize the intuitive notion of
``complexity'' of a system, through the computational complexity of predicting
the behaviour of the system 
\cite{Griffeath1996LifeWD,Machta1996,Machta1997,Machta1993,moore1997predicting,woods110}
(for other kinds of complexity in dynamical systems, see for instance~\cite{Formenti2013,formenti2017,dennunzio2008,formenti2007b,Dennunzio2012,CattaneoDFP09,CattaneoCDFM14,DennunzioFP13}). 
Computational complexity theory is actually a perfect fit to capture the complexity of systems
able to compute. In turned out that the hierarchy of complexity classes offers
a very precise way to characterize the behavioural complexity of discrete
dynamical systems. For example if a system has a $\Poly$-hard prediction
problem, it means that it is able to efficiently simulate a general purpose
sequential computer (such as a Turing machine), whereas if the prediction
problem is much below in the hierarchy, let say in $\L$, then the system can
only compute under severe space restriction, and therefore cannot perform
efficiently any computation if we assume $\Poly \neq \L$. In this precise sense
the former would be more complex than the latter.

At the same time, the sandpile model of Bak, Tang and Wiesenfeld~\cite{1987-BakTangWiesenfeld-SOC,bak88}
gained interest. It exhibits
both a ``complex'' behaviour and a very elegant algebraic 
structure~\cite{1990-Dhar-SelfOrganizedSandpileAutomatonModels}.
Unsurprisingly, sandpile models are capable of universal 
computation~\cite{1996-GolesMargenstern-SPMUniversal}. In 1999, Moore and 
Nilsson began to apply the computational complexity vocabulary to capture the 
intuitive ``complexity'' of sandpile models~\cite{1999-Moore-complexitySandpiles}. Moreover, they
observed a dimension sensitivity that received great 
attention. It is the purpose of this 
survey to review such very interesting results and to generalise some of them.

Sandpile models are a subclass of number-conserving cellular automata where we
are given a $d$-dimensional lattice ($\Z^d$) with a finite amount of sand
grains at each cell. A local rule applied in parallel at every cell let grains
topple: if the sand content at a cell is greater or equal to $2d$,
then the cell gives one grain to each of the $2d$ cells it touches (two cells
in each dimension). This is the very first sandpile model of Bak, Tang and
Wiesenfeld, which they defined for $d=2$. This is also the sandpile model
studied by Moore and Nilsson, for which they proved the following foundations.
\begin{itemize}
  \item In dimension one it is possible to predict efficiently the dynamics
    with a parallel algorithm (complexity class $\NC$).
  \item In dimension three or more it is not possible to predict efficiently
    the dynamics with a parallel algorithm, unless some classical complexity
    conjecture is wrong (unless $\Poly = \NC$, since prediction is proven to
    be a $\Poly$-hard problem). In other terms, in this case
    the dynamics is inherently sequential.
\end{itemize}

This survey concentrates on lattice $\Z^d$, because of this interest in the dimension
sensitivity. In a more general setting than lattices, sandpiles can very
easily embed arbitrary computation and become almost always hard to predict from
a computational complexity point of view~\cite{1997-Goles-UniversalityCFG}.

Sandpile models have close relatives, the family of \emph{majority} cellular automata. 
Indeed, though the latter model is not number-conserving, open questions on its 
two-dimensional prediction are remarkably similar \cite{Moore1997}. Goles \emph{et. al.}
made progress in various directions to capture the essence of $\Poly$-completeness
in majority cellular automata \cite{gmmo17,gm14,gm16,gmpt17,gmt13}, with 
notable applications of $\NC$ algorithms from \cite{jaja92}. Cellular automata with finite
support are very close to sandpiles when considered under the sequential update
policy. However, in this context we witness a general increase of the complexity
of the (decidable) questions about the dynamics which seems not to happen in 
sandpiles~\cite{formenti2017}.
\medskip

The paper is structured as follows.
In Section~\ref{s:def} we define sandpile models on lattices with
uniform neighborhood, formulate three versions of the prediction problem,
introduce some classical considerations, and briefly review the
complexity classes at stake. Subsequent sections survey known
results, and generalise some of them or propose conjectures.
All prediction problems are in $\Poly$ (Section \ref{s:poly}).
The dimension sensitivity for arbitrary sandpile models
generalizes as follows: in dimension one prediction is in $\NC$
(Section \ref{s:1d}), in dimension three or above it is
$\Poly$-complete (Section \ref{s:3d}), and in two dimensions the
precise complexity classification remains open for the original
sandpile model with von Neumann neighborhood, though insightful
results have been obtained around this question (Section
\ref{s:2d}). Finally, we briefly mention how undecidability may
arise when the finiteness condition on the initial configuration
is relaxed (Section~\ref{s:undecidable}).

\section{Definitions}
\label{s:def}

Since their introduction by Bak, Tang and Wiesenfeld in~\cite{1987-BakTangWiesenfeld-SOC}, sandpiles
underwent many generalizations. In this survey we propose a
general framework which tries to cover all such models.
However, we will focus only over lattices of 
arbitrary dimension and uniform (in space and time) \textit{number-conserving} local rules. 
This section includes formalization of folklore terminology and considerations extended to this 
general setting.

\subsection{Sandpile models on lattices with uniform neighborhood}
Let $\N_+$ (resp. $\N_-$) denote the set of strictly positive (resp. negative) integers.
For any \textit{dimension} $d\in\N_+$, a \textit{cell} is a point in $\Z^d$. A
\textit{configuration} is an assignment of a finite number of sand grains to each cell \ie it is an element 
of $\N^{\Z^d}$. A \textit{sandpile model} is a structure \structure{\neighborhood, \distribution, \theta}
where $\neighborhood$ is a finite subset of $\Z^d \setminus \{0^d\}$ called \textit{neighborhood} 
and $\distribution \in {\N_+}^\neighborhood$ is \textit{distribution}
of sand grains \wrt the neighborhood $\neighborhood$ (it is required that $\dom(\distribution)=\neighborhood$)
and
$\theta=\sum_{x\in\neighborhood} \distribution(x)$ is
the \textit{stability threshold}. To avoid irrelevant technicalities, we will consider only \textit{complete} neighborhoods $\neighborhood$, that is to say
such that
\begin{equation}\label{eq:span}
  \spanv(\neighborhood)=\Z^d,
\end{equation}
where $\spanv(\neighborhood)$ is the set of positive integer linear combinations of cell coordinates from $\neighborhood$.
Moreover, remark that, for simplicity sake, we assumed that $0^d\notin\neighborhood$ \ie cells do not belong to their own neighborhoods. 
Indeed, allowing  $0^d\in\neighborhood$ would only correspond to having irremovable grains in each cell.

The dynamics associated with a sandpile model is the
parallel application of the following local rule:
if a cell has at least $\theta$ grains,
then it redistributes $\theta$ of its grains to {\em its neighborhood} $x+\neighborhood$,
according to the distribution $\distribution$.
More formally, denote $F: \N^{\Z^d} \to \N^{\Z^d}$ the \textit{global rule} which associates any configuration $c\in\N^{\Z^d}$ with
a configuration $F(c)\in\N^{\Z^d}$ defined as follows
\begin{equation}
  \label{eq:sandpile}
  \forall x \in \Z^d :
  \big(F(c)\big)(x)=
  c(x)
  - \theta\H{c(x)-\theta}
  + \sum_{y \in\neighborhood} \distribution(y)\H{c(x+y)-\theta}
\end{equation}
where $\H{n}$ equals $1$ if $n \geq 0$, and equals $0$ otherwise (the classical Heaviside function). From the Equation~\eqref{eq:sandpile}, it is
clear that the knowledge of the distribution $\distribution$
suffices to completely specify the dynamics since from the domain of $\distribution$ one can
deduce $\neighborhood$ and the dimension $d$, and from $\distribution$ one finds $\theta$.
However, we shall prefer to provide explicitly $\structure{\neighborhood, \distribution, \theta}$, at least
in this introductory material.

The system characterized by Equation~\eqref{eq:sandpile} is \textit{number-conserving} in the sense
the total number of sand grains is conserved along its evolution as stated in the following.

\begin{proposition}[Number conservation]
  \label{prop:mass}
  For any configuration $c \in \N_+^{\Z^d}$, it holds
  \[\#(F(c))=\#(c)\]
  where $\#(c)=\sum_{x \in \Z^d} c(x)$. Note that $\#(c)$ could be positive infinity.
\end{proposition}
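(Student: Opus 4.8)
The plan is to prove the identity by summing the defining Equation~\eqref{eq:sandpile} over all cells and checking that the grains leaving toppling cells are exactly those received by their neighbours. Concretely, I would start from
\[
  \#(F(c)) = \sum_{x\in\Z^d} \Big( c(x) - \theta\,\H{c(x)-\theta} + \sum_{y\in\neighborhood}\distribution(y)\,\H{c(x+y)-\theta}\Big),
\]
and split the right-hand side into the \emph{present mass} $\sum_x c(x)=\#(c)$, the total \emph{outflow} $\theta\sum_x\H{c(x)-\theta}$, and the total \emph{inflow} $\sum_x\sum_{y\in\neighborhood}\distribution(y)\,\H{c(x+y)-\theta}$. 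Writing $U=\sum_{x\in\Z^d}\H{c(x)-\theta}$ for the number of unstable cells, the outflow is simply $\theta U$.

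The key step is to evaluate the inflow. I would exchange the two summations and, for each fixed $y\in\neighborhood$, reindex the inner sum through the bijection $x\mapsto x+y$ of $\Z^d$, which gives $\sum_{x}\H{c(x+y)-\theta}=U$ independently of $y$. Hence the inflow equals $\sum_{y\in\neighborhood}\distribution(y)\,U = U\sum_{y\in\neighborhood}\distribution(y)=\theta U$, using the definition $\theta=\sum_{y\in\neighborhood}\distribution(y)$. The outflow and inflow therefore cancel, leaving $\#(F(c))=\#(c)$.

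The main obstacle is that a configuration may carry infinitely many grains (as the statement warns), so the rearrangements above need justification and the subtraction $\#(c)-\theta U$ is only meaningful when both quantities are finite. I would handle this by first noting that every summand is a nonnegative integer and that $F(c)(x)\ge 0$, so all the sums are well-defined unordered sums in $[0,+\infty]$ and Tonelli's theorem legitimises both the exchange of summation order and the reindexing. If $\#(c)<\infty$, then only finitely many cells are nonzero, hence $U<\infty$, every sum is finite, and the cancellation above is literal. If $\#(c)=+\infty$, I would argue $\#(F(c))=+\infty$ by cases on $U$: when $U=+\infty$ the inflow term alone forces $\#(F(c))=+\infty$; when $U<\infty$ the finitely many unstable cells remove only a finite amount of mass, so the preserved mass sitting on stable cells is already infinite. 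In every case $\#(F(c))=\#(c)$.
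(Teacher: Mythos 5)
Your proof is correct, and in fact the paper states Proposition~\ref{prop:mass} \emph{without} proof, so there is nothing to diverge from: your argument --- summing Equation~\eqref{eq:sandpile} over all cells, splitting into mass, outflow $\theta U$, and inflow, then reindexing the inflow via the bijection $x\mapsto x+y$ and using $\theta=\sum_{y\in\neighborhood}\distribution(y)$ --- is exactly the standard double-counting proof one would supply, and your treatment of the infinite case (Tonelli for the exchange of nonnegative unordered sums, then the dichotomy on $U$, noting that finitely many unstable cells carry only finite mass while stable cells retain theirs) closes the only real gap, namely that the cancellation $\#(c)-\theta U+\theta U$ is literal only when everything is finite. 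One minor remark: as literally stated the proposition takes $c\in\N_+^{\Z^d}$, i.e.\ every cell strictly positive, in which case $\#(c)=+\infty$ always and only your infinite branch is exercised; this is evidently a slip for $\N^{\Z^d}$, the configuration space used everywhere else in the paper, and your proof covers that intended setting in full.
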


Figure~\ref{fig:sandpile} provides an illustration of
a simple sandpile model and its dynamics.

\begin{figure}[t]
  \centering
  \raisebox{.5em}{\includegraphics[scale=.8]{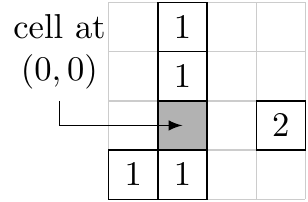}}
  \hspace*{1cm}
  \includegraphics[scale=.8]{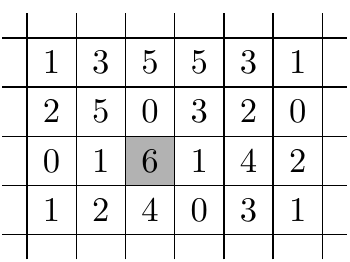}
  \raisebox{2em}{$\overset{F}{\mapsto}$}
  \includegraphics[scale=.8]{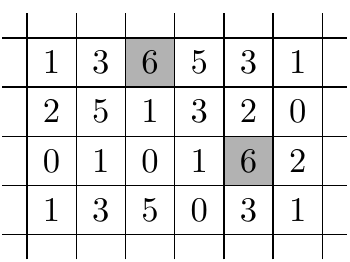}
  \raisebox{2em}{$\overset{F}{\mapsto}$}
  \includegraphics[scale=.8]{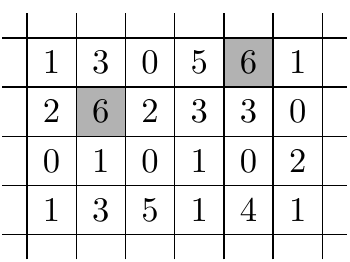}
  \caption{An example of sandpile model neighborhood and
  distribution (left, $\theta=6$), and two steps of the global
  rule from a finite configuration (right, outside the pictured
  region the configuration is considered as initially empty).
  Unstable cells are marked. Only the interesting portion of the
  configuration is drawn.}
  \label{fig:sandpile}
\end{figure}


\begin{myremark}
  \label{remark:vn}
  In this general framework, the \textit{Moore sandpile model} of dimension
  $d$ and radius $r$ corresponds to $\structure{\neighborhoodM,\distributionone,(2r+1)^d-1}$
  where $\neighborhoodM=\set{-r,\ldots,r}^d \setminus \set{0^d}$
  and $\distributionone$ is the constant function equal to $1$ for any element of
  its domain. The \textit{von Neumann sandpile model} of dimension $d$ and radius
  $r$ corresponds to
  \structure{\neighborhoodVN, \distributionone, 2rd} where
  \[
    \neighborhoodVN=\set{(x_1,\ldots,x_d)\in\set{-r,\ldots,r}^d\,\mid\,\exists i : (x_i \neq 0 \text{ and } \forall j \neq i:x_j=0)}
    \enspace
  \]
  (this is the original model of Bak, Tang and Wiesenfeld \cite{1987-BakTangWiesenfeld-SOC} for $d=2$ and $r=1$).
\end{myremark}

Denote $c \donne c'$ whenever $c'=F(c)$, and let $\sdonne$
be the reflexive and transitive closure of $\donne$. Given a cell $x$, remark that its neighbors might play
different roles. Indeed, one can distinguish the \textit{out-neighbors} of $x$ as the set of
cells $x+\neighborhood = \{ y \mid (y-x) \in \neighborhood \}$ from the
\textit{in-neighbors} which are $x-\neighborhood = \{ y \mid (x-y) \in \neighborhood \}$. 

\begin{myremark}
  Another even more general way to define sandpile models is on an arbitrary
  multi-digraph
  $G=(V,A)$ (where $A$ is a multiset), where each vertex has finite in-degree and finite out-degree. In this case, configurations $c$ are taken in $\N^V$ and the local
  rule would be: if $v \in V$ contains at least $\degree^+(v)$ grains ($\degree^+(v)$ is the out-degree of node $v$) then it gives one
  grain along each of its out-going arcs. When the graph supporting the
  dynamics is not a lattice, \textit{sandpile models} are also called \textit{chip
  firing games} in the literature \cite{1991-Lovasz-CFG,1992-Lovasz-DirectedCFG}.
\end{myremark}

A cell $x$ is \textit{stable} if $c(x)<\theta$, and \textit{unstable} otherwise.
A configuration is \textit{stable} when all cells
are stable, and is \textit{unstable} if at least one cell is unstable. Remark that
stable configurations are fixed points of the global rule $F$. From the Equation~\eqref{eq:sandpile}, it is
clear that the system is deterministic and therefore given a configuration $c$ and a stable configuration
$c'$ either there is a unique sequence of configurations
$c = c_1 \donne c_2 \donne \dots \donne c_n = c'$ or 
$c\not\sdonne c'$. However, one can consider also other types updating policies. The \textit{sequential
policy} consists in choosing non-deterministically a cell from the unstable ones and
in updating only this chosen cell. Then, repeat the same update policy on the
newly obtained configuration and so on. It is clear that the new dynamics might be very different from the one
obtained from Equation~\eqref{eq:sandpile}. Sandpiles models in which the sequential update and the
parallel update policies produce the same set of stable configurations with the same number of topplings
are called \textit{Abelian}.
Recall that the terms
\textit{firing} and \textit{toppling} are employed to describe the action of moving
sand grains from unstable cells to other cells. The \textit{stabilization} of a configuration $c$
is the process of reaching a stable configuration. A \textit{finite}
configuration contains a finite number of grains, or equivalently its number of non-empty cells is finite.

The topplings counter, usually called \textit{shot vector} or \textit{odometer function}
in the literature, started at an initial configuration
configuration $c$ is a very useful formal tool for the analysis of sandpiles and it is defined as follows. 
For all configurations $c, c', c''$ if $c \donne c'$ then,
$$
  \forall x \in \Z^d : \odometer(c,c')(x) =
  \H{c(x)-\theta},
$$
and if $c \sdonne c' \donne c''$ then
$$
  \forall x \in \Z^d : \odometer(c,c'')(x) =
  \odometer(c,c')(x) + \odometer(c',c'')(x)
$$
with the convention $\odometer(c,c)(x)=0$ for all $x\in\Z^d$.

Let $c \donnes x c'$ denote application of the sequential update policy at cell $x$ \ie, 
one has $c \donnes x c'$ if and only if
\[
  c'(y)=\left\{
  \begin{array}{ll}
    c(y) - \theta\,\H{c(y)-\theta} &\text{ if }y=x\\
    c(y) + \distribution(y-x)\,\H{c(x)-\theta}
    & \text{ if } y \in x+\neighborhood\\
    c(y) & \text{ otherwise.}
  \end{array}
  \right.
\]
We also simply denote $c \donnes{} c'$ when there exists $x$ such that
$c \donnes x c'$. Moreover, let $\sdonnes$ denote
the reflexive transitive closure of $\donnes{}$, and
$\odometerseq(c,c')$ denote
the odometer function under the sequential update policy (it
counts the number of topplings occurring at each cell to reach
$c'$ from $c$).
The Abelian property can be formally stated as follows.

\begin{proposition}
  \label{prop:abelian}
  For any sandpile model, given a configuration $c$, if $c \donne c'$ then $c \sdonnes c'$.\\
  Moreover, if $c \sdonne c'$ and $c'$ is a stable configuration, then
  \begin{enumerate}
    \item $c \sdonnes c'$,
    \item $c \not\sdonnes c''$ for any other stable configuration $c''$,
    \item $\odometer(c,c')=\odometerseq(c,c')$.
  \end{enumerate}
\end{proposition}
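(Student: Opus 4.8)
The plan is to reduce everything to a single \emph{commutation lemma} for sequential topplings and then run the classical least-action (abelian) argument to obtain uniqueness of both the reachable stable configuration and the odometer. First I would record the key local fact: if $x\neq y$ are two distinct unstable cells of a configuration $c$, and $c\donnes{x}c_1$, $c\donnes{y}c_2$, then there is a configuration $c_3$ with $c_1\donnes{y}c_3$ and $c_2\donnes{x}c_3$. This is immediate from the sequential rule, since toppling a cell $z$ only subtracts $\theta$ at $z$ and adds the fixed amounts $\distribution(\cdot)$ on $z+\neighborhood$; the two topplings thus act as additions of fixed vectors whose sum is order-independent, and the only thing to check is that firing $y$ leaves $x$ unstable and conversely, which holds because a toppling never decreases the content of any cell other than the one being fired. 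I would also isolate the \emph{monotonicity} observation that the content of a cell can strictly decrease only when that very cell is fired; consequently any cell that is unstable in $c$ must be fired in \emph{every} sequential sequence $c\sdonnes c'$ ending in a stable $c'$ (otherwise its content would never drop below $\theta$).

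For the first displayed claim I would realise one parallel step sequentially. Let $U=\{x : c(x)\geq\theta\}$ be the unstable cells of $c$ and fire them one at a time, each exactly once, in an arbitrary order. Each such cell is still unstable when its turn comes, because the previously fired cells only added grains to it; and since every cell is fired exactly as many times as it topples in the parallel step, a direct comparison with Equation~\eqref{eq:sandpile} shows the resulting configuration is exactly $F(c)=c'$. This proves $c\donne c'\Rightarrow c\sdonnes c'$, and crucially it does so with a sequential odometer equal to the parallel odometer of that single step. Point~1 then follows by concatenation: writing $c=c_1\donne\dots\donne c_n=c'$, each parallel step is realised by such a block of single topplings, and the blocks compose into a legal sequence $c\sdonnes c'$.

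The heart of the argument is a \emph{prefix-exchange} lemma asserting that any two legal sequential sequences from $c$ that both end in stable configurations fire each cell the same number of times. I would prove it by induction on the length of the shorter sequence: given legal sequences $\sigma$ and $\tau$ with $\sigma$ reaching a stable configuration, the first cell $y$ fired by $\tau$ is unstable in $c$, hence by monotonicity it occurs in $\sigma$; applying the commutation lemma repeatedly I pull the first occurrence of $y$ in $\sigma$ to the front, obtaining a legal sequence $c\donnes{y}c_1\sdonnes c'$, and I invoke the induction hypothesis on the tails issued from $c_1$. This yields that the firing multiset of $\tau$ is contained in that of $\sigma$, with equality when $\tau$ also stabilises; hence the odometer and the final stable configuration are uniquely determined by $c$. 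Points~2 and~3 are then direct: uniqueness of the stabilising multiset gives $c\not\sdonnes c''$ for any stable $c''\neq c'$, and since the sequential realisation of the parallel trajectory from Point~1 reaches $c'$ with odometer equal to $\odometer(c,c')$, uniqueness of the sequential odometer forces $\odometerseq(c,c')=\odometer(c,c')$.

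The main obstacle I expect is the bookkeeping in the prefix-exchange step: the commutation lemma must be applied along a whole sequence in order to move one toppling to the front, and one must verify that legality is preserved at every intermediate position (each displaced cell remaining unstable at its new place). The commutation and monotonicity facts themselves are elementary. Notably, the whole scheme avoids any appeal to termination — which is essential, since general configurations need not stabilise; here stabilisability is supplied only by the hypothesis $c\sdonne c'$ with $c'$ stable.
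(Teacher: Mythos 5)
You should first know that the paper itself gives no proof of Proposition~\ref{prop:abelian}: it is stated as the classical Abelian property, folklore from the chip-firing literature (Dhar; Bj\"orner--Lov\'asz--Shor), so there is no in-paper argument to compare against. Your proof is correct, and it is precisely the canonical argument: the local commutation (diamond) lemma, the monotonicity observation that only a firing at $x$ can decrease the content of $x$ (valid here because $0^d\notin\neighborhood$ and $\distribution$ is strictly positive), the sequential realisation of one parallel step by firing each unstable cell exactly once, and the prefix-exchange induction showing that all legal stabilising sequences from $c$ have the same firing multiset --- which simultaneously yields points 1--3 and, as a bonus, the well-definedness of $\odometerseq(c,c')$, something the paper's definition tacitly assumes. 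The one delicate point, preserving legality while pulling the first occurrence of $y$ to the front of $\sigma$, is exactly the one you flag, and your resolution (each swap keeps the displaced cell unstable because firing $y$ only adds grains elsewhere) is right; the induction is cleanest when run on the length of $\tau$, the not-necessarily-stabilising sequence, but that is cosmetic. Two small remarks. First, your ``direct comparison with Equation~\eqref{eq:sandpile}'': as printed, that equation makes a toppling cell distribute to $x-\neighborhood$ while the sequential rule distributes to $x+\neighborhood$; this is a sign slip in the paper, and your comparison is valid under the consistent convention stated in the surrounding text (``redistributes \ldots to its neighborhood $x+\neighborhood$''). Second, your sequential realisation of a parallel step tacitly assumes $c$ has finitely many unstable cells --- otherwise $\sdonnes$, being a finite-step closure, cannot reproduce even one parallel step --- but this is an implicit hypothesis of the proposition itself and is harmless in the paper's setting of finite configurations; your closing observation that the argument needs no termination theorem, stabilisability being supplied by the hypothesis $c \sdonne c'$, is a genuine point in favour of this formulation.
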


We stress that Proposition~\ref{prop:abelian} is an important
feature in our context. Indeed, it states that the
(non-deterministic) sequential policy always leads to the same stable
configuration as the parallel policy,
with exactly the same number of topplings at each cell.
Relaxing this
requirement deeply changes the dynamics and the structure of the phase space (it has no more a lattice
structure for example). For non-abelian models see for example~\cite{formenti2006,formenti2007a}.
\smallskip

Endowing the set of configurations
with binary addition $+$ (given two configurations $c,c'$, $(c+c')(x)=c(x)+c'(x)$ for all $x$ \ie grain content is added cell-wise), 
$\N^{\Z^d}$ is a commutative monoid. It is also the case of the set of stable configurations, where the
  addition is defined as addition followed by stabilization.
The famous \textit{Abelian sandpile group} of
\textit{recurrent} configurations appears
when a \textit{global sink} is added to the multi-digraph supporting the
dynamics. This subject goes beyond the scope of the present survey,
for more see \cite{dhar1995algebraic,dhar1999abelian}.

\subsection{Prediction problems}

Given a sandpile model $\structure{\neighborhood,\distribution,\theta}$, the basic prediction problem
asks if a certain cell $x$ will become unstable when the system is started from a given finite initial configuration
$c$. More formally,

\vspace*{.5em}
\noindent\fbox{\parbox{\textwidth}{
  \textbf{Prediction problem ($\PRED$).}\\
  \textit{Input:} a finite configuration $c\in\cfgs$ and a cell $x\in\Z^d$.\\
  \textit{Question:} will cell $x$ eventually become unstable during the evolution from $c$?
}}
\vspace*{.2em}

One of the most intriguing features of sandpiles is that they are a paradigmatic example of
\textit{self-organized criticality}.
Indeed, starting from an initial configuration and adding
grains at random positions, the system reaches a stable
configuration $c$ from which a small perturbation (an
addition of a single grain at some cell) may trigger an
arbitrarily large chain of reactions commonly called an
{\em avalanche}. The distribution of sizes of avalanches
(when grains are added at random) follow a power
law~\cite{}. We are interested in the computational
complexity of deciding if a given cell $x$ will topple
during this process.
More
formally, one can ask the following.

\vspace*{.5em}
\noindent\fbox{\parbox{\textwidth}{
  \textbf{Stable prediction problem (\SPRED).}\\
  \textit{Input:} a stable finite configuration $c\in\cfgs$, two cells $x, y\in\Z^d$.\\
  \textit{Question:} $\!\!\!\!$\begin{tabular}[t]{l}does adding one grain on cell
  $y$ from $c$ trigger a chain of reactions\\that will eventually
  make $x$ become unstable?\end{tabular}
}}
\vspace*{.2em}

A variant of \SPRED is obtained when the cell $y$ is fixed from the very beginning.
When $y$ is the lexicographically minimal cell ({\em i.e.} the cell of the finite configuration with the lexicographically minimal coordinates), one has the
following.

\vspace*{.5em}
\noindent\fbox{\parbox{\textwidth}{
  \textbf{First column stable prediction problem (\FSPRED).}\\
  \textit{Input:} a stable finite configuration $c\in\cfgs$ and a cell $x\in\Z^d$.\\
  \textit{Question:} $\!\!\!\!$\begin{tabular}[t]{l}does adding one grain on the (lexicographically) minimal cell
  of $c$ trigger a \\ chain of reactions that will eventually make $x$ become unstable?\end{tabular}
}}
\vspace*{.2em}

Adding the grain at an extremity of the configuration
(the lexicographically {\em minimal} cell)
implies a strong monotonicity of the
dynamics, which has been
especially useful in one-dimensional proofs of $\NC$ness (see Subection
\ref{ss:avalanche} and Proposition~\ref{prop:fsmono})). $\FSPRED$ has also been called
the \textbf{Avalanche problem}.

Finally, one can simply ask how hard it is to compute the stable configuration
reached when starting from a given finite initial configuration. In other words,

\vspace*{.5em}
\noindent\fbox{\parbox{\textwidth}{
  \textbf{Computational prediction problem (\CPRED).}\\
  \textit{Input:} a finite configuration $c\in\cfgs$.\\
  \textit{Output:} what is the stable configuration reached when starting at $c$?
}}
\vspace*{.2em}

Before stepping to the detailed study of the complexity of the prediction
problems seen above, one shall discuss about the input coding and the
input size. Indeed, it is convenient
(at the cost of a polynomial increase in size) 
to consider that input configurations $c$ are given on finite $d$ dimensional
hypercubes of side $n$ placed at the origin, let us call them \textit{elementary
hypercubes} (they have a volume of $n^d$ cells). We also assume that the number of
sand grains stored at each cell of a configuration is strictly smaller than $2\theta$,
which is a constant, as this is an invariant (see Proposition~\ref{prop:theta}) that
\begin{itemize}
  \item allows to consider constant time basic operations,
  \item preserves the ``dynamical complexity'',
    within $\Poly$ and with $\Poly$-hard problems.
    Unbounded values bring considerations of another
    kind (namely, of computing fixed points from a
    single column of sand grains, as in
    \cite{levine2016,levine2017b,levine2017a,kevin-phd}) not
    necessary to capture the intrinsic complexity of
    the problem.
\end{itemize}

Cell positions are also admitted to be given using $\O(\log(n^d))$ bits which is
$\o(n^d)$ (see Lemma~\ref{lemma:configbound} for a polynomial bound on the most
distant cell that can receive a grain). The total size of any input is therefore
$\O(n^d)$, the total number of cells.

\begin{proposition}
  \label{prop:theta}
  For all $c\in\N^{\Z^d}$ and $x\in\Z^d$,
  if $c(x)<2\theta$ then $F(c)(x)<2\theta$.
\end{proposition}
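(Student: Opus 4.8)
The plan is to bound $F(c)(x)$ directly from Equation~\eqref{eq:sandpile} by a short case analysis on whether cell $x$ is stable or unstable in $c$. The crucial preliminary observation is that the total number of grains that $x$ can receive in a single step is at most the stability threshold $\theta$. Indeed, since $\distribution\in\N_+^\neighborhood$ with $\theta=\sum_{y\in\neighborhood}\distribution(y)$, and each Heaviside term satisfies $\H{c(x+y)-\theta}\in\set{0,1}$, we get
\[
  \sum_{y\in\neighborhood}\distribution(y)\,\H{c(x+y)-\theta}
  \;\le\;
  \sum_{y\in\neighborhood}\distribution(y)
  \;=\;\theta.
\]
This bound is the only non-bookkeeping ingredient, and it does not depend on the hypothesis $c(x)<2\theta$.

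Next I would split according to the value of $\H{c(x)-\theta}$. If $c(x)\ge\theta$, then $\H{c(x)-\theta}=1$, so the self-loss term $-\theta$ is active and Equation~\eqref{eq:sandpile} gives
\[
  F(c)(x)=c(x)-\theta+\sum_{y\in\neighborhood}\distribution(y)\,\H{c(x+y)-\theta}
  \;\le\; c(x)-\theta+\theta \;=\; c(x)\;<\;2\theta,
\]
where the final strict inequality is exactly the hypothesis. If instead $c(x)<\theta$, then $\H{c(x)-\theta}=0$, the cell loses nothing, and
\[
  F(c)(x)=c(x)+\sum_{y\in\neighborhood}\distribution(y)\,\H{c(x+y)-\theta}
  \;\le\; c(x)+\theta \;<\; \theta+\theta \;=\; 2\theta,
\]
the last strict inequality now coming from $c(x)<\theta$ rather than from the hypothesis. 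In both cases $F(c)(x)<2\theta$, which is the claim.

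There is no genuine obstacle here: the statement is an immediate consequence of the two competing contributions never exceeding $\theta$ each. The only point that deserves attention is ensuring the inequalities remain \emph{strict}, and this is automatic because in the first case strictness is inherited from the assumption $c(x)<2\theta$, while in the second case it follows from $c(x)<\theta$ together with the incoming-grain bound by $\theta$. This is precisely the invariant referenced earlier to justify assuming every cell stores strictly fewer than $2\theta$ grains.
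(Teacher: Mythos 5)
Your proof is correct and takes essentially the same approach as the paper's: the key point in both is that the incoming contribution $\sum_{y\in\neighborhood}\distribution(y)\,\H{c(x+y)-\theta}$ is at most $\theta$ (since $\theta=\sum_{y\in\neighborhood}\distribution(y)$), while an unstable cell loses $\theta$ grains. Your case split on $\H{c(x)-\theta}$ simply spells out the bookkeeping that the paper's one-line proof leaves implicit.
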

\begin{proof}
  In one time step a cell $x$ gains at most $\theta$ grains (if all its
  in-neighbor topple), and it looses $\theta$ grains if $c(x)\geq\theta$ because
  it is unstable.
\end{proof}


\subsection{Avalanches}
\label{ss:avalanche}

For convenience sake, denote by $\indic_\set{y}$ the indicator function of cell $y\in\Z^d$,
so that adding one grain to cell $y$ of a configuration $c\in\cfgs$ translates into 
considering the configuration $c+\indic_\set{y}$.
\medskip

The notion of \textit{avalanche} naturally arises in the dynamics of sandpiles.
It represents the chain of reactions which originates from some grain addition
to a stable configuration.

\begin{definition}
  Given a stable configuration $c\in\cfgs$ and an index $y\in\Z^d$, the \textit{avalanche}
  generated by adding one grain at cell $y$ of $c$ is given by
  $\odometer(c+\indic_\set{y},c')$ where $c \sdonne c'$ and $c'$ is a stable configuration.
\end{definition}

Avalanches are especially related to \SPRED and \FSPRED where only one
grain is added to a stable configuration. In order to study the dynamics of
avalanches, it is useful to consider sequential iterations and to introduce a
canonical sequence of cell topplings (recall that under the sequential update policy, 
the system is non-deterministic).

\begin{definition}
  The \textit{avalanche process} associated with an avalanche
  $\odometer(c+\indic_\set{y},c')$ is the lexicographically minimal sequence
  $(z_1,\dots,z_t)$ such that:
  $$c+\indic_\set{y} \donnes{z_1} c^1 \donnes{z_2} \dots \donnes{z_t} c'.$$
  Remark that $t=\sum_{x \in \Z^d} \odometer(c+\indic_\set{y},c')(x)$.
\end{definition}

The avalanche corresponding to an instance of \FSPRED verifies the following strong monotonicity property, in the sense that the dynamics is very contrained and the odometer function is incremented at most once at every cell, until a stable configuration is reached.

\begin{proposition}
  \label{prop:fsmono}
  Given a finite configuration $c\in\cfgs$ within the elementary hypercube, the
  dynamics starting from $c+\indic_\set{0^d}$ to a stable configuration topples
  any cell at most
  once. Formally, for all $c'$ such that $c+\indic_\set{0^d} \sdonne c'$
  it holds
  \[
    \forall x \in \Z^d : \odometer(c+\indic_\set{0^d},c')(x)\in\set{0,1}.
  \]
\end{proposition}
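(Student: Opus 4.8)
\emph{Proof plan.} The plan is to argue by contradiction along the canonical avalanche process
\[c+\indic_\set{0^d}\donnes{z_1}c^1\donnes{z_2}\cdots\donnes{z_t}c'.\]
I would use that $c$ is stable — as it is in the $\FSPRED$ instances this proposition is about — so that $c(x)\le\theta-1$ for every $x$, and that the support of $c$ sits inside the elementary hypercube, so that every cell lexicographically smaller than $0^d$ is empty. Assuming some cell fires at least twice, let step $k$ be the \emph{first} time a toppling repeats an earlier one, say $z_k=z_j=p$ with $j<k$; by minimality of $k$ the cells $z_1,\dots,z_{k-1}$ are pairwise distinct.

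The engine of the argument is a single charge identity. Since $z_1,\dots,z_{k-1}$ are distinct, during the first $k-1$ steps each in-neighbour of a cell $x$ fires at most once, so just before any step $m\le k$
\[c^{m-1}(x)=c(x)+\indic_\set{0^d}(x)+\!\!\sum_{\substack{w\in\neighborhood:\\ x-w\ \text{has fired}}}\!\!\distribution(w)\;-\;\theta\cdot(\text{prior firings of }x).\]
Instability of $x$ at step $m$ means this quantity is at least $\theta$. As $\sum_{w\in\neighborhood}\distribution(w)=\theta$, the in-neighbour sum is at most $\theta$; together with $c(x)\le\theta-1$ this forces the inequality to be \emph{tight} in the two situations I care about. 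Applied to $p=z_k$, which has already fired once and hence carries an extra $-\theta$, tightness forces $c(p)=\theta-1$ and $\indic_\set{0^d}(p)=1$, so $p=0^d$, and moreover \emph{every} in-neighbour $0^d-w$ must have fired before step $k$. Applied to an empty cell $q\neq 0^d$ at its first firing, tightness forces every in-neighbour $q-w$ to have fired strictly earlier.

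These two facts drive an infinite descent. By completeness $\spanv(\neighborhood)=\Z^d$, the neighbourhood contains a lexicographically positive $w^\star$ (otherwise its positive span would omit every lexicographically positive vector). The cells $0^d,-w^\star,-2w^\star,\dots$ are pairwise distinct and, apart from the first, lexicographically below $0^d$, hence empty. The first fact gives that $-w^\star$ fires before $0^d$ re-fires; the second fact, applied to $q=-i\,w^\star$, gives that $-(i{+}1)w^\star$ fires strictly before $-i\,w^\star$. Since all these firings lie in the distinct prefix $z_1,\dots,z_{k-1}$, the identity keeps applying, and one obtains a strictly decreasing infinite sequence of firing times, contradicting the finiteness of the (terminating) avalanche. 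Hence no cell fires twice, that is $\odometer(c+\indic_\set{0^d},c')(x)\in\set{0,1}$ for all $x$.

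I expect the main obstacle to be making the charge identity fully rigorous: justifying that across the distinct prefix $z_1,\dots,z_{k-1}$ each in-neighbour contributes at most once, correctly isolating the self-firing term, and then using the threshold identity $\sum_{w}\distribution(w)=\theta$ to turn ``unstable'' into the sharp equality that pins $p$ to $0^d$ and forces all its in-neighbours to have fired. Extracting a lexicographically positive neighbour from $\spanv(\neighborhood)=\Z^d$ is a small but essential ingredient, as is the stability of $c$, without which a cell could genuinely topple more than once.
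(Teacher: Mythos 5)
Your proof is correct and follows essentially the same route as the paper's: isolate the chronologically first repeated toppling, use the accounting $\sum_{w\in\neighborhood}\distribution(w)=\theta$ together with the (implicit) stability of $c$ to force the twice-toppling cell to be $0^d$ with all its in-neighbours fired, then use $\spanv(\neighborhood)=\Z^d$ to launch an infinite regress of forced firings among empty cells outside the hypercube, contradicting finite convergence. Your sequential framing via the canonical avalanche process and the explicit lexicographically positive direction $w^\star$ merely make rigorous the paper's parallel-time argument and its ``$z'$, $z''$, \emph{etc.}'' chain of empty in-neighbours, and you rightly flag that stability of $c$ (as in $\FSPRED$ instances) is needed even though the statement omits it.
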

\begin{proof}
  Let $z\in\Z^d$ be one of the chronologically first cells to topple twice, and $t_1$ and $t_2$
  be the times of the toppling event. Cell $z$ needs all its in-neighbors 
  to topple between
  times $t_1$ (included) and $t_2$ (excluded). If $z \neq 0^d$, then at
  least one in-neighbor $z'$ of cell $z$ is fired before it. This is a contradiction
  since $z'$ must topple for a second time before $t_2$ and $z$ was supposed 
  to be the first cell with that property. If $z = 0^d$, then
  we use the fact that $c$ is in an elementary hypercube and the grain addition
  is done at the origin: since the neighborhood spans the whole lattice
  (Equation \ref{eq:span}) there is an in-neighbor $z'$ of $z$ with $c(z')=0$,
  which cannot topple unless all its in-neighbors topple before it, and $z'$
  has an in-neighbor $z''$ with $c(z'')=0$ which in its
  turn cannot topple unless all its in-neighbors topple before it, \etc
  This leads to an infinite chain of consequences contradicting the fact that
  the dynamics converges to a stable configuration in a finite time (even in polynomial time, see Theorem~\ref{theorem:poly}).
\end{proof}

From the proof of Proposition~\ref{prop:fsmono}, one can also notice that,
for \SPRED, multiple topplings always originate
from cells starting in an unstable state (the sequential statement
is stronger).

\begin{proposition}
  \label{prop:smost}
  Given an instance $(c,x,y)$ of \SPRED, cell $y$ which receives a grain is
  always the most toppled cell throughout the evolution from
  $c+\indic_\set{y}$. Formally, for all $c'$ such that
  $c+\indic_\set{y} \sdonnes c'$ we have
  $\forall z \in \Z^d : \odometerseq(c+\indic_\set{y},c')(y)
  \geq \odometerseq(c+\indic_\set{y},c')(z)$.
\end{proposition}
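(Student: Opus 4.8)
The plan is to work entirely with the sequential odometer. Fix any $c'$ with $c+\indic_\set{y}\sdonnes c'$; the chain realizing this relation is a finite sequence of single-cell topplings, so $u(z):=\odometerseq(c+\indic_\set{y},c')(z)$ is nonzero for only finitely many $z$, and each toppling happens at a well-defined discrete time. First I would record the grain-conservation bookkeeping at one cell: if $z\neq y$ topples for the $k$-th time at time $t$, then just before this event its content equals $c(z)+\sum_{v\in\neighborhood}\distribution(v)\,b_v-(k-1)\theta$, where $b_v$ counts the topplings of the in-neighbor $z-v$ strictly before $t$, and this content must be at least $\theta$ for $z$ to be unstable. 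Since $c$ is stable and $z\neq y$ we have $c(z)\leq\theta-1$, so this forces $\sum_{v\in\neighborhood}\distribution(v)\,b_v\geq k\theta-c(z)>(k-1)\theta$.

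The key lemma I would extract from this is purely arithmetic: because $\sum_{v\in\neighborhood}\distribution(v)=\theta$, if every $b_v$ were at most $k-1$ the sum above would be at most $(k-1)\theta$, a contradiction; hence some in-neighbor $z-v$ has already toppled at least $k$ times strictly before $t$. In words, for any cell other than $y$, toppling for the $k$-th time requires one of its in-neighbors to have already reached $k$ topplings, strictly earlier in time. This causality statement is the engine of the argument.

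To conclude, set $M=\max_z u(z)$; if $M=0$ the statement is trivial, so I would assume $M\geq 1$ and let $S=\{z:u(z)=M\}$, a finite nonempty set. Suppose for contradiction that $y\notin S$, i.e. $u(y)<M$. Among all cells of $S$ I would pick $z^\star$ whose $M$-th (final) toppling occurs earliest, say at time $t^\star$. Since $u(z^\star)=M>u(y)$ we have $z^\star\neq y$, so the lemma applies with $k=M$ and produces an in-neighbor $w$ of $z^\star$ toppling at least $M$ times strictly before $t^\star$; as $u(w)\leq M$ this forces $w\in S$ with its $M$-th toppling strictly preceding $t^\star$, contradicting the minimality in the choice of $z^\star$. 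Hence $y\in S$, giving $u(y)=M\geq u(z)$ for every $z$.

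The hard part is exactly the step that this extremal-time selection resolves. A naive ``follow an in-neighbor backwards'' descent need not terminate, because the in-neighbor relation on $\Z^d$ may contain cycles (already $+1$ and $-1$ in dimension one sum to zero), so one cannot argue by descent on positions alone. Ordering the final topplings of the maximal cells by time is what turns the causal chain into a strictly time-decreasing one, and finiteness of the number of topplings then closes the argument. A minor point I would state explicitly is that the whole count is carried out under the sequential policy, where each toppling is a single-cell event with a definite time, so that ``strictly before $t$'' is unambiguous and the content-before-toppling computation is valid at every step.
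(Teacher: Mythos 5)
Your proof is correct and follows essentially the approach the paper intends: the paper states Proposition~\ref{prop:smost} without an explicit proof, deferring to the technique of Proposition~\ref{prop:fsmono} (grain-counting at a chronologically extremal toppling event, where stability of $c$ at every cell other than $y$ forces $\sum_{v\in\neighborhood}\distribution(v)b_v>(k-1)\theta$ and hence some in-neighbor with count at least $k$), and your counting lemma plus the earliest-final-toppling selection among maximal cells is exactly that technique generalized from $k=1$ to arbitrary multiplicities. Your closing remark correctly identifies why the descent must be ordered by time rather than by position, since the in-neighbor relation on $\Z^d$ is not acyclic.
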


\subsection{Complexity classes}
\label{ss:complexity}

This section quickly recalls the main definitions and results in complexity theory
that will be used in the sequel. For more details, the reader is referred 
to~\cite{limits,jaja92,sipser2012}.
\medskip

$\Poly$ is the class of decision problems solvable in polynomial time by a
deterministic Turing machine, or equivalently in polynomial time by a
random-access stored-program machine (RASP, a kind of RAM, \ie a
sequential machine with constant time memory access). For $i\in\N$, $\NC^i$ 
is the class  of decision problems solvable by a uniform family of Boolean circuits, with
polynomial size, depth $\O(\log^i(n))$, and fan-in 2, or equivalently in time $\O(\log^i(n))$ on a
parallel random-access machine (PRAM) using $\O(n^i)$ processors (it is not
important to consider how the PRAM handles simultaneous access to its shared
memory). For $i\in\N$, $\AC^i$ is the class of decision problems solvable by a non-uniform
family of Boolean circuits, with polynomial size, depth $\O(\log^i(n))$, and
unbounded fan-in. $\NC=\cup_{i \in \N} \NC^i$ and $\AC=\cup_{i \in \N} \AC^i$.
Some hardness results will also employ $\TC^0$, the class of decision problems
solvable
by polynomial-size, constant-depth circuits with unbounded fan-in, which can
use \textit{and}, \textit{or}, and \textit{not} gates (as in $AC^0$) as well as
threshold gates (a threshold gate returns 1 if at least half of its inputs are 1, and 0
otherwise).
To complete the picture, let us define the space complexity class $\L$ which consists in
decision problems solvable in logarithmic space on a deterministic Turing
machine, and $\NL$ its non-deterministic version. Classical relations among the above
classes can be resumed as follows (see~\cite{limits} for details):
\[
\NC^0 \subsetneq \AC^0 \subsetneq \TC^0 \subseteq \NC^1 \subseteq \L \subseteq \NL \subseteq \AC^1 \subseteq
\dots \NC^i \subseteq \AC^i \subseteq \NC^{i+1} \dots \subseteq \Poly.
\]
Intuitively, problems in $\NC$ are thought as \textit{efficiently computable in
parallel}, whereas $\Poly$-complete problems (under $\NC$ reductions or
below) are \textit{inherently sequential}. This distinction is interesting also
in the context of sandpiles: when can we efficiently parallelize the prediction?

In the PRAM model, processors can write the
output of the computation on their shared memory, and in circuit models, for
each input size there is a fixed number of nodes to encode the output of
the computation.  We denote $A \leq_C B$ when there is a many-one reduction
from $A$ to $B$ computable in $C$.
Remark that reductions in $\NC^0$ 
(constant depth and constant fan-in)
are the most restrictive we may consider:
each bit of output may depend only on a constant number of bits of input (for
example it cannot depend on the input size).
From a
decision problem point of view, the answer to a problem in $\NC^0$ depends only
on a constant part of its input. Computing the parity and majority of $n$ bits
is not in $\NC^0$, nor in $\AC^0$ as proved in~\cite{Furst1984}.

To give a lower bound on the complexity of solving a problem in parallel, a
$\TC^0$-hardness (under $\AC^0$ reduction) result means that the dynamics is
sufficiently complex to
perform non-trivial computation, such as the parity or majority
of $n$ bits (indeed, $\TC^0$ is the closure of {\bf Majority}\footnote{{\bf Majority} is the problem of deciding, given a word of $n$ bits, if it contains a majority of ones or not.} under constant depth
reductions). $\TC^0$-hard problems are not in $\AC^{1-\epsilon}$ for any
constant $\epsilon > 0$ \cite{Hastad87}.

In this last part of the section we are going to recall
some notable open questions in complexity theory related to the classes seen so far
and try to connect them with our prediction problems.

\begin{open}
  $\NC \neq \Poly$? (It is not even known whether $\NC \neq \NP$ or
  $\NC = \NP$.)
\end{open}

\begin{open}
  Are $\NC^i$ and $\AC^i$ proper hierarchies of classes or do they collapse at some level $i\in\N$?
\end{open}

The \textbf{Circuit value
problem} (\CVP) is the canonical $\Poly$-complete problem (under $\AC^0$ reductions): predict the output of the
computation of a given a circuit with identified input gates and one output gate. It remains $\Poly$-complete
when restricted to \textbf{monotone} gates (\MCVP), when restricted
to \textbf{planar} circuits ($\PCVP$), but not both: \MPCVP is in
$\LOGCFL \subseteq \AC^1$ \cite{cook89,yang1991}.

Since the early studies of Banks in \cite{banks-phd}, the reduction from \MCVP is
the most widespread method (if not the only one) to prove the $\Poly$-completeness
of prediction problems in discrete dynamical systems (in particular for sandpiles). 
This reduction technique is often referred to as \textit{Banks' approach} (see
Section~\ref{s:pcomplete} 
for applications of Banks' approach to sandpiles).

There are obvious reductions among the decision versions of the prediction
problems, giving a hierarchy of difficulties.

\begin{proposition}
  \label{prop:hierarchy}
  $\FSPRED \red{\AC^0} \SPRED \red{\AC^0} \PRED$.
\end{proposition}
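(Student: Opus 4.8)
The plan is to establish the two reductions by constructing, for each instance of the easier problem, an equivalent instance of the harder one, where the reduction map is computable in $\AC^0$. Since $\AC^0$ reductions only require that each output bit depends on a constant-depth, unbounded-fan-in circuit of the input bits, the key is to make both reductions essentially syntactic, involving only trivial reshaping of the input data rather than any computation that depends on the dynamics.

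For the reduction $\FSPRED \red{\AC^0} \SPRED$, I would observe that $\FSPRED$ is literally the special case of $\SPRED$ in which the grain-addition cell $y$ is constrained to be the lexicographically minimal cell of the configuration $c$. Thus, given an instance $(c,x)$ of $\FSPRED$, the reduction outputs the instance $(c,x,y_{\min})$ of $\SPRED$, where $y_{\min}$ is the lexicographically minimal nonempty cell of $c$. The only nontrivial point is that computing $y_{\min}$ requires locating the minimal occupied cell, which is a minimization over coordinates; however, since configurations are supplied on an elementary hypercube of side $n$ with cells enumerated in a fixed order, this amounts to finding the first nonempty position in a fixed scan, which can be realized in $\AC^0$ (indeed, each coordinate bit of $y_{\min}$ is a symmetric-type function of the occupancy bits expressible in constant depth). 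The correctness is immediate from the definitions: the two problems ask exactly the same question once $y$ is fixed to $y_{\min}$.

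For the reduction $\SPRED \red{\AC^0} \PRED$, the idea is that $\SPRED$ asks whether adding one grain at $y$ to a stable configuration $c$ makes $x$ unstable, which is precisely the $\PRED$ question asked of the configuration $c + \indic_{\set{y}}$ at cell $x$. Hence, given an instance $(c,x,y)$ of $\SPRED$, the reduction outputs the $\PRED$ instance $(c + \indic_{\set{y}}, x)$. The map $c \mapsto c + \indic_{\set{y}}$ modifies a single cell's grain count by one, which is a purely local, constant-depth operation on the input encoding, hence in $\AC^0$. For correctness I would note that, because $c$ is stable, the only cell that can be unstable in $c + \indic_{\set{y}}$ is $y$ (and only if $c(y) = \theta - 1$); whether or not $y$ is immediately unstable, the eventual toppling behaviour of $x$ under the dynamics from $c + \indic_{\set{y}}$ is by definition the answer to the $\SPRED$ instance. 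A minor care point: the $\PRED$ question asks whether $x$ \emph{eventually} becomes unstable, and one must confirm this matches the avalanche semantics of $\SPRED$; this is direct since both refer to the parallel dynamics $F$ and its stabilization.

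The main obstacle, such as it is, lies entirely in the first reduction: verifying that extracting the lexicographically minimal nonempty cell $y_{\min}$ stays within $\AC^0$. The second reduction is essentially trivial because it only perturbs one cell. I expect the argument for $y_{\min}$ to go through by writing each output coordinate of $y_{\min}$ as a constant-depth formula over the occupancy predicates of the $n^d$ cells, relying on the fixed elementary-hypercube encoding so that positions and their orderings are hard-wired into the circuit family rather than computed at runtime; this avoids any dependence on iterated comparisons of unbounded depth.
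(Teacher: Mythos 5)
Your proof is correct and fleshes out exactly what the paper has in mind: the paper states Proposition~\ref{prop:hierarchy} without proof, calling these ``obvious reductions,'' and your maps $(c,x)\mapsto(c,x,y_{\min})$ and $(c,x,y)\mapsto(c+\indic_{\set{y}},x)$ are the intended ones. Your $\AC^0$ analysis is sound as well (leading-nonempty-cell detection in a fixed scan order is standard constant-depth, even though calling it ``symmetric-type'' is a slight misnomer), and your care point about the semantics of ``eventually unstable'' matching between \SPRED and \PRED is the only nontrivial check.
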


The functional version of the prediction problem, \CPRED, seems harder
than just answering a yes/no question about one cell. It 
relates a function problem to a decision problem. We propose a clear
statement.

\begin{conjecture}
  \label{conj:func}
  $\PRED \in {\NC^1}^\CPRED$. In other words, the decision problem \PRED can be
  solved in $\NC^1$ with a \CPRED oracle (which is a function problem).
\end{conjecture}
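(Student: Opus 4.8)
The plan is to show that \PRED reduces to \CPRED with only an $\NC^1$ overhead. The key observation is that \CPRED returns the \emph{entire} stable configuration reached from a given finite input, whereas \PRED asks the yes/no question of whether a designated cell $x$ ever becomes unstable during the evolution. These differ in two respects: \CPRED reports the final (stable) content of every cell, while \PRED asks about a \emph{transient} event (a toppling) at a single cell. The main conceptual step is therefore to express ``cell $x$ topples at some point'' in terms of information recoverable from final stable configurations, and the natural tool for this is the odometer function $\odometer$, which by Proposition~\ref{prop:abelian} is well-defined independently of the update policy.

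The approach I would take rests on the identity, valid for number-conserving sandpiles, relating the final configuration to the initial one through the odometer. Concretely, if $c \sdonne c^*$ with $c^*$ stable, then for every cell $z$,
\[
  c^*(z) = c(z) - \theta\,\odometer(c,c^*)(z) + \sum_{y\in\neighborhood}\distribution(y)\,\odometer(c,c^*)(z+y),
\]
which is simply the accumulated version of Equation~\eqref{eq:sandpile}. Cell $x$ becomes unstable during the evolution precisely when $\odometer(c,c^*)(x)\geq 1$, so \PRED is equivalent to deciding whether the odometer is nonzero at $x$. The plan is to recover this single odometer value from a small number of \CPRED queries, using the number-conservation / discrete-Laplacian structure above as a system of linear relations that the oracle answers help solve.

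First I would invoke the oracle on the input configuration $c$ to obtain $c^*=\CPRED(c)$. Then I would make one or a few additional oracle calls on carefully perturbed inputs (for instance $c$ with a controlled number of grains removed or added near $x$, staying within the $2\theta$ bound of Proposition~\ref{prop:theta}) so that the differences in the returned stable configurations, plugged into the discrete-Laplacian relation above, isolate $\odometer(c,c^*)(x)$. Because each oracle answer is a configuration of size $\O(n^d)$ and the post-processing amounts to a bounded number of additions, comparisons, and a threshold test ``is the recovered value $\geq 1$?'', the surrounding computation is easily within $\NC^1$ (indeed iterated addition and comparison of polynomially many bounded integers lie in $\TC^0\subseteq\NC^1$). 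Finally I would verify that the reduction respects the input-size conventions of Section~\ref{s:def}, so that the oracle is always queried on legitimate instances.

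The hard part will be arranging the perturbations so that a \emph{constant} (or at least $\NC^1$-bounded) number of oracle calls suffices to extract the odometer value at $x$ from final-configuration data alone. The difficulty is that $c^*$ records net grain movement but not the raw toppling count, and inverting the discrete Laplacian to recover $\odometer$ is a global operation; one must exploit number conservation and the local structure of $\neighborhood$ to localize this inversion to the neighborhood of $x$, or alternatively to show that the relevant comparison can be read off directly from $c^*(x)$ together with the initial datum $c(x)$ without a full global solve. Making this localization rigorous — ensuring that finitely many oracle queries plus $\NC^1$ glue logic determine whether $\odometer(c,c^*)(x)\geq 1$ — is the crux of the argument.
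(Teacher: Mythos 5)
You cannot be checked against a paper proof here, because the paper has none: the statement is Conjecture~\ref{conj:func}, explicitly left open (the subsequent open question even asks whether the hierarchy it sits in is proper). So the only question is whether your argument settles the conjecture, and it does not: your closing paragraph concedes exactly the point at issue. What precedes it is correct but routine. The accumulated identity
\[
  c^*(z) \;=\; c(z) \;-\; \theta\,\odometer(c,c^*)(z) \;+\; \sum_{y\in\neighborhood}\distribution(y)\,\odometer(c,c^*)(z+y)
\]
is indeed the right bookkeeping (and well defined independently of update policy by Proposition~\ref{prop:abelian}), the equivalence of \PRED with ``$\odometer(c,c^*)(x)\geq 1$'' is right when $c(x)<\theta$, and iterated addition/comparison is in $\TC^0\subseteq\NC^1$. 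But the one nontrivial step --- extracting a single odometer value from stable-configuration data within $\NC^1$ --- is precisely what you label ``the crux'' and leave unresolved, so the proposal is a restatement of the conjecture's difficulty, not a proof.

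Moreover, both concrete routes you sketch have identifiable obstructions. The local read-off fails outright: the pair $(c(x),c^*(x))$ does not determine whether $x$ toppled --- already in the one-dimensional von Neumann model with $\theta=2$, a cell can gain a grain without ever toppling, or topple and end with its initial content --- so no variant of ``compare $c(x)$ with $c^*(x)$'' can work. The perturbation route fails because stabilization is not linear: by the Abelian property, stabilizing $c+\indic_{\set{y}}$ amounts to stabilizing the avalanche started from the already-stabilized $c^*$, and odometers \emph{add} along this decomposition; hence differences of oracle answers encode odometers of avalanches, which are exactly as inaccessible as the quantity you are after, and cannot simply be plugged into the discrete-Laplacian relation. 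What your identity \emph{does} give, which you could have made precise, is this: the homogeneous system has no nonzero finitely supported solution (a maximum principle at a cell where the odometer is maximal, using completeness of the neighborhood, Equation~(\ref{eq:span})), so $\odometer(c,c^*)$ is the \emph{unique} finitely supported solution of a polynomial-size integer linear system determined by $c$ and a single oracle answer $c^*$ (supported in the region of Lemma~\ref{lemma:configbound}). But solving such systems is only known to be in $\NC^2$, so this yields a containment of \PRED in roughly ${\NC^2}^{\CPRED}$, not the conjectured ${\NC^1}^{\CPRED}$; bridging the gap between this global linear solve and $\NC^1$ is the open content of the conjecture.
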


In the opposite direction (relating the complexity of a harder problem to an easier
one), Proposition~\ref{prop:smost} hints at a decomposition of the prediction of
an arbitrary sandpile into a succession of avalanches. However, it is not clear
if the hierarchy of problems is proper or if some decision problem are
equivalent in terms of computational difficulty. 
To our knowledge there is no example of a sandpile model for
which the computational complexity of any two of these problems
would be different.

\begin{open}
  Is the hierarchy given in Proposition~\ref{prop:hierarchy} and Conjecture~\ref{conj:func}
  proper? In other terms, are there sandpile models such that \CPRED (resp.
  \PRED, \SPRED) is strictly harder than \PRED (resp. \SPRED, \FSPRED)?
\end{open}

\section{All prediction problems are in $\Poly$}
\label{s:poly}

The starting point of complexity studies in sandpiles is a paper of Moore and
Nilsson in $1999$. It gives the global picture~\cite{1999-Moore-complexitySandpiles}
(for von Neumann sandpile model and $\CPRED$): 
sandpile prediction is in $\NC$ in dimension one, and it is $\Poly$-complete from 
dimension three and above. The two-dimensional case is somewhat surprisingly open. 
Many results appeared since ~\cite{1999-Moore-complexitySandpiles} made the picture
more precise. In this section we prove that prediction problems for all sandpile models are in
$\Poly$ regardless of the dimension, because the sandpile dynamics runs for a
polynomial number of steps before it stabilizes.
\smallskip

One easily gets the intuition 
that from any finite configuration, the dynamics converges to a stable
configuration because sand grains spread all over the lattice (Equation
\ref{eq:span}). In~\cite{1988-Tardos-PolyBoundCFG}, Tardos proved a polynomial
bound on the convergence time when the graph supporting the dynamics is finite
and undirected (the original bound is $2vek$ for a graph consisting of $v$ vertices, $e$
edges, and having diameter $k$, when the dynamics converges to a stable
configuration). The proof idea generalizes, not so trivially, using a series
of lemma as follows.

\begin{lemma}
  \label{lemma:bij}
  For any finite non-empty set $X \subset\Z^d$ of cells, there exists a bijection
  $\lambda\colon X_\text{in}\to X_\text{out}$ with
  \begin{eqnarray*}
    X_\text{in}&=&\set{ (y,x) \mid y \notin X, x \in X \text{ and } y \in
      x-\neighborhood}\text{ (arcs from $^c\!X$ to $X$),}\\
    X_\text{out}&=&\set{ (x,y) \mid x \in X, y \notin X \text{ and } y \in
      x+\neighborhood}\text{ (arcs from $X$ to $^c\!X$),}
  \end{eqnarray*}
  such that $\lambda((y,x))=(x',y')$ implies $x+x'=y+y'$. 
\end{lemma}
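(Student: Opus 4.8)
The plan is to decompose both arc sets according to the single neighbor vector that each arc uses, and to build $\lambda$ vector by vector. First I would record the decisive reformulation of the target condition. An element of $X_\text{in}$ is determined by a pair $(x,v)$ with $x\in X$, $v\in\neighborhood$ and $x-v\notin X$, via $(y,x)=(x-v,x)$; likewise an element of $X_\text{out}$ is determined by $(x',w)$ with $x'\in X$, $w\in\neighborhood$ and $x'+w\notin X$, via $(x',y')=(x',x'+w)$. Substituting $y=x-v$ and $y'=x'+w$ into $x+x'=y+y'$ gives $x+x'=(x-v)+(x'+w)$, i.e. $v=w$. Hence the constraint $x+x'=y+y'$ is exactly the requirement that $\lambda$ send an in-arc using vector $v$ to an out-arc using the \emph{same} vector $v$. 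It therefore suffices to produce, for each fixed $v\in\neighborhood$, a bijection between the in-arcs and the out-arcs that use $v$.

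For fixed $v\in\neighborhood$, the in-arcs using $v$ are in bijection with $A_v:=\set{x\in X \mid x-v\notin X}=X\setminus(X+v)$, and the out-arcs using $v$ are in bijection with $B_v:=\set{x\in X \mid x+v\notin X}=X\setminus(X-v)$, where $X\pm v=\set{x\pm v \mid x\in X}$. So the whole lemma reduces to the cardinality identity $|A_v|=|B_v|$. This I would prove by noting that translation by $v$ preserves cardinalities of finite sets, so $|X\cap(X+v)|=|X\cap(X-v)|$ (translate both sides by $-v$); subtracting from $|X|$ yields $|A_v|=|X|-|X\cap(X+v)|=|X|-|X\cap(X-v)|=|B_v|$.

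Finally I would assemble $\lambda$ by choosing, for each $v\in\neighborhood$, an arbitrary bijection $A_v\to B_v$ (which exists by the equality of cardinalities) and taking the union over all $v$; since every arc uses a unique vector, these partial bijections have disjoint domains and ranges and glue into a single bijection $\lambda\colon X_\text{in}\to X_\text{out}$ satisfying the required condition. The only real content is the equality $|A_v|=|B_v|$, and even that is mild. A more constructive alternative, if an explicit $\lambda$ is wanted, is to slice $X$ along each line $\set{x_0+kv \mid k\in\Z}$ (these are genuine lines since $v\neq 0^d$): on such a line $X$ is a finite disjoint union of maximal runs of consecutive points, each run contributing exactly one ``entry'' point (to $A_v$) and one ``exit'' point (to $B_v$), so pairing each run's entry with its exit gives the bijection directly.
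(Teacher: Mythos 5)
Your proof is correct, but your main argument takes a genuinely different route from the paper's. You first observe that the constraint $x+x'=y+y'$ is exactly the requirement that $\lambda$ match in-arcs and out-arcs using the same neighborhood vector $v\in\neighborhood$, then reduce the lemma, for each fixed $v$, to the cardinality identity $|X\setminus(X+v)|=|X\setminus(X-v)|$, which follows from translation-invariance of cardinality on the finite set $X$; an arbitrary bijection per vector is then glued over the partition of the arc sets by vector. The paper instead constructs $\lambda$ explicitly: it follows the line through $(y,x)$ in direction $x-y$ and maps the in-arc to the first arc $(x+k^*(x-y),\,x+(k^*+1)(x-y))$ exiting $X$, proving bijectivity by noting that along each such line the numbers of entries into and exits from $X$ coincide (equivalence classes of arcs on a common line). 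Amusingly, your closing ``constructive alternative'' --- pairing the entry and exit of each maximal run on every line $\set{x_0+kv \mid k\in\Z}$ --- is precisely the paper's map, since $k^*$ locates the exit of the run containing $x$. What each approach buys: yours is shorter and isolates the only real content (a counting identity), and since the downstream use in Lemma~\ref{lemma:topplingsbound} only needs \emph{some} vector-preserving bijection, the non-constructive existence suffices; the paper's version yields a canonical, explicitly computable $\lambda$ that moreover keeps each in-arc and its image on a common line, a stronger property than the lemma states, though unused later.
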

\begin{proof}
  For all $(y,x) \in X_\text{in}$, the bijection $\lambda$ is defined as
  $$\lambda((y,x))=(x+k^*(x-y),x+(k^*+1)(x-y))$$
  with $k^*=\min\set{ k\in\N\mid (x+k(x-y),x+(k+1)(x-y)) \in X \times \,^c\!X}$. 
  First of all, remark that $\lambda$ is well defined. Indeed,
  for any edge $(y,x)$ in $X_\text{in}$, since we are considering a lattice and since $X$ is finite, 
  there must exist  a path containing $(y,x)$ and passing through some edge $(x',y')$ belonging
  to $X_\text{out}$. Then, $k$ is the number of edges between $(y,x)$ and $(x',y')$.
  Finally, $\lambda$ is 
  bijective since
  \[
      \mathcal R=\set{((y,x),(x+k(x-y),x+(k+1)(x-y))) \mid k\in\Z}
  \]
  is an equivalence
  relation, and the equivalence classes verify
  $|[(y,x)]_{\mathcal R} \cap X_\text{in}|$ equals
  $|[(y,x)]_{\mathcal R} \cap X_\text{out}|$ and is finite,
  for any $y,x$ (this is the number of times
  the associated line enters
  and exits $X$).
\end{proof}

The following lemma is stronger when expressed in the sequential context.

\begin{lemma}
  \label{lemma:topplingsbound}
  For any pair of finite configurations $c, c'\in\cfgs$ such that $c\sdonnes c'$, and pair of cells
  $x,y\in\Z^d$ such that $y \in x+\neighborhood$, it holds
  \[
     |\odometerseq(c,c')(x) - \odometerseq(c,c')(y)| \leq \sum_{z \in \Z^d}c(z).
  \]
\end{lemma}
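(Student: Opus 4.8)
The plan is to reduce the two-cell statement to a single global mass-balance identity across the boundary of a well-chosen set, and then to exploit Lemma~\ref{lemma:bij}. Write $u=\odometerseq(c,c')$ and let $N=\sum_{z\in\Z^d}c(z)$ be the total mass, which is finite (since $c$ is finite) and conserved by Proposition~\ref{prop:mass}. Because $c'$ is reached from $c$ in finitely many sequential steps, $u$ has finite support, so every level set $X_m=\set{z\in\Z^d\mid u(z)\geq m}$ with $m\geq 1$ is finite. For any finite $X\subset\Z^d$, internal topplings cancel and only grains crossing the boundary change the mass of $X$, which gives the balance
\[
  \sum_{z\in X}\big(c'(z)-c(z)\big)
  =\sum_{(q,p)\in X_\text{in}}\distribution(p-q)\,u(q)
  -\sum_{(p,q)\in X_\text{out}}\distribution(q-p)\,u(p).
\]
Since $\sum_{z\in X}c(z)$ and $\sum_{z\in X}c'(z)$ both lie in $[0,N]$, the left-hand side lies in $[-N,N]$; I will only need that it is at least $-N$.

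The key step is to collapse the right-hand side using the bijection $\lambda$ of Lemma~\ref{lemma:bij}. Matching each entering arc with its image in $X_\text{out}$, the defining property $x+x'=y+y'$ says that the two arcs of a pair point in the same lattice direction $v\in\neighborhood$, hence carry equal weight $\distribution(v)$. The balance therefore rewrites as a sum over the line segments crossing $X$,
\[
  \sum_{z\in X}\big(c'(z)-c(z)\big)
  =\sum_{\text{segments}}\distribution(v)\,\big(u(q)-u(p)\big),
\]
where, on the segment in direction $v$, the cell $q\notin X$ is the outer neighbor just before the run enters $X$ and $p\in X$ is the last cell of the run before it leaves $X$. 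For $X=X_m$ every summand is non-positive, because $u(q)<m\leq u(p)$, and the whole sum is at least $-N$; consequently each individual term is at least $-N$.

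First I would bound $u(x)-u(y)$, the case $u(x)\leq u(y)$ being trivial. Taking $X=X_{u(y)+1}$ puts $x\in X$ and $y\notin X$, so $(x,y)$ is an exiting arc and $x$ is the inner cell $p$ of its own segment, while the matched outer cell $a$ satisfies $u(a)\leq u(y)$ since $a\notin X$. The single-term bound $\distribution(v)\,(u(a)-u(x))\geq -N$ then gives $u(x)-u(y)\leq u(x)-u(a)\leq N$. Symmetrically, to bound $u(y)-u(x)$ (nontrivial only when $u(y)>u(x)$) I would take $X=X_{u(y)}$, so that $(x,y)$ is an entering arc with $y$ as entry cell; its matched exit cell $b$ lies in $X$, whence $u(b)\geq u(y)$, while the outer cell is $x$ with $u(x)<u(y)$. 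The same bound $\distribution(v)\,(u(x)-u(b))\geq -N$ yields $u(y)-u(x)\leq u(b)-u(x)\leq N$. Combining the two directions gives $|u(x)-u(y)|\leq N$.

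The main obstacle is the asymmetry of the neighborhood: the arc only runs $x\to y$, so the two sides of the inequality cannot be obtained from one another by symmetry and must be handled through two distinct level sets. Within each, the delicate point is to isolate the contribution of the one segment carrying the arc $(x,y)$ from the collapsed sum; this succeeds precisely because, for a level set, all summands share a single sign, so one non-positive term is at least the whole sum $\geq -N$, and because the threshold $m$ can be tuned so that the uncontrolled endpoint of that segment ($a$, respectively $b$) falls on the correct side of $u(y)$. One should also record the boundary cases where a cell never fires: these keep the relevant level set finite and make the corresponding direction vacuous.
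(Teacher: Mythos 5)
Your proof is correct, and it shares its skeleton with the paper's argument: a mass-balance across the boundary of a level set of the odometer, collapsed through the bijection of Lemma~\ref{lemma:bij} so that matched arcs point in the same direction $v$ and hence carry equal weight $\distribution(v)$, followed by extraction of a single term (legitimate, as you note, because all terms in the collapsed sum share one sign). Where you genuinely diverge is in the choice of the cutting set, and your choice is sharper at exactly the two delicate points. Writing $u=\odometerseq(c,c')$, the paper assumes w.l.o.g.\ $u(x)<u(y)$ and cuts along the \emph{lower} set $X=\set{z \mid u(z)\leq u(x)}$, keyed to $u(x)$. That set is co-finite, not finite, so Lemma~\ref{lemma:bij} as stated must tacitly be applied to its complement; more importantly, the term associated with the exit arc $(x,y)$ then bounds $u(y')-u(x)$, where $y'$ is the tail of the matched \emph{entry} arc, a cell about which one only knows $u(y')>u(x)$ and not $u(y')\geq u(y)$ --- so passing from this bound to the desired $u(y)-u(x)\leq N$ is left implicit in the paper and is not immediate with that threshold. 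Your \emph{upper} level sets $X_{u(y)+1}$ and $X_{u(y)}$, keyed to $u(y)$ rather than $u(x)$, repair both issues simultaneously: they are finite whenever the threshold is at least $1$ (your boundary remark about cells that never fire is exactly what makes the degenerate direction vacuous), so Lemma~\ref{lemma:bij} applies verbatim, and they place the uncontrolled matched endpoint ($a$ with $u(a)\leq u(y)$, respectively $b$ with $u(b)\geq u(y)$) on the correct side of $u(y)$, so that the single-term bound delivers $|u(x)-u(y)|\leq N$ directly, using only $\distribution(v)\geq 1$. The price is an explicit two-case analysis where the paper appeals to symmetry; what you buy is an argument in which every step closes as stated.
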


\begin{proof}
  Suppose $\odometerseq(c,c')(x) < \odometerseq(c,c')(y)$ (the other case is
  symmetric). Let
  $$X=\{ z\in\Z^d \mid \odometerseq(c,c')(z) \leq \odometerseq(c,c')(x)\}$$
  be the set of cells that toppled at most as much as $x$ and define $X_\text{in}$,
  $X_\text{out}$ and the bijection $\lambda$
  as in the proof of Lemma~\ref{lemma:bij}. We have $x \in
  X$ and $y \notin X$ therefore $(x,y) \in X_\text{out}$. Then, using
  $\lambda$, one can count the number of grains that moved in and out of $X$,
  \begin{align*}
    \sum_{z \in X} c'(z) &\geq
    \sum_{(y',x') \in X_\text{in}} \odometerseq(c,c')(y')\,\distribution(x'-y')
    - \sum_{(x',y') \in X_\text{out}} \odometerseq(c,c')(x')\,\distribution(y'-x')\\
    &= \sum_{\overset{\scriptstyle (y',x') \in X_\text{in}}
    {\lambda((y',x'))=(x'',y'')}}
    \Big[\odometerseq(c,c')(y')\,\distribution(x'-y')
    -\odometerseq(c,c')(x'')\,\distribution(y''-x'')\Big].
  \end{align*}
  Since $\lambda((y',x'))=(x'',y'')$ implies $x'+x''=y'+y''$, for each term of
  the last sum we have $\distribution(x'-y')=\distribution(y''-x'')$, and by
  definition of $X$, one finds $0 \leq \odometerseq(c,c')(y') \geq
  \odometerseq(c,c')(x'')$, therefore each term is positive. The result follows
  because in $c'$ there cannot be more grains at cells in $X$
  than total number of sand grains in $c$ ($\sdonnes$ is number-conserving).
\end{proof}

The last argument in the proof of Tardos \cite{1988-Tardos-PolyBoundCFG} is
concerned with the finiteness of
the underlying graph and, of course, it does not apply here. However, it is possible to
bound the region of the lattice that may eventually receive at least one sand
grain.

\begin{lemma}
  \label{lemma:configbound}
  For any finite configuration $c\in\cfgs$ belonging to the elementary
  hypercube of size $n^d$, if $c \sdonne
  c'$
  for some $c'\in\cfgs$
  , then $c'(x)=0$ for any $|x|_\infty> 4 \theta r n^d$, with
  $r=\max\set{ |v|_\infty \mid v \in \neighborhood}$.
\end{lemma}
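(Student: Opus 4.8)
The plan is to reduce the claim to a bound on the \emph{toppling region} and then to control how far topplings can propagate from the initial support. First I would fix the sequential viewpoint: by Proposition~\ref{prop:abelian} the parallel orbit from $c$ coincides with a sequential one reaching the same $c'$ with the same odometer, so it suffices to reason about $u:=\odometerseq(c,c')$. Since $c$ lives in the elementary hypercube of side $n$, its support satisfies $|x|_\infty\le n$, and the standing per-cell bound (at most $2\theta-1$ grains, see Proposition~\ref{prop:theta}) yields the crude mass estimate $\#(c)<2\theta n^d$, which is exactly the quantity that will surface through Lemma~\ref{lemma:topplingsbound}. A cell $z$ can hold a grain in $c'$ only if either it already carried a grain in $c$ (so $|z|_\infty\le n$) or some in-neighbour of $z$ toppled; as in-neighbours lie within $|\cdot|_\infty\le r$, it is enough to bound the set $T=\{x:u(x)\ge 1\}$ of cells that topple at least once, and then enlarge the resulting ball by $r$.

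The core step is to bound the extent of $T$. The natural device is a backward toppling chain: starting from a cell $z\in T$ farthest from the origin, its first toppling required $\theta$ grains, and when $z$ is outside the support none were present initially, so some in-neighbour $z^{(1)}$ must have toppled strictly earlier. Iterating produces a sequence $z=z^{(0)},z^{(1)},z^{(2)},\dots$ of toppling cells with strictly decreasing toppling times, consecutive cells neighbourhood-adjacent, which must terminate in the initial support. Each step moves by at most $r$ in $|\cdot|_\infty$, so $|z|_\infty\le n+r\cdot(\text{length of the chain})$. If the chain length is $\O(\#(c))$, the target bound $4\theta r n^d$ follows immediately from $\#(c)<2\theta n^d$, the additive $n$ and the final $r$-enlargement being absorbed by the slack in the constant; note the stated constant is deliberately loose, since the genuine spread of a finite pile is far smaller.

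The main obstacle is precisely the point where Tardos's original argument breaks down: on the infinite lattice one cannot invoke finiteness of the underlying graph to bound the number of topplings, and a naive chain could a priori be as long as the \emph{total} number of topplings, which is not yet under control at this stage — indeed Lemma~\ref{lemma:configbound} is a \emph{prerequisite} for the later polynomial time bound (Theorem~\ref{theorem:poly}), so we may not assume it. This is where Lemma~\ref{lemma:topplingsbound} does the real work: it forces $u$ to be $\#(c)$-Lipschitz along neighbourhood edges, and since some out-neighbour of an outermost toppling cell does not topple (its odometer is $0$), the odometer of the frontier of $T$ is at most $\#(c)$. Combining this $\#(c)$-Lipschitz control with the conservation pairing of Lemma~\ref{lemma:bij} — which matches grains entering and leaving any region along the straight lines of the lattice — one bounds how many successive layers of positive odometer can separate the frontier of $T$ from the support, hence the chain length, by $\O(\#(c))$. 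Making this layer-counting quantitative, without detouring through an a priori bound on the running time, is the delicate part of the argument.
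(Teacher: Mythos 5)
Your reduction to the sequential odometer (via Proposition~\ref{prop:abelian}), the backward toppling chain, and the final $r$-enlargement are all sound, but the proof has a genuine gap exactly where you flag it: the $\O(\#(c))$ bound on the chain length is never established, and the tools you invoke cannot establish it. Lemma~\ref{lemma:topplingsbound} makes the odometer $u$ $\#(c)$-Lipschitz along neighbourhood edges, and your frontier estimate is correct (an outermost cell of $T=\set{z \mid u(z)\geq 1}$ has, by completeness of $\neighborhood$, an out-neighbour with $u=0$, hence satisfies $u\leq\#(c)$); but Lipschitz control constrains how fast $u$ varies, not how far $T$ extends. The function $u\equiv 1$ on an arbitrarily large region is compatible with every Lipschitz constraint and with any frontier bound, and walking inward from the frontier the lemma only yields the growing upper bounds $u\leq j\,\#(c)$ at edge-distance $j$, so no contradiction with a long chain ever materializes. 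Nor can Lemma~\ref{lemma:bij} supply the missing step in the way you suggest: it is already consumed inside the proof of Lemma~\ref{lemma:topplingsbound}, and what your ``layer counting'' actually needs is a conservation (leakage) estimate absent from the proposal, namely: a cell $z\in T$ with an out-neighbour $y\notin T$ deposits $u(z)\,\distribution(y-z)\geq 1$ grains on $y$, which never topples and therefore keeps them in $c'$, so at most $\#(c)<2\theta n^d$ cells of $T$ can see the complement; only this, combined with a geometric lower bound on the number of such frontier cells in terms of the diameter of $T$, limits the spread. As written, the proposal stops one nontrivial lemma short of a proof (note also that an unspecified constant in $\O(\#(c))$ would not give the explicit $4\theta r n^d$ of the statement).

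For comparison, the paper's proof is static and uses neither odometers, nor the sequential policy, nor Lemmas~\ref{lemma:bij} and~\ref{lemma:topplingsbound}; those are combined with Lemma~\ref{lemma:configbound} only later, in Theorem~\ref{theorem:poly}, so your architecture inverts the intended division of labour (not circularly, since Lemma~\ref{lemma:topplingsbound} is independent, but it misallocates the tools). The paper works directly on the support of $c'$ with the encompassing hypercube $\mathcal R=\set{-r,\dots,r}^d$: (i) grains cannot all desert a place --- if $c(x)>0$ then some $y\in x+\mathcal R$ has $c'(y)>0$, because the last cell to topple inside $x+\mathcal R$ sends a grain inside, using Equation~\eqref{eq:span}; (ii) no grain of $c'$ is isolated --- if $c'(x)>0$ then another occupied cell lies in $x+2\mathcal R$. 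Hence one grain persists within $\set{-r,\dots,n+r}^d$, and since $\#(c)\leq(2\theta-1)n^d$, every grain of $c'$ is linked to it by a chain of occupied cells with gaps at most $2r$ and length at most the number of grains, which yields the stated bound. If you wish to salvage your route, the leakage count above is precisely the statement to prove; it essentially re-derives the paper's ``no isolated grain'' argument in odometer language.
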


\begin{proof}
  The configuration $c$ may contain at most $(2\theta-1)n^d$ sand grains since it belongs to the elementary
  hypercube of size $n^d$. Let us prove that:
  \begin{enumerate}
    \item sand grains cannot all leave one place,
    \item there is no isolated sand grain.
  \end{enumerate}
  Consider an encompassing hypercube $\mathcal R=\set{-r,\dots,r}^d$ 
  around the neighborhood $\neighborhood$. Let us first show that, 
    $$\text{if for some } x\in\Z^d, c(x)>0 \text{ then } \exists y \in x+\mathcal R \text{ such that } c'(y)>0.$$ 
  By contradiction,
  consider any of the cells that where the last to topple inside $x+\mathcal
  R$, from Equation \ref{eq:span} it must have send at least one grain inside
  $x+\mathcal R$ which thus cannot be empty. An analogous argument shows that,
  $$
    \text{if } c'(x)>0 \text{ then } \exists y \in x+2\mathcal R
    \text{ such that } c'(y)>0.
  $$
  We can therefore conclude that in $c'$ there is a grain
  within cells $\{-r,\dots,n+r\}^d$, and any other grain cannot be at distance
  greater than $(2\theta-1)n^d2r$ in the max norm.
\end{proof}

Now the previous lemmas can be exploited to fit the argumentation of 
Tardos~\cite{1988-Tardos-PolyBoundCFG} in this general framework.

\begin{theorem}
  \label{theorem:poly}
  Given any finite configuration $c\in\cfgs$ of size $n^d$, a stable configuration is
  reached within at most $(2 d \theta r n)^{\O(d^2)}$ time steps,
  a polynomial in the size of $c$.
\end{theorem}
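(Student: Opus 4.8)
The plan is to bound the total number of parallel time steps by controlling two quantities: the spatial extent of the active region (already given by Lemma~\ref{lemma:configbound}) and the maximum number of topplings at any single cell (to be extracted from Lemma~\ref{lemma:topplingsbound}). The key observation is that in each parallel step, at least one unstable cell must topple, and a cell toppling contributes $1$ to its odometer value. So if we can bound the maximum odometer value over all cells, and bound the number of cells that ever topple, the product gives a bound on the total number of toppling events, which in turn bounds the number of parallel steps (since each step performs at least one toppling). By Proposition~\ref{prop:abelian}, $\odometer = \odometerseq$, so I may freely use the sequential odometer bound from Lemma~\ref{lemma:topplingsbound}.

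**First I would** bound the number of cells that can ever topple. By Lemma~\ref{lemma:configbound}, any cell receiving a grain lies within $|x|_\infty \le 4\theta r n^d$, so the active region is contained in a hypercube of side $O(\theta r n^d)$, giving at most $(8\theta r n^d + 1)^d = (2 d \theta r n)^{O(d)}$ cells (here the exponent picks up a factor roughly $d^2$ when one expands $(n^d)^d = n^{d^2}$). **Next I would** bound the maximum odometer value at any cell. Here Lemma~\ref{lemma:topplingsbound} is the crucial tool: it says adjacent cells (in the neighborhood sense) differ in odometer value by at most $\#(c) \le (2\theta - 1)n^d$. Since the neighborhood is complete (Equation~\ref{eq:span}), any two cells in the active region are connected by a path whose length is bounded by the diameter of the active region divided by the step size, i.e.\ $O(\theta r n^d)$ edges. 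The cells outside the active region have odometer $0$, so starting from a boundary cell with odometer $0$ and walking inward, the odometer can grow by at most $\#(c)$ per edge, giving a maximum odometer value of roughly $(\text{path length}) \cdot \#(c) = (2 d \theta r n)^{O(d^2)}$.

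**Finally I would** combine these: the total number of toppling events is at most (number of active cells) $\times$ (maximum odometer per cell) $= (2 d \theta r n)^{O(d^2)}$. Since the parallel dynamics is Abelian and deterministic (Proposition~\ref{prop:abelian}), and every non-terminal parallel step topples at least one cell, the number of parallel steps is bounded by this same quantity. Absorbing all polynomial factors into the exponent yields the claimed bound $(2 d \theta r n)^{O(d^2)}$.

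**The hard part will be** the odometer-difference-to-path-length argument: Lemma~\ref{lemma:topplingsbound} only controls the odometer gap across a single neighborhood edge, so I must argue carefully that the completeness condition $\spanv(\neighborhood) = \Z^d$ guarantees a short connecting path from each active cell to a zero-odometer cell, and that the number of edges on this path is genuinely polynomial in the spatial diameter (not merely finite). The technical subtlety is that the neighborhood vectors may be long (up to $r$), so a single edge covers distance up to $r$, and I must verify that the path length in edges, times the per-edge gap $\#(c)$, produces exactly the stated polynomial degree in $d$ rather than an uncontrolled blowup. I expect the exponent $O(d^2)$ to emerge precisely from multiplying the per-dimension factor $n^d$ (the grain count) by the number of edges $O(n^d)$ and then accounting for the $d$-dimensional volume of the active region.
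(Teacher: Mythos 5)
Your proposal follows essentially the same route as the paper's own proof: you use Lemma~\ref{lemma:configbound} to bound both the diameter of the active region and the number of cells that can ever topple, repeated applications of Lemma~\ref{lemma:topplingsbound} along a chain of neighbors starting from a zero-odometer cell to bound the maximal odometer value by $(\text{chain length})\times\#(c)$, and the product of these quantities to bound the total number of topplings and hence the number of parallel steps. The paper performs exactly this multiplication (yielding $2^{3d+4}d^{d+1}\theta^{d+2}r^{d+1}n^{d^2+2d}$) and, like you, treats the conversion from spatial distance to neighbor-chain length informally, so your attempt matches it in both structure and level of rigor.
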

\begin{proof}
  According to Lemma~\ref{lemma:configbound}, two cells which have toppled cannot be at
  distance (through a chain of neighbors in the $\ell_1$-norm) greater than $8
  d \theta r n^d$. By repeated applications of Lemma~\ref{lemma:topplingsbound},
  one finds that no cell can topple more than $8 d \theta r n^d \sum_{z\in\Z^d}
  c(z)$ times. Since no more than $(8 d \theta r n^d)^d$ different cells can
  topple (Lemma~\ref{lemma:configbound} again), and the number of grains is
  upper bounded by $2\theta\,n^d$, we get precisely
  $2^{3d+4}d^{d+1}\theta^{d+2}r^{d+1}n^{d^2+2d}$ which is upper-bounded for any $d$ by
  $(2 d \theta r n)^{\O(d^2)}$.
\end{proof}

Theorem~\ref{theorem:poly} provides an upper bound on all the prediction problems
for any sandpile model.

\begin{corollary}
  $\PRED, \SPRED, \FSPRED, \CPRED \in \Poly$.
\end{corollary}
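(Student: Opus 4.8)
The plan is to derive the corollary directly from Theorem~\ref{theorem:poly}, which already establishes that the sandpile dynamics stabilizes within a polynomial number of parallel steps. The key observation is that all four prediction problems reduce to simulating the dynamics and reading off the relevant information, so a polynomial bound on the number of steps, combined with a polynomial bound on the region of the lattice that can ever be touched, immediately yields membership in $\Poly$.

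First I would fix an input of size $n^d$ lying in the elementary hypercube. By Theorem~\ref{theorem:poly}, the parallel dynamics reaches a stable configuration after at most $T=(2d\theta rn)^{\O(d^2)}$ applications of the global rule $F$, which is polynomial in the input size $n^d$ for fixed $d$. By Lemma~\ref{lemma:configbound}, every cell that can ever receive a grain lies within $|x|_\infty \le 4\theta rn^d$ of the origin, so throughout the whole computation it suffices to track a hypercube of side $\O(\theta rn^d)$, i.e. a polynomial number of cells. Moreover, by Proposition~\ref{prop:theta} the grain count at each such cell stays below $2\theta$, a constant, so each cell is stored in $\O(1)$ bits and each application of the local rule in Equation~\eqref{eq:sandpile} is computable in constant time per cell. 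Hence one step $F$ is computable in polynomial time, and iterating $F$ exactly $T$ times costs $T$ times a polynomial, which is again polynomial.

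For \CPRED, the output is precisely this final stable configuration, obtained by running the simulation to completion; since both the number of steps and the number of relevant cells are polynomial, the whole computation is in $\Poly$. For \PRED, \SPRED and \FSPRED, one runs the same simulation (after the appropriate single-grain addition $c+\indic_\set{y}$, respectively at the queried cell, at $y$, or at the lexicographically minimal cell) and simultaneously records whether the queried cell $x$ ever satisfies $c_t(x)\ge\theta$ during the evolution; this is a single comparison per step and does not increase the asymptotic cost. Equivalently, by Proposition~\ref{prop:hierarchy} it suffices to place \PRED in $\Poly$, as the other two decision problems reduce to it.

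I do not expect a genuine obstacle here, since the hard work has been delegated to the earlier lemmas and to Theorem~\ref{theorem:poly}; the corollary is a packaging statement. The only point requiring a little care is to confirm that every ingredient of a single simulation step is polynomial-time: the polynomial number of active cells (Lemma~\ref{lemma:configbound}), the constant-size content per cell (Proposition~\ref{prop:theta}), and the polynomial number of steps (Theorem~\ref{theorem:poly}) must be multiplied together and seen to remain polynomial. Since a product of polynomials is a polynomial, the conclusion $\PRED, \SPRED, \FSPRED, \CPRED \in \Poly$ follows for every sandpile model, independently of the dimension.
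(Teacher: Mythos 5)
Your proposal is correct and follows exactly the argument the paper intends: the corollary is stated as an immediate consequence of Theorem~\ref{theorem:poly} (a polynomial bound on the number of steps), combined with the polynomially bounded active region from Lemma~\ref{lemma:configbound} and the constant per-cell content from Proposition~\ref{prop:theta}, so direct simulation runs in polynomial time. The paper gives no separate proof for the corollary, and your write-up simply makes explicit the same packaging of these three ingredients.
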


\section{$\NC$ in dimension one}
\label{s:1d}

In one dimension, prediction problems on sandpile dynamics have been proven to
be efficiently computable in parallel, \ie they lie in $\NC$. As mentioned in
the introduction of Section~\ref{s:poly}, the whole story began in 1999 with a study of the
computation variant of the prediction problems, and it has been successively extended to more
restrictive variants.

\begin{theorem}[\cite{1999-Moore-complexitySandpiles}, improved in \cite{Miltersen2005}]
  For von Neumann sandpile model of radius one in dimension one (see Remark~\ref{remark:vn}),
  \CPRED is in $\NC^1$,
  and it is not in $\AC^{1-\epsilon}$ for any constant $\epsilon>0$.
\end{theorem}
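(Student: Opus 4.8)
The plan is to prove the two parts of the theorem separately: membership in $\NC^1$ (the upper bound) and the hardness result showing the problem is not in $\AC^{1-\epsilon}$ (the lower bound). For the \CPRED upper bound in one-dimensional von Neumann sandpiles of radius one, the key structural observation is that in one dimension the stable configuration reached from a finite configuration $c$ can be characterized \emph{locally and arithmetically}, rather than by simulating the (polynomially many, by Theorem~\ref{theorem:poly}) toppling steps one by one. First I would exploit the Abelian property (Proposition~\ref{prop:abelian}) so that I may compute the odometer function $\odometer(c,c')$ in any convenient order. The crucial fact specific to $d=1$ is that the final odometer value at each cell is determined by a simple closed-form expression: if $u(i) = \odometer(c,c')(i)$ denotes the number of times cell $i$ topples, then mass conservation and the local rule force $c'(i) = c(i) + u(i-1) - 2u(i) + u(i+1)$, \ie the discrete Laplacian of $u$, with the stability constraint $0 \le c'(i) < \theta = 2$. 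This turns the problem into solving a one-dimensional discrete Poisson-type equation subject to inequality constraints, whose solution can be written via prefix sums of the input.

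The heart of the $\NC^1$ argument is therefore to show that these prefix-sum/partial-sum quantities, together with the ``rounding'' imposed by the stability thresholds, can be evaluated in logarithmic depth with bounded fan-in. I would reduce the computation of $u$ (and hence $c'$) to iterated integer addition and comparison of $\O(n)$ numbers each of $\O(\log n)$ bits: the key subroutines are prefix sums (which are in $\NC^1$) and integer division/thresholding by the constant $\theta=2$. Since every cell holds fewer than $2\theta$ grains (the standing invariant) and the toppling counts are polynomially bounded (Theorem~\ref{theorem:poly}), all intermediate integers have $\O(\log n)$ bits, so each arithmetic primitive stays within $\NC^1$, and composing a constant number of such layers preserves $\NC^1$. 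The main obstacle in this direction is making the local closed form for $u$ fully rigorous, \ie proving that the discrete-Laplacian characterization together with the boundary behaviour (grains may only escape toward the two extremities, per Lemma~\ref{lemma:configbound}) uniquely determines the stable configuration and can be inverted by prefix sums; this is the genuinely one-dimension-specific step and is where the elegance — and the fragility with respect to dimension — really lives.

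For the lower bound, the plan is to show that \CPRED in this setting is hard for $\TC^0$ under $\AC^0$ reductions, from which the stated conclusion follows: by the cited result of Håstad (recalled in Section~\ref{ss:complexity}), $\TC^0$-hard problems lie outside $\AC^{1-\epsilon}$ for every constant $\epsilon>0$. Concretely I would encode a \textbf{Majority} (or iterated-addition) instance into a one-dimensional configuration so that reading off a designated cell of the stabilized configuration reveals the majority bit. The natural construction places the input bits as grain contents (or small blocks) along a line and relies on the fact that the final profile $c'$ encodes, through the prefix-sum structure identified above, the total count of grains to one side of a cell; choosing the decoded cell appropriately lets the stable configuration answer a threshold question about the inputs. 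Since $\TC^0$ is exactly the closure of \textbf{Majority} under constant-depth reductions, and the encoding/decoding are clearly $\AC^0$-computable (they are local placements and projections), this yields $\TC^0$-hardness.

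I expect the upper-bound direction to be the more delicate one to get fully correct, because the succinct arithmetic description of the stable configuration must be justified carefully: one has to argue that the naive step-by-step simulation, though polynomial by Theorem~\ref{theorem:poly}, collapses to a logarithmic-depth computation precisely because the one-dimensional dynamics admits this Laplacian/prefix-sum closed form. The lower bound, by contrast, is mostly a matter of engineering a clean gadget and checking that decoding a single output cell recovers a majority-type predicate; the reductions themselves are routine $\AC^0$ manipulations once the encoding is fixed. Matching the two bounds pins the complexity of \CPRED at exactly $\NC^1$ up to the $\AC^0$-closure of \textbf{Majority}, which is the content of the theorem.
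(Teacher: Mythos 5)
Your lower-bound half is sound and is exactly the route the paper attributes to Miltersen: a constant-depth reduction from \textbf{Majority} to the sandpile dynamics gives $\TC^0$-hardness, and H\r{a}stad's lower bound (recalled in Section~\ref{ss:complexity}) then rules out $\AC^{1-\epsilon}$ for every constant $\epsilon>0$. The upper-bound half, however, takes a different route from the cited proofs and has a genuine gap at its core. You correctly write the stabilization as a discrete Poisson identity $c'(i)=c(i)+u(i-1)-2u(i)+u(i+1)$ with $0\le c'(i)\le 1$, but this is an \emph{obstacle} problem, not a linear one: the unknowns are $u$ and $c'$ jointly, $u$ is the pointwise-minimal nonnegative solution (least action, via Proposition~\ref{prop:abelian}), and the two conserved quantities that double prefix sums give you --- total mass and the first moment $\sum_x x\,c(x)$ (which is invariant because each toppling sends one grain left and one right) --- are the two ``integration constants'' of your equation, necessary but far from sufficient to locate the empty sites of $c'$. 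Where the $0$s fall depends on how avalanches triggered by distinct excess grains merge (compare stabilizing $2\,0\,0\,2$ versus $2\,2\,0\,0$: the blocks interact differently), and this merging structure is precisely what the ``prefix sums plus thresholding by $\theta=2$ in a constant number of layers'' step silently assumes away. You flag this yourself as ``the main obstacle,'' but flagging it is not filling it: without an explicit $\NC^1$-computable formula for the minimal odometer (e.g.\ a max-over-intervals expression justified by a variational argument), the claim that the system ``can be inverted by prefix sums'' is unsupported, and it is exactly the hard part of the theorem.

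For comparison, the proofs the survey cites do not solve the Poisson system directly; they proceed by a combinatorial analysis of single-grain avalanches on a stable background: a stripe of $1$s containing a single $2$ stabilizes to an enlarged stripe of $1$s with one $0$ inside, placed so that the center of mass is unchanged. Iterating this (legitimate by the Abelian property) reduces \CPRED to arithmetic bookkeeping of stripe endpoints and moments, which Moore--Nilsson parallelize and Miltersen sharpens to $\NC^1$; the polynomial bound of Theorem~\ref{theorem:poly} and the $\O(\log n)$-bit arithmetic you invoke are indeed the right supporting ingredients there. Your least-action viewpoint could conceivably be turned into an alternative proof, but as written it replaces the one genuinely one-dimension-specific lemma by an assertion, so the $\NC^1$ membership does not follow from your argument.
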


The last part comes from a simple constant depth reduction of the \textbf{Majority} of $n$ bits
problem (given $x \in \B^n$, decide if there is a majority of one)


to sandpile dynamics, proving that the problem is $\TC^0$-hard.
These results rely on a clever technical study of predicting the dynamics of a
stripe of 1s containing a single $2$ (because $\theta=2$ in this model): a 0 appears
within the stripe of 1s which is enlarged,
such that the center of mass is unchanged. Generalizing
these results seems to be a technically challenging task, but we conjecture
that they do.

\begin{conjecture}
  \label{conj:1d}
  For any one-dimensional sandpile model, prediction is efficiently computable
  in parallel, \ie $\CPRED, \PRED, \SPRED, \FSPRED\in\NC$.
\end{conjecture}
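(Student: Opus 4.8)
The plan is to reduce all four prediction problems to the single task of computing the odometer function, and then to exploit the rigid one-dimensional geometry to compute that odometer in $\NC$.

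First I would record that, by the Abelian property (Proposition~\ref{prop:abelian}), the stabilization of any finite $c$ is governed entirely by its odometer $u=\odometer(c,c')$, where $c'$ is the unique stable configuration reached. This $u$ is the pointwise minimal function $u\colon\Z\to\N$ for which the toppled configuration
\[
  c'(x)=c(x)-\theta\,u(x)+\sum_{v\in\neighborhood}\distribution(v)\,u(x-v)
\]
is everywhere stable (the least-action / obstacle characterization of the stabilization operator of Equation~\eqref{eq:sandpile}, valid by abelianness). Once $u$ is known, \PRED is answered by testing $u(x)\geq 1$, \SPRED and \FSPRED by the analogous test on the odometers of $c+\indic_{\{y\}}$ and $c+\indic_{\{0^d\}}$, and \CPRED by reading off $c'$ from the displayed equation. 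Hence it suffices to compute $u$ in $\NC$; this simultaneously handles the whole hierarchy of Proposition~\ref{prop:hierarchy} without appealing to Conjecture~\ref{conj:func}.

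Second, I would exploit that $d=1$. Lemma~\ref{lemma:configbound} confines the toppled region to an interval of polynomial length $N=\O(\theta r n)$, and an interval-connectedness argument (in $\Z$ the toppled set is a single interval $[a,b]$) pins down $\mathrm{supp}(u)$. Lemma~\ref{lemma:topplingsbound} gives the Lipschitz bound $|u(x)-u(x+v)|\leq\sum_z c(z)$ for $v\in\neighborhood$, so $u$ varies slowly; combined with the stability constraint $0\leq c'(x)<\theta$, the displayed operator equation pins the discrete second differences of $u$ to a bounded range. The intended structural claim is that, on $[a,b]$, $u$ is piecewise affine with only polynomially many breakpoints, each piece and endpoint determined by the conserved mass $\sum_z c(z)$ and the first moment $\sum_x x\,c(x)$, whose net change per toppling is the drift $\mu=\sum_{v\in\neighborhood}v\,\distribution(v)$. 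With such a description in hand, computing $u$ reduces to a constant number of integer-arithmetic subproblems over $N$ cells: prefix sums of $c$, the mass and moment balances fixing $a,b$, and local solution of the bounded second-difference recurrence on each affine piece. Each is iterated integer addition, comparison and division on $\O(\log N)$-bit numbers, all in $\TC^0\subseteq\NC^1$; so $u$, and therefore every prediction problem, is computed in $\NC$ (plausibly $\NC^1$, matching the radius-one bound).

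The hard part will be the structural claim of the second step for an arbitrary neighborhood and distribution. In the radius-one von Neumann case the symmetric stencil makes the avalanche spread so as to preserve the center of mass, which is exactly what fixes its extent; for a general asymmetric, wide stencil the drift $\mu$ is nonzero, the clean center-of-mass argument fails, and one must instead solve a discrete obstacle problem for the asymmetric operator and prove that its minimal solution admits a parallelizable, polynomially-bounded piecewise description. A further difficulty, absent from \FSPRED where Proposition~\ref{prop:fsmono} forces $u\in\{0,1\}$, is that for \PRED, \SPRED and \CPRED cells may topple many times, so the odometer is genuinely unbounded and its regularity must be proved rather than inherited from a single-toppling monotonicity. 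Establishing this regularity uniformly in $\neighborhood$ and $\distribution$ is precisely the gap that keeps Conjecture~\ref{conj:1d} open.
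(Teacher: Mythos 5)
You have not proved the statement, and indeed you could not have by this route without new ideas: the statement is a \emph{conjecture} in the paper, and the paper itself proves only the $\FSPRED$ case (Theorem~\ref{theorem:1d}), leaving $\CPRED$, $\PRED$ and $\SPRED$ open. Your first step --- reducing all four problems to computing the odometer via the least-action characterization, which is legitimate by abelianness (Proposition~\ref{prop:abelian}) --- is sound. But the load-bearing second step is a structural claim you explicitly do not establish, and several of its ingredients are wrong or unsupported as stated. The ``interval-connectedness'' assertion that the toppled set in $\Z$ is a single interval $[a,b]$ is false for general neighborhoods: Figure~\ref{fig:1d-nonlinear} exhibits an avalanche that jumps arbitrarily far right before returning left, and nothing prevents the final toppled set from having gaps (the paper's own proof of Theorem~\ref{theorem:1d} must explicitly handle ``an arbitrarily large gap'' between toppled regions). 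Lemma~\ref{lemma:topplingsbound} bounds $|u(x)-u(y)|$ for neighboring cells by $\sum_z c(z)$, a polynomially large quantity, not a constant, so it does not ``pin the discrete second differences of $u$ to a bounded range''; and the claim that each affine piece is determined by the mass and the first moment via the drift $\mu$ is precisely the center-of-mass argument that the paper says works only for the symmetric radius-one stencil and ``seems a technically challenging task'' to generalize. So the piecewise-affine, polynomially-many-breakpoints description of the odometer is not a lemma you can invoke; it is the conjecture in disguise, as your own closing paragraph concedes.

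It is worth contrasting your plan with what the paper actually does for the one case it settles. For $\FSPRED$, Proposition~\ref{prop:fsmono} forces $\odometer\in\{0,1\}$ pointwise, and the paper never needs any global regularity of the odometer: Lemma~\ref{lemma:odor} packages the avalanche into constant-size local transition functions (inputs and outputs are $\O(r)$ bits because of the $0/1$ bound), which are composed along a binary tree of logarithmic depth, with fixed points across gaps resolved by constant-time applications of the lemma. That mechanism collapses exactly where your attempt does --- when multiple topplings make the local ``status'' unbounded --- which is why the paper states $\CPRED,\PRED,\SPRED\in\NC$ only as Conjecture~\ref{conj:1d}. Your obstacle-problem framing is a reasonable research direction (and your honesty about the gap is commendable), but as a proof it reduces an open conjecture to an unproven and partly false regularity statement.
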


Three results from the literature support this conjecture, they are
expressed on $\FSPRED$
and exploit the strong monotonicity of avalanches in this case (Proposition~\ref{prop:fsmono})
to prove that the problem is in $\NC^1$ for a large
class of models:
\begin{itemize}
  \item Kadanoff sandpile models (\cite{2010-FormentiGolesMartin-KSPMAP}
    completed in \cite{fpr14}),
  \item extended to any decreasing sandpile model (in \cite{fpr18}).
\end{itemize}

\begin{myremark}
  \label{remark:kadanoff}
  In dimension one, the radius $r$ {\em Kadanoff} sandpile model corresponds to
  $\structure{\neighborhood_K,\distribution_K,r+1}$ with $\neighborhood_K=\{-1,r\}$
  and $\distribution_K(-1)=r, \distribution_K(r)=1$ (see \cite{fpr14} for an
  illustration). A {\em decreasing} sandpile
  model simply has a neighborhood $\neighborhood$ such that $\neighborhood \cap
  \N_- = \{-1\}$.
  The name decreasing sandpile model comes from an
  interpretation of the sand content at each cell as the slope between
  consecutive columns of sand grains (it adds one artificial dimension to the
  picture), such that it preserves a monotonous form.
\end{myremark}

\begin{theorem}[\cite{2010-FormentiGolesMartin-KSPMAP,fpr14,fpr18}]
  For any one-dimensional decreasing sandpile model,\linebreak
  $\FSPRED \in \NC^1$.
\end{theorem}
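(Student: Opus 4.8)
The plan is to reduce \FSPRED to deciding whether a single designated cell lies in the support of the avalanche's odometer, and then to compute that support by a prefix-type scan that fits in $\NC^1$. The starting point is Proposition~\ref{prop:fsmono}: since the extra grain is added at the first (leftmost) cell, strong monotonicity forces the odometer $\odometer(c+\indic_{\set{0}},c')$ to be $\set{0,1}$-valued, so a cell becomes unstable during the avalanche if and only if it topples exactly once. Writing $t(i)\in\set{0,1}$ for the toppling indicator of cell $i$, the answer to \FSPRED on input $(c,x)$ is simply $t(x)=1$, and it suffices to compute $t$ (or even just its value at $x$).

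Second, I would characterise $t$ as the least solution of the local fixpoint
\[
  t(i)=\H{\,c(i)+\indic_{\set{0}}(i)+\sum_{v\in\neighborhood}\distribution(v)\,t(i-v)-\theta\,},
\]
which expresses that a cell topples precisely when the total grains it ever receives from its (once-toppling) in-neighbours, added to its initial content, reach the threshold $\theta$; that this least fixpoint coincides with the genuine avalanche follows again from Proposition~\ref{prop:fsmono} together with the abelian property. The difficulty is that the recurrence is \emph{bidirectional}: the term $v=-1$ makes $t(i)$ depend on its right neighbour $t(i+1)$, while the positive neighbours $v\in\neighborhood\cap\N_+$ make it depend on cells to the left. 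A naive evaluation iterates this fixpoint a polynomial number of times and lands only in $\Poly$; the Kadanoff case (Remark~\ref{remark:kadanoff}) already shows the leftward $-1$ move can propagate topplings back towards the origin, so the two directions cannot be decoupled for free.

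Third, to collapse this to $\NC^1$ I would pass to the column-height (slope) interpretation of Remark~\ref{remark:kadanoff}, in which the decreasing condition $\neighborhood\cap\N_-=\set{-1}$ guarantees a monotone height profile that the avalanche merely smooths after the single grain is added at the highest column. The key structural lemma to establish is that this smoothing is governed by a single monotone \emph{front} $R$ (the extent of the avalanche) together with a locally determined toppling pattern inside the affected region, so that deciding $t(x)$ amounts to locating $R$ and reading off a simple prefix condition at $x$. Granting such a lemma, $R$ and the pattern are obtained from prefix sums of the contents, which are bounded by the constant $2\theta$ (Proposition~\ref{prop:theta}), together with a first-threshold-crossing computation; iterated addition, prefix sums, and threshold comparisons lie in $\TC^0\subseteq\NC^1$, and since the radius $r=\max\set{|v|_\infty\mid v\in\neighborhood}$ is a fixed constant, the whole evaluation can alternatively be realised as an iterated product of constant-size transitions along the line, which is again in $\NC^1$.

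The main obstacle is exactly the structural lemma of the third step: one must prove that, despite the bidirectional interplay between the leftward $-1$ move and the rightward positive moves, the avalanche support is read off from a monotone front and a bounded amount of prefix information rather than from genuine fixpoint iteration. This decoupling is where both hypotheses are indispensable — the decreasing shape $\neighborhood\cap\N_-=\set{-1}$ and the first-column grain addition, the latter through the strong monotonicity of Proposition~\ref{prop:fsmono} — and it constitutes the technical heart of \cite{2010-FormentiGolesMartin-KSPMAP,fpr14,fpr18}, proved there first for Kadanoff models and then extended to all decreasing one-dimensional models.
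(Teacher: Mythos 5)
Your outline is essentially sound and, at its core, follows the same route as the source it must follow: the survey itself does not reprove this theorem but describes its proof as resting on the strong monotonicity of Proposition~\ref{prop:fsmono} plus the \emph{pseudo-linearity} of the avalanche in decreasing models (a sliding window of width $r$ scanning left to right), with constant-size \emph{status} functions composed along a binary tree of logarithmic depth --- and your fallback formulation, ``iterated product of constant-size transitions along the line,'' is exactly that mechanism, correctly placed in $\NC^1$. Two remarks on where you diverge. First, your least-fixpoint characterisation of the $\set{0,1}$-valued odometer, $t(i)=\H{c(i)+\indic_{\set{0}}(i)+\sum_{v\in\neighborhood}\distribution(v)\,t(i-v)-\theta}$, is a clean addition not present in the paper's account and is correct (the genuine avalanche is a solution, and minimality follows by induction along the sequential toppling order); it usefully isolates why the $-1$ move makes the recurrence bidirectional. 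Second, your \emph{primary} route --- a monotone front $R$ plus a toppling pattern read off from prefix sums and a first-threshold-crossing in $\TC^0$ --- is not how the cited proofs go, and as stated it is doubtful: even for Kadanoff models the avalanche support is not simply an interval determined by a threshold crossing of accumulated content (toppled cells may be interleaved with untoppled ones, and the pattern emerges from the status automaton, not from iterated addition alone), which is precisely why \cite{2010-FormentiGolesMartin-KSPMAP,fpr14,fpr18} work with window statuses rather than prefix sums. You flag the decoupling lemma as the unproven heart and defer it to the literature; that is honest and puts your sketch at the same level of detail as the survey's own, but you should drop or heavily weaken the prefix-sum framing and promote the status-composition route from fallback to main argument, since only the latter is supported by the pseudo-linearity that the decreasing hypothesis $\neighborhood\cap\N_-=\set{-1}$ actually buys (contrast Figure~\ref{fig:1d-nonlinear} for general models, where this fails).
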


This result can be generalized to
any one-dimensional sandpile model (Theorem~\ref{theorem:1d}), which supports Conjecture \ref{conj:1d}.
However the generalization of the idea presented in \cite{fpr14,fpr18}
is not straightforward. Decreasing sandpile models
have the feature that $\neighborhood \cap \N_- = \{-1\}$ which, together with
the strong monotonicity Proposition~\ref{prop:fsmono}, implies a pseudo-linear
dynamics of the avalanche process (lexicographically minimal sequence of
topplings under the sequential update policy): using a sliding window of width
$r=\max\{|v|_\infty \mid v \in \neighborhood\}$, one can compute the whole avalanche 
from cell $0$ to the maximal index of a toppled cell. Then,
it is possible to precompute the topplings locally (via functions of constant
size telling what happens around cell $i+1$, called \emph{status} at $i+1$ in
\cite{fpr14,fpr18}, according to what happens around cell $i$, \ie from the \emph{status}
at $i$), and compose these informations according to a binary tree of logarithmic height, 
each level of the composition being computed in constant time, hence resulting in an
$\NC^1$ algorithm.

This pseudo-linear dynamics does not hold any more for
general sandpile models in dimension one, as shown in the example in
Figure~\ref{fig:1d-nonlinear}. Nevertheless, it is possible to get a similar
result with a more involved construction presented below.

\begin{figure}
  \centering{\includegraphics{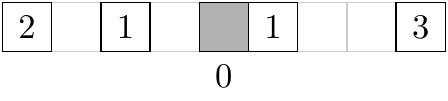}
  \hspace*{.5cm}
  \includegraphics{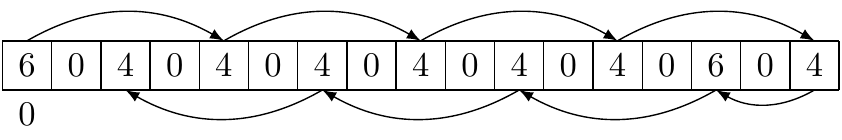}}
  \caption{Example of avalanche process for a one-dimensional sandpile model
  (left, $\theta=7$), which is not pseudo-linear: the avalanche process may
  topple cells arbitrarily far on the right before going backwards to topple
  cells on the left end (right, instance of $\FSPRED$ where arrows depict the
  avalanche process).}
  \label{fig:1d-nonlinear}
\end{figure}

In dimension one, for any $c, c' \in \N^{\Z}$,
let $\odometer(c,c')|_{[x,y]}$ denote the restriction of $\odometer(c,c')$ to the
interval $[x,y]$, with $x<y$ two cells in $\Z$.

\begin{lemma}
  \label{lemma:odor}
  For any one-dimensional sandpile model
  $\structure{\distribution,\neighborhood,\theta}$ and any configurations
  $c,c'$ such that $c$ is within the elementary hypercube, $c \sdonne c'$ and
  $c'$ is stable,
  with $r=\max\{|v|_\infty \mid v \in \neighborhood\}$,
  knowing
  \begin{itemize}
    \item $\odometer(c,c')|_{[x-r,x-1]}$ and
      $\odometer(c,c')|_{[y+1,y+r]}$
  \end{itemize}
   for cells $x,y \in \Z$ such that $x+r \leq y$, allows to compute
  \begin{itemize}
    \item $\odometer(c,c')|_{[x,x+2r-1]}$ and $\odometer(c,c')|_{[y-2r+1,y]}$
  \end{itemize}
  in time $\O(y-x)$ on one
  processor.
\end{lemma}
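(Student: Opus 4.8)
The plan is to reduce the whole computation to the stabilisation of a bounded, \emph{isolated} piece of the lattice, and then to read off the wanted odometer values near its two ends.

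First I would record the mass-balance identity for the odometer. Summing Equation~\eqref{eq:sandpile} along the evolution $c\sdonne c'$ (equivalently along a sequential run, which is legitimate by Proposition~\ref{prop:abelian}), every toppling of a cell $z$ removes $\theta$ grains from $z$, while every toppling of an in-neighbor $z-v$ (for $v\in\neighborhood$) deposits $\distribution(v)$ grains on $z$. Writing $u=\odometer(c,c')$ this yields, for every $z\in\Z$,
\[
  c'(z)=c(z)-\theta\,u(z)+\sum_{v\in\neighborhood}\distribution(v)\,u(z-v),
  \qquad 0\le c'(z)<\theta .
\]
Because the in-neighbors $z-v$ lie both to the left (for $v>0$) and to the right (for $v<0$, and both signs occur since $\spanv(\neighborhood)=\Z$), this recurrence couples each cell to a window $[z-r,z+r]$ on \emph{both} sides, so it cannot be resolved by a one-sided sweep. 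This is exactly why the hypothesis supplies a strip of width $r$ on each side and why the output windows may have width $2r$.

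Second, I would turn the two known strips into boundary data for an isolated system on $[x,y]$. Knowing $u$ on $[x-r,x-1]$ and on $[y+1,y+r]$ determines, through $\distribution$, the exact number of grains that entered $[x,y]$ from the outside during the whole evolution, namely $a(z)=\sum_{v\in\neighborhood,\ z-v\notin[x,y]}\distribution(v)\,u(z-v)$ on the two boundary bands (and $a(z)=0$ in the interior). I claim that $u|_{[x,y]}$ equals the odometer of the \emph{finite} sandpile obtained by keeping only $[x,y]$, sending to a sink every grain that leaves it, and starting from $c|_{[x,y]}+a$. To prove this I would project a sequential run of the true dynamics onto $[x,y]$: copy each toppling that occurs inside $[x,y]$ and schedule an ``injection'' of $\distribution(v)$ grains onto $z$ at the instant an outside in-neighbor $z-v$ topples. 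The content of every cell of $[x,y]$ then evolves identically in both processes, so this projection is a legal stabilising sequence of the isolated system reaching $c'|_{[x,y]}$ with firing counts $u|_{[x,y]}$; the Abelian property of the finite system (Proposition~\ref{prop:abelian}) shows that performing all injections up front gives the same odometer and makes it schedule-independent. Thus computing $u|_{[x,y]}$ is reduced to stabilising one explicit finite configuration on an interval of length $y-x+1$.

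Finally, I would compute this isolated odometer by a single sweep across the $y-x+1$ cells. The unknowns $u(z)$ and the stable remainders $c'(z)\in\set{0,\dots,\theta-1}$ obey the banded relation above (bandwidth $r$), and the odometer is characterised as the pointwise-minimal nonnegative solution keeping every remainder below $\theta$; with cell contents bounded (Proposition~\ref{prop:theta}) all quantities stay controlled. Propagating this constraint once across the interval fixes $u$ on all of $[x,y]$, in particular on $[x,x+2r-1]$ and $[y-2r+1,y]$, using $\O(r^2)=\O(1)$ work per cell, hence $\O(y-x)$ on one processor. The main obstacle is precisely this last step: since the recurrence is two-sided and the remainders $c'$ are themselves unknown (an obstacle-type constraint rather than a plain linear system), one must argue that a single sweep closes the forward/backward coupling and returns the true \emph{minimal} odometer rather than an over-fired solution. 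I expect this to rest on the least-action characterisation of the odometer together with the boundedness of $c'$ and of the cell contents, and it is the technical heart of the proof; the reduction of the first two paragraphs is comparatively routine.
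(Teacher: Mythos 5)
Your first two steps are sound, and in fact they make explicit something the paper's own (terse) proof leaves implicit: since only the boundary odometer \emph{values} are known, not the \emph{times} of the boundary topplings, one must justify that the interior odometer is schedule-independent; your projection-plus-injections argument combined with the Abelian property (Proposition~\ref{prop:abelian}) does this correctly. But there is a genuine gap exactly where you flag it, in the final computational step, and the paper does not fill it the way you expect. No two-sided obstacle problem is solved by a ``single sweep''; the lemma lives in the avalanche setting of \FSPRED, where Proposition~\ref{prop:fsmono} guarantees that every cell topples \emph{at most once}, and this is the one ingredient your proposal never invokes. With it the gap closes immediately: the isolated system you constructed is stabilised by plain chronological simulation (topple any unstable cell, repeat); uniqueness of the stabilising odometer shows this returns $\odometer(c,c')|_{[x,y]}$, and the at-most-one-toppling property bounds the total number of topplings by $y-x+1$, each costing $\O(r)=\O(1)$ work, hence the $\O(y-x)$ time on one processor. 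Without that property neither half of your last step survives: the minimality/sweep argument is unjustified (as you admit), and the time bound fails too, since Proposition~\ref{prop:theta} bounds cell \emph{contents} but not the number of topplings per cell, which need not be $\O(1)$ for a general pair $c \sdonne c'$.

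Two smaller omissions are worth noting. The $0/1$-valuedness of the odometer given by Proposition~\ref{prop:fsmono} is also what makes the computed data constant-size ($2r$ input bits, $4r$ output bits), which is essential for how Lemma~\ref{lemma:odor} is consumed in the tree composition of Theorem~\ref{theorem:1d}; a polynomially-valued odometer would break that downstream use even if this lemma's statement were salvaged. And the hypothesis $x+r \leq y$ plays a role you never address: it ensures the two output windows $[x,x+2r-1]$ and $[y-2r+1,y]$ involve only odometer values within $[x-r,y+r]$, so the function being computed is well defined. Both points are minor next to the main gap above, but a complete proof should mention them.
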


Figure~\ref{fig:odor} provides a graphical illustration of the
statement of Lemma~\ref{lemma:odor}.

\begin{figure}
  \centering{\includegraphics{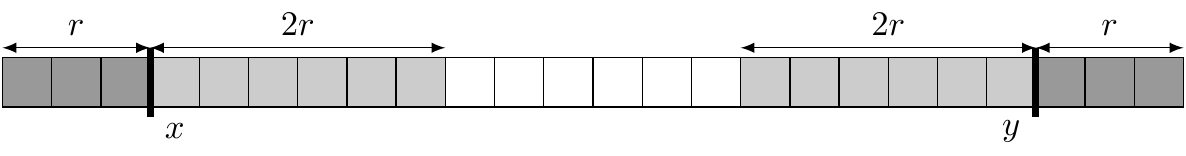}}
  \caption{When every cell topples at most once, knowing which cells among
  $[x-r,x-1]$ and $[y+1,y+r]$ topple allows to compute which cells among
  $[x,x+2r-1]$
  and $[y-2r+1,y]$ topple, where $r=\max\set{|v|_\infty \mid v \in \neighborhood}$
  (Lemma~\ref{lemma:odor}).}
  \label{fig:odor}
\end{figure}

\begin{proof}
  If one knows all topplings that may influence the topplings within $[x,y]$ (given by the
  assumptions $\odometer(c,c')|_{[x-r,x-1]}$ and $\odometer(c,c')|_{[y+1,y+r]}$,
  where $r$ is the radius of the sandpile model), it is possible
  to compute all topplings occurring within $[x,y]$. Furthermore, from
  Proposition~\ref{prop:fsmono}, any cell topples at most once and hence the number of
  topplings is upper bounded by $y-x$, which in its turn is an upper bound on the number of
  computation steps. The condition $x+r<y$ ensures that the output is made of
  $\odometer(c,c')$ values within the interval $[x-r,y+r]$.
\end{proof}

\begin{theorem}
  \label{theorem:1d}
  For any one-dimensional sandpile model, $\FSPRED \in \NC^1$.
\end{theorem}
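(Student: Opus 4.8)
The plan is to reduce \FSPRED to computing a single bit of the odometer, and to evaluate that bit by a logarithmic-depth, bounded-fan-in circuit organised as a balanced binary tree over the cells. First I would observe that, by Proposition~\ref{prop:fsmono}, the avalanche triggered by $c+\indic_\set{0^d}$ topples every cell at most once, so the odometer $u:=\odometer(c+\indic_\set{0^d},c')$ is a $\set{0,1}$-valued function, and cell $x$ eventually becomes unstable if and only if $u(x)=1$. By Lemma~\ref{lemma:configbound} every cell that ever topples lies in an interval of $N=\O(\theta r n)$ cells around the origin, where $r=\max\set{|v|_\infty \mid v\in\neighborhood}$; since the model is fixed, $r$ and $\theta$ are constants and $\log N=\O(\log n)$. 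It therefore suffices to compute $u$ on this region in depth $\O(\log n)$.

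The key object is a \emph{transfer function} attached to each interval $I=[a,b]$ of the region. Encoding by one bit each possible toppling of the $r$ cells immediately to the left ($[a-r,a-1]$) and to the right ($[b+1,b+r]$) of $I$, and likewise of the $r$-cell border windows of $I$, Lemma~\ref{lemma:odor} tells us that the toppling pattern of the interior of $I$ — in particular of its two border windows — is \emph{determined} by the incoming pattern at the two outer windows together with the fixed initial contents of $I$ (including the added grain if $0^d\in I$). This defines a function $f_I\colon\set{0,1}^{r}\times\set{0,1}^{r}\to\set{0,1}^{r}\times\set{0,1}^{r}$ sending incoming boundary topplings to outgoing boundary topplings. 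Two features are crucial: $f_I$ is a \emph{constant-size} table (only $2^{2r}$ inputs), and $f_I$ is \emph{monotone}, since injecting more grains into $I$ can only cause more interior topplings.

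Next I would arrange the region into a balanced binary tree whose leaves are blocks of constant size $\geq r$ and whose internal nodes are the intervals obtained by concatenating their children; the tree has height $\O(\log N)=\O(\log n)$. In an \emph{upward pass} I compute $f_I$ at every node: at a leaf, by brute force over its $2^{2r}$ inputs using Lemma~\ref{lemma:odor} (constant work); at an internal node $I=I_1\cup I_2$ with junction at $m$, the right outer window of $I_1$ coincides with the left border window of $I_2$ and vice versa, so I must solve the mutual constraint $q=R_1(L,p)$, $p=L_2(q,R)$, where $p,q\in\set{0,1}^r$ are the junction patterns, $L,R$ the external inputs, and $R_1,L_2$ the relevant components of $f_{I_1},f_{I_2}$. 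Because these maps are monotone on the finite Boolean lattice $\set{0,1}^r$, the \emph{least} fixed point is reached by iterating from $0^r$ in at most $r$ constant-time steps, yielding $f_I$ as a constant-size table. A subsequent \emph{downward pass} propagates the true external boundary — which is all zeros on both sides at the root, nothing outside the region toppling — through the same junction resolutions, so that every node, and in particular the leaf containing $x$, learns both of its outer windows; a final application of Lemma~\ref{lemma:odor} at that leaf returns $u(x)$. Each node performs $\O(1)$ work on constant-size data, both passes have depth $\O(\log n)$, and the whole computation is a uniform bounded-fan-in circuit of depth $\O(\log n)$, hence $\FSPRED\in\NC^1$.

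The main obstacle is to prove that this tree computation is \emph{correct}, namely that the least fixed point selected at every junction equals the restriction of the true odometer $u$. The true odometer certainly induces a consistent assignment at each junction, hence a fixed point, and by monotonicity of $f_{I_1},f_{I_2}$ the iteration from $0^r$ produces the pointwise-smallest fixed point, which is dominated by the true one; the reverse inequality is the delicate point. I would establish it by identifying the least fixed point with the outcome of the physical avalanche restricted to $I$: the process starts from no topplings and only ever adds topplings, so the monotone iteration from $0^r$ mirrors this spreading exactly and cannot undershoot the minimal stabilizing odometer, which is $u$. Making this coupling between the monotone fixed-point iteration and the actual avalanche fully rigorous, uniformly over all intervals and all external boundary conditions, is where the real work lies; everything else is bookkeeping about window sizes and the constant bound $2^{2r}$ that keeps each node's computation, and hence the total depth, within $\NC^1$.
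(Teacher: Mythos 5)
Your proposal is correct and follows the same architecture as the paper's proof: Proposition~\ref{prop:fsmono} to get $\set{0,1}$-valued odometers, Lemma~\ref{lemma:configbound} to bound the active region, constant-size transfer functions on $r$-cell windows via Lemma~\ref{lemma:odor}, and a logarithmic-depth binary tree of compositions. The one substantive place where you diverge is the delicate step itself, the resolution of the circular dependency at each junction. The paper composes per-block functions $i_1,i_6 \mapsto i_2,\dots,i_5$ and $i_5,i_{10}\mapsto i_6,\dots,i_9$ and asserts that the fixed point in $(i_5,i_6)$ is \emph{uniquely} determined by $i_4,i_7$ through one more constant-time application of Lemma~\ref{lemma:odor}; it then answers the query with two trees rooted on either side of $x$ plus a final local resolution, rather than your single up-pass/down-pass. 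Your route instead selects the \emph{least} fixed point by monotone iteration from $0^r$, using monotonicity of stabilization in grain additions. This is arguably the more robust choice: bare consistency does not pin down a unique solution in general (a stretch of cells all holding $\theta-1$ grains admits a spurious self-sustaining all-ones ``phantom wave'' alongside the true all-zeros solution for the same outer inputs), so \emph{which} fixed point to take genuinely matters, and your least-fixed-point selection together with the coupling you sketch is exactly what disambiguates it. The coupling you flag as ``the real work'' does close cleanly: each round of the iteration from $0^r$ only records topplings forced by injections already established, so the iteration transcribes a legal sequential schedule of the avalanche, and by the Abelian property (Proposition~\ref{prop:abelian}, least action) any exhaustive legal schedule yields the unique stabilization odometer; hence the least fixed point both dominates and is dominated by the true restriction. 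Two small bookkeeping points: the iteration may need up to $2r+1$ rounds (there are $2r$ junction bits), not $r$, which changes nothing; and for \emph{hypothetical} boundary inputs never realized by the true run, the transfer tables should be defined by stabilizing the interval and truncating (or marking invalid) any multiple topplings, since Proposition~\ref{prop:fsmono} only guarantees $0/1$ values along the true branch — the paper's sketch is equally silent on this, and correctness only uses the true branch.
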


\begin{proof}[Sketch]
  Let us describe how to derive an $\NC^1$ algorithm for $\FSPRED$ from
  Lemma~\ref{lemma:odor}. Let $c,x$ be the instance, and $c'$ be the
  stable configuration such that $c \sdonne c'$. For simplicity sake, assume
  that $c$ is $c+\indic_{\set{0}}$,
  and let $r=\max\set{|v|_\infty \mid v \in \neighborhood}$.
  \begin{enumerate}
    \item Compute, in parallel for every $y \in \N$ multiple of $r$, the
      function taking as input $\odometer(c,c')|_{[y-r,y-1]}$ and
      $\odometer(c,c')|_{[y+4r,y+5r-1]}$, and outputting $\odometer(c,c')|_{[y,y+2r-1]}$
      and $\odometer(c,c')|_{[y+2r,y+4r-1]}$ (see picture below
      for a partial example with some $y$ multiple of $r$ and $z=y+4r$, on two copies of the configuration for clarity).
      From Lemma~\ref{lemma:odor}, each computation needs a constant time on
      one processor, and there are linearly many.\\
  \end{enumerate}
  \begin{center}
    \includegraphics[width=\textwidth]{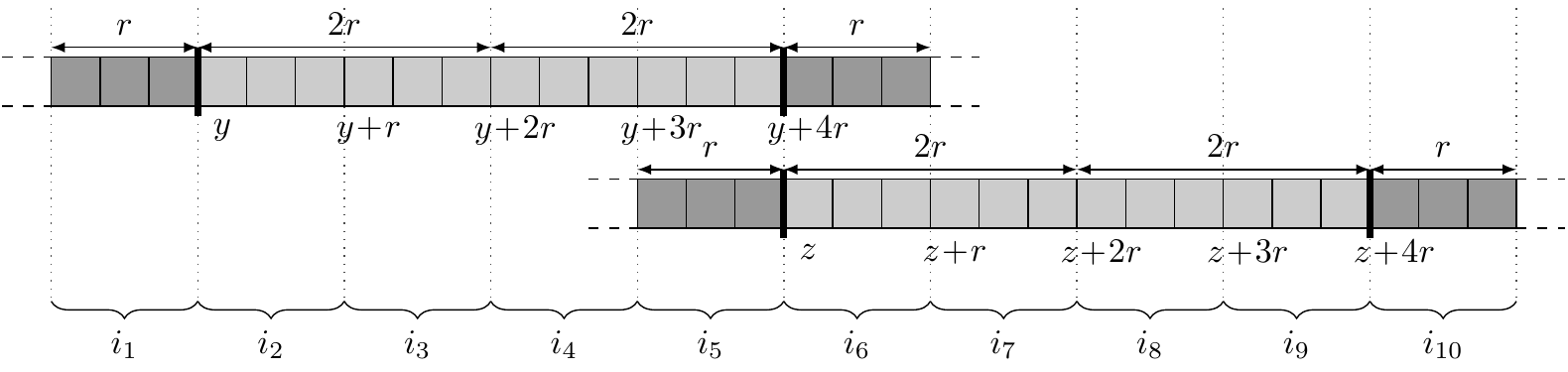}
  \end{center}
  \begin{enumerate}[resume]
    \item Compose them according to some binary tree: each function is of constant
      size (from Proposition~\ref{prop:fsmono}, input is $2r$ bits and output is
      $4r$ bits), and we compose functions having some carefully chosen
      overlap. Let us denote $i_1,i_2,i_3,i_4,i_5,i_6,i_7,i_8,i_9,i_{10}$ the
      respective portions of $\odometer(c,c')$ of size $r$ under consideration
      (see picture above), then the two functions are respectively: $i_1,i_6
      \mapsto i_2,i_3,i_4,i_5$ and $i_5,i_{10} \mapsto i_6,i_7,i_8,i_9$. With
      these two we can compute $i_1,i_5,i_6,i_{10} \mapsto
      i_2,i_3,i_4,i_5,i_6,i_7,i_8,i_9$, and fortunately the fixed point
      regarding $i_5,i_6$ is uniquely determined by $i_4,i_7$,
      according to a constant time application of
      Lemma~\ref{lemma:odor} (remark that there may be an
      arbitrarily large gap between $i_3$ and $i_4$, and
      also between $i_7$ and $i_8$). Each
      composition deals with a constant number of functions of constant size,
      hence it takes a constant time. As there is a logarithmic number of
      levels in the composition, the overall process takes a logarithmic
      parallel time.
  \end{enumerate}
  Now remark that no cell within $[-r,-1]$ nor $[n+1,n+r]$ topple, with $n$ the
  size of configuration $c$, hence $\odometer(c,c')|_{[-r,-1]}$ and
  $\odometer(c,c')|_{[n+1,n+r]}$ are fixed.
  In order to get the answer of whether cell $x=kr+k'$ topples
  (for some unique $k,k' \in \N$ with $0 \leq k' < r$), we consider two
  binary trees of compositions: one such that the root gives the function with
  input $[-r,-1],[kr,(k+1)r-1]$, and the other such that the root gives the
  function with input $[(k-1)r,kr-1],[n+1,n+r]$. The fixed point of their
  conjunction, resolved with the function with input
  $[(k-2)r,(k-1)r-1],[(k+1)r,(k+2)r-1]$
  (again a constant time application of Lemma~\ref{lemma:odor}),
  tells whether
  $\odometer(c,c')(x)$ equals $0$ or $1$ (see picture below).
  \begin{center}
    \includegraphics[width=\textwidth]{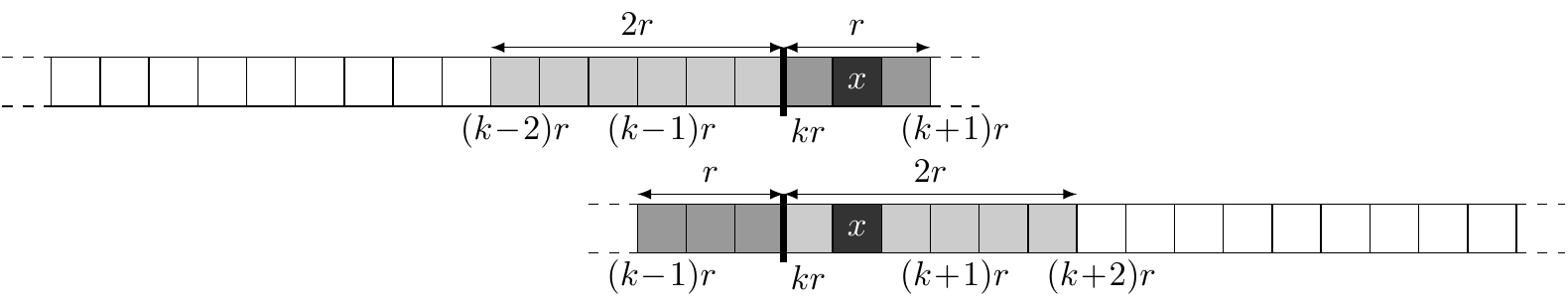}
  \end{center}
\end{proof}

As mentioned above, the generalization of the Theorem~\ref{theorem:1d} to other prediction
problems requires non-trivial extensions, because of multiple topplings at a cell which somehow
should be handled in constant time computation. The proof of $\TC^0$-hardness
from \cite{Miltersen2005} also makes heavy use of multiple topplings, and as a consequence
it does not generalize for free to an arbitrary one-dimensional
sandpile model. We nevertheless conjecture that it also does.

\begin{conjecture}
  For any one-dimensional sandpile model,
  $\PRED$ is $\TC^0$-hard for $\AC^0$
  reductions, and therefore not in
  $\AC^{1-\epsilon}$
  for any constant $\epsilon>0$.
\end{conjecture}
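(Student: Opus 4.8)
The plan is to prove $\TC^0$-hardness by exhibiting an $\AC^0$ many-one reduction from \textbf{Majority}, the $\TC^0$-complete problem of deciding whether a word in $\B^n$ contains strictly more ones than zeros. Fix the one-dimensional model $\structure{\neighborhood,\distribution,\theta}$; then $\theta$, $r=\max\set{|v|_\infty \mid v \in \neighborhood}$, and all the numbers introduced below are constants. Since $\spanv(\neighborhood)=\Z$ forces $\neighborhood$ to contain both a positive and a negative vector (Equation~\ref{eq:span}), grains dumped near the origin spread in both directions, and by number conservation (Proposition~\ref{prop:mass}) together with the polynomial convergence bound (Theorem~\ref{theorem:poly}) the process always stabilizes on a finite interval. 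Note that we target $\PRED$ (the top of the hierarchy of Proposition~\ref{prop:hierarchy}), so we may start from an arbitrary finite, possibly unstable, configuration.

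Given an input $w\in\B^n$, the reduction builds the configuration that deposits a fixed block of $M$ grains at the origin for each index $i$ with $w_i=1$, so the total number of added grains is $N=M\cdot\#\{i\mid w_i=1\}$, where $M$ is a constant fixed below. Each output cell depends only on a single input bit and its index, so the map is computable in $\AC^0$. The designated cell $x$ is placed at a position $p(n)$ obtained by simple arithmetic on $n$, calibrated so that $x$ eventually topples if and only if $N \geq M(\lfloor n/2\rfloor+1)$, that is, if and only if $w$ has a strict majority of ones.

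The heart of the argument is a \emph{reach lemma}: for a localized pile of $N$ grains, the set of cells that ever topple is an interval whose right endpoint $b(N)$ is non-decreasing in $N$ (by the abelian monotonicity of Proposition~\ref{prop:abelian}, adding grains can only increase the odometer everywhere), grows linearly in $N$ with a well-defined asymptotic density $\rho>0$, and fluctuates around $\rho N$ by only $\O(1)$ as $N$ varies. Monotonicity and a linear lower bound are immediate, since stable cells carry fewer than $\theta$ grains and hence $N$ grains occupy at least $N/(\theta-1)$ cells; the nontrivial claim is the bounded fluctuation. Granting the lemma, one places $p(n)$ between the reaches $b(M\lfloor n/2\rfloor)$ and $b(M(\lfloor n/2\rfloor+1))$, which differ by $\rho M \pm \O(1)$, and chooses the constant $M$ larger than twice the fluctuation; then the $\O(1)$ uncertainty in each endpoint is smaller than the gap between consecutive achievable reaches, so a single threshold position $p(n)$ correctly separates majority inputs from the rest. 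Polynomial size and convergence come for free from Theorem~\ref{theorem:poly}.

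The main obstacle is establishing the reach lemma for a \emph{general} one-dimensional model. For the von Neumann model the endpoint is pinned down by a conserved center of mass, but an arbitrary, possibly asymmetric distribution $\distribution$ need conserve no first moment, and its stabilized profile need not be transparent. One must show that the stabilized pile settles into an eventually periodic profile of fixed density $\rho$ with boundary corrections bounded independently of $N$; this is precisely where the multiple topplings, which the proof of \cite{Miltersen2005} exploited and which Proposition~\ref{prop:fsmono} does not tame here, genuinely complicate the analysis. Once bounded fluctuation is secured, $\TC^0$-hardness under $\AC^0$ reductions follows, and non-membership in $\AC^{1-\epsilon}$ is then inherited from the lower bound of \cite{Hastad87}.
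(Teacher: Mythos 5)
You have not proved the statement, and indeed you could not have matched it against a proof in the paper: the statement is presented there as a \emph{conjecture}, precisely because, as the authors note, the $\TC^0$-hardness argument of \cite{Miltersen2005} for the radius-one von Neumann model ``makes heavy use of multiple topplings'' and does not generalize for free. Your proposal reproduces the routine outer shell of that known argument --- an $\AC^0$ reduction from \textbf{Majority} that converts the count of ones into a total mass $N$ and places the questioned cell at a threshold position --- but concentrates all the difficulty into your unproven ``reach lemma,'' namely that the right endpoint $b(N)$ of the toppled region equals $\rho N \pm \O(1)$ for an arbitrary, possibly asymmetric distribution $\distribution$. For the symmetric $\theta=2$ model this is pinned down by conservation of the center of mass, a first-moment invariant that a general $\distribution$ simply does not possess; the stabilized profile of a large mass in a general one-dimensional model, including the boundary corrections you need bounded independently of $N$, is exactly the single-source fixed-point question the paper explicitly sets aside as ``considerations of another kind'' (the cited works of Levine et al.), and no $\O(1)$-fluctuation result is known at this generality. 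You concede this yourself in your last paragraph, so what you have is a reduction template conditional on an open lemma, not a proof --- which is consistent with the statement remaining a conjecture in the paper.

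Two secondary points would also need repair even if the reach lemma were granted. First, the paper's input convention bounds every cell's content strictly below $2\theta$, so you cannot literally deposit $N = M\cdot\#\set{i \mid w_i = 1}$ grains ``at the origin''; the blocks must be spread over $\Theta(n)$ cells, and then the source is not localized, so the lemma must be strengthened to say that $b$ depends, up to the same $\O(1)$, only on the total mass and not on which positions within the $n$-cell input window carry the ones --- a strictly stronger and equally unestablished claim. Second, the monotonicity you invoke (``adding grains can only increase the odometer everywhere'') is not what Proposition~\ref{prop:abelian} states; it is a separate least-action-type consequence of abelianness, true but nowhere proved in the paper, so in a self-contained argument it would need its own proof in this general setting. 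None of this makes your plan unreasonable --- it is the natural line of attack, and plausibly the one the conjecture's authors envision --- but the core step is genuinely missing.
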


\section{$\Poly$-completeness in dimension three and above}
\label{s:pcomplete}\label{s:3d}

During his PhD thesis in the 1970s, Banks
started to implement circuit computation using 
discrete dynamical systems working on grids (namely cellular automata)
\cite{banks-phd}. The intuition behind such implementations 
is quite straightforward: a sequence of cells change state in a 
chain of reactions to transport information; two flows of information can interact to create logic gates.

For simplicity, circuits are restricted to have the following 
characteristics:
\begin{itemize}
  \item gates have fan in and fan out $2$,
  \item layered (information flows from one end to the other, layer by layer from inputs to output,
    without going backward),
  \item arranged on a grid layout.
\end{itemize}
Even with the previous constraints the prediction problems 
($\textsf{SAM2CVP}$ in \cite{limits}) are still $\Poly$-complete.
Moore and Nilsson ported Banks' technique to sandpile models in 1999
\cite{1999-Moore-complexitySandpiles}. They used an adaptation to the 
grid of the computation encoding idea developed by Bitar, Goles and Margenstern in
\cite{bitar1992,1996-GolesMargenstern-SPMUniversal}.

\begin{theorem}[\cite{1999-Moore-complexitySandpiles}]
  \label{theorem:VN3d}
  For von Neumann neighborhood $\neighborhoodVN$ of radius one and dimension $d
  \geq 3$, the problem $\FSPRED$ is $\Poly$-complete.
\end{theorem}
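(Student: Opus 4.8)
The plan is to prove $\Poly$-completeness of $\FSPRED$ for the three-dimensional von Neumann sandpile model by a reduction from the monotone circuit value problem $\MCVP$, following \emph{Banks' approach} as adapted to sandpiles by Moore and Nilsson. Membership in $\Poly$ is already granted by Theorem~\ref{theorem:poly} and its corollary, so the entire burden is to show $\Poly$-hardness, i.e. to exhibit an $\AC^0$ (or $\NC^0$) many-one reduction $\MCVP \red{\AC^0} \FSPRED$. The global idea is to take an instance of $\MCVP$ (a layered monotone circuit on a grid layout with fan-in and fan-out $2$, as described just before the statement) and build, in a uniform and very local way, a stable finite configuration $c$ inside an elementary hypercube together with a target cell $x$, so that adding one grain at the lexicographically minimal cell launches an avalanche that reaches $x$ \emph{if and only if} the circuit evaluates to $1$.

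\medskip
\noindent\textbf{Key steps, in order.}
First, I would design the basic gadgets as stable local patterns of sand grains in $\Z^3$: a \emph{wire} along which a single toppling front propagates (information ``$1$'' encoded as a travelling avalanche, ``$0$'' as the absence of one), together with gadgets realizing $\textsf{OR}$ and $\textsf{AND}$ of two incoming wires, a \emph{crossover} to route two signals past each other, and \emph{fan-out} (signal duplication) and turn gadgets. The crucial exploitable fact is Proposition~\ref{prop:fsmono}: in an $\FSPRED$ instance every cell topples at most once, so the dynamics is a monotone propagating front and each gadget need only be verified for a constant-size truth table of local behaviours, checked once and for all. Second, I would show that each gadget is realizable in the von Neumann neighborhood of radius one precisely because $d\geq 3$: the third dimension gives the room needed to implement a planar crossover without the two signals interfering (this is exactly the obstruction that keeps the two-dimensional case open). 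Third, I would place the gadgets on the grid according to the layered layout of the circuit, wiring the output of each layer into the inputs of the next, so that the whole configuration has size polynomial in the size of the circuit and fits in an elementary hypercube of side $n = \mathrm{poly}(|C|)$. Fourth, I would route a single ``trigger'' wire from the lexicographically minimal cell $0^d$ so that the one added grain ignites the input gates carrying value $1$; the target cell $x$ is placed at the output wire. Finally, I would argue correctness: by induction on the layers, using the Abelian property (Proposition~\ref{prop:abelian}) to make the avalanche order irrelevant and Proposition~\ref{prop:fsmono} to guarantee the monotone single-toppling semantics, the front reaches $x$ iff the circuit outputs $1$; and uniformity plus the purely local, position-indexed placement of gadgets makes the map computable in $\AC^0$.

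\medskip
\noindent\textbf{Main obstacle.} The hard part will be the crossover (and, relatedly, keeping gadget outputs from leaking grains backward into already-fired regions). One must design constant-size stable patterns such that two independent toppling fronts can traverse the same neighborhood of cells while preserving the invariant that each wire carries its own value and no spurious toppling is created; verifying that \emph{all} combinations of the two inputs $(0,0),(0,1),(1,0),(1,1)$ produce exactly the intended outputs, and that nothing topples outside the intended gadget footprint, is the delicate case analysis. The second and third dimensions are what make this possible at all, so the proof must make explicit where $d\geq 3$ is used — namely in separating the crossing signals into distinct layers of the lattice — thereby also clarifying why the argument genuinely fails to extend to $d=2$. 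Once the gadgets are fixed and their local correctness tables are checked, the assembly and the $\AC^0$-uniformity of the reduction are routine.
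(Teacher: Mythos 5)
Your proposal follows essentially the same route as the paper's proof: a many-one $\AC^0$ reduction from layered, grid-laid-out \MCVP built from constant-size stable macrocell gadgets (wires, turns, \emph{and}/\emph{or} gates, fan-out/multipliers), with the third dimension reserved solely for the cross-over, the avalanche triggered by one grain at the minimal cell, and correctness resting on Propositions~\ref{prop:fsmono} and~\ref{prop:abelian}. The backward-leakage issue you correctly flag as the main obstacle is resolved in the paper by an explicit \emph{diode} gadget inserted between layers (preventing false 1s), and the extension from $d=3$ to $d\geq 3$ is handled there by embedding the construction at coordinate zero in the extra dimensions while adding two grains per extra dimension to non-empty cells so the contents stay at $\theta-1$ and $\theta-2$ relative to the grown threshold --- two concrete details worth adding, but your overall argument is the paper's.
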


\begin{proof}[Sketch]
  The reduction is from $\MCVP$ (see Section \ref{ss:complexity}), in the case of fan in and fan out 2, layered,
  on a grid. We will present the proof for dimension 3 in details, and explain
  at the end how it generalizes.

  There are different types of gadgets to implement within
  sandpiles: wires, turns, \textit{and} gates, \textit{or} gates, plus diodes (to
  prevent unintended backward propagation of information) and multipliers (to
  get fan out 2). All these elements are presented as macrocells
  in Figure~\ref{fig:VNgates}
  (two top rows),
  and are embedded in a two-dimensional plane of the three-dimensional
  configuration. The non-planarity of the circuit requires (recall that $\MPCVP$ is
  in $\NC$, hence it is important that the circuit may not be planar) that
  wires cross each other, which is achieved using the
  third dimension (Figure~\ref{fig:VNgates} bottom row).
  
  \begin{figure}
    \centerline{\begin{tabular}{ccc}
      \includegraphics[scale=.8]{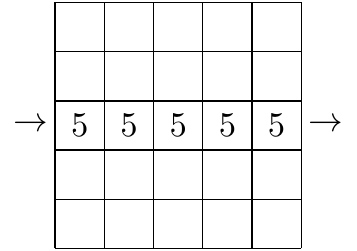} &
      \includegraphics[scale=.8]{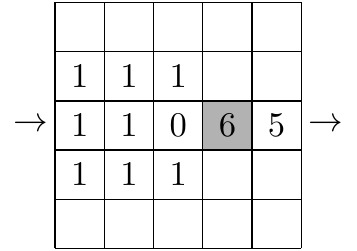} &
      \includegraphics[scale=.8]{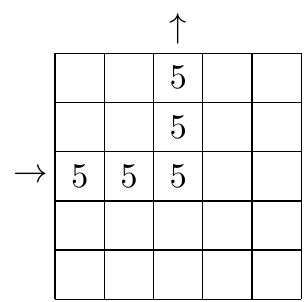}\\
      wire (0 or idle) &
      wire (transmits 1) &
      turn
    \end{tabular}}
    \centerline{\begin{tabular}{cccc}
      \includegraphics[scale=.8]{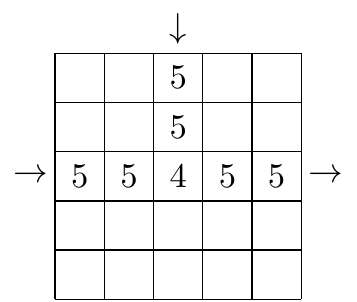} &
      \includegraphics[scale=.8]{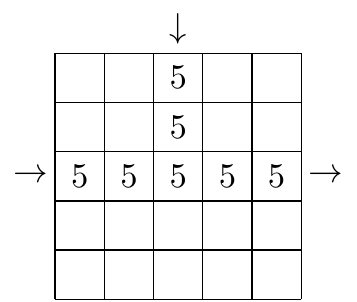} &
      \includegraphics[scale=.8]{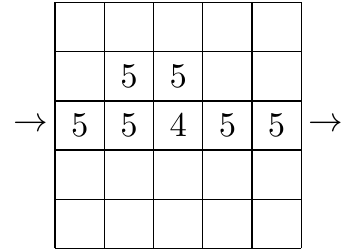} &
      \includegraphics[scale=.8]{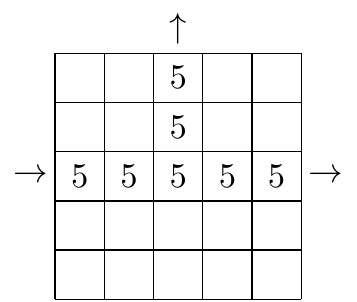}\\
      {\em and} gate &
      {\em or} gate &
      diode &
      multiplier
    \end{tabular}}
    \centerline{\begin{tabular}{ccc}
      \raisebox{1em}{\includegraphics[scale=.8]{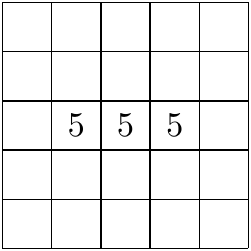}} &
      \includegraphics[scale=.8]{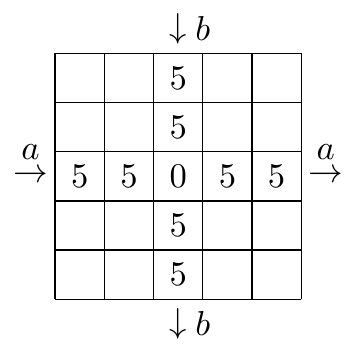} &
      \raisebox{1em}{\includegraphics[scale=.8]{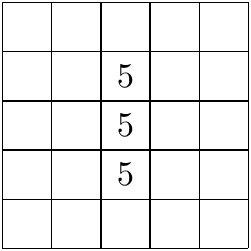}}\\
      cross-over & cross-over & cross-over\\
      ($-1$ in $3^\text{rd}$ dimension) &
      ($0$ in $3^\text{rd}$ dimension) &
      ($1$ in $3^\text{rd}$ dimension)
    \end{tabular}}
    \caption{Implementation of wires and gates with von Neumann neighborhood of
    radius one, in three dimensions ($\theta=6$).
    Only the cross-over gate uses the third dimension
    (it should be thought as three stacked layers),
    all other gates can be considered to live at coordinate $0$ in the third
    dimension.
    A wire is implemented as a
    simple chain of reactions of cell topplings: when it is triggered it
    transmits a 1, if not (\ie it remains stable) it transmits a 0.
    Arrows indicate fan in and fan out (input and output pairs are identified
    in the cross-over gate: a signal coming from input $a$ (resp. $b$) results
    in a signal going to output $a$ (resp. $b$),
    independently of each other), empty cells
    contain no sand grain.
    Remark that appart from central cells of cross-overs,
    initialy empty cells will not topple.}
    \label{fig:VNgates}
  \end{figure}

  Now the idea is that by replacing circuit elements with macrocells we
  can have a single grain addition triggering a
  wire that can be multiplied to implement
  the input constants on the first layer, which can be connected to gate on the second
  layer, {\em etc}, until the output gate which is simply a wire with the
  questionned cell in the center. As noted in \cite{Moore1997}, diodes should
  be added between layers to prevent backward propagation of information
  (specificaly to prevent false 1).

  To finish the reduction, we would like to insist on the fact that crossing of
  wires in the third dimension is only necessary to overcome the non-planarity
  of $\MCVP$. In fact, it is equivalent to have a crossing or a negation gate:
  \begin{itemize}
    \item In \cite{limits} on $\PCVP$: ``\emph{A planar \emph{xor} circuit can be
      built from two each of \emph{and}, \emph{or}, and \emph{not} gates; a planar
      \emph{cross-over} circuit can be built from three planar \emph{xor}
      circuits}''.
    \item In \cite{1999-Moore-complexitySandpiles} Moore and Nilsson wrote:
      ``{\em Using a {\em double-wire} logic where each variable corresponds to
      a pair of wires carrying $x$ and $\bar{x}$, [Bitar, Goles and
      Margenstern] implement negation as well by crossing the two
      wires}''\footnote{interestingly, the next sentence in this quote is:
      ``{\em Since this violates planarity, negation does not appear to be
      possible in two dimensions}''.}.
  \end{itemize}
  We prefer to imagine a {\em single-wire} logic, with {\em cross-over} of
  wires in the third
  dimension. Cell topplings correspond to the circuit computation (truth value
  1 transmitted from layer to layer, going through gates), and the circuit
  outputs 1 if and only if the questioned cell topples.
  
  This reduction is performed in constant parallel time (in $\AC^0$):
  each constant size part of the circuit
  is converted to a sandpile sub-configuration (macrocell) of constant size placed at a
  fixed position, hence in a PRAM model each processor can handle one such part
  (there are polynomially many) in constant time.

  In order to generalize the proof to any dimension $d \geq 3$, simply remark
  that the same construction can be embedded in only three dimensions
  among many, with
  coordinate zero in all other dimensions, provided one adapts the sand content of
  non-empty cells according to $\theta$ (plus two grains for any dimension above
  three).
\end{proof} 

From the proof above we can notice that to simulate a circuit (instance of
$\MCVP$), von Neumann sandpile model of radius one and dimension $d \geq 3$ undergoes a
somewhat simple dynamics:
\begin{itemize}
  \item only three dimensions are used (mainly two),
  \item only cell contents $0$, $\theta-2$ and $\theta-1$ appear in macrocells,
  \item the strong monotonicity of Proposition~\ref{prop:fsmono} holds (any cell
    topples at most once),
  \item we have wires and gates organized in successive layers from inputs to
    output,
  \item all directions of information are known in the reduction,
  \item there are diodes everywhere so that information flows in exactly
    one direction.
\end{itemize}
As a consequence we can consider that the evolution of the obtained $\FSPRED$
instance $(c,x)$ verifies very restrictive conditions.
Let us consider that there is a diode between every pair of gates presented on
Figure~\ref{fig:VNgates}. Then the flow of information between gates and layers
is totally fixed, and we can almost claim that for any pair of neighbouring
cells it is known which one would topple first (in the case both topple).
However this is not completely accurate, as in {\em or} gates for example: if only
one of the two inputs transmits a 1 signal, then some cells going to the other
input topple ``backward''. This is clearly not an issue nor an important
feature.
We formalize how another sandpile model with $d \geq 3$ can perform the same
kind of circuit simulation as von Neumann of radius one in three dimensions.

\begin{lemma}
  \label{lemma:3d}
  If a sandpile model
  of neighborhood $\neighborhood$ has three linearly independent $x,y,z \in
  \neighborhood$ such that $ax+by+cz \notin \neighborhood$ for any $a,b,c \in
  \Z$, except when:
  $$a=\pm1, b=0, c=0 \quad\text{or}\quad a=0, b=\pm1, c=0 \quad\text{or}\quad
  a=0, b=0, c=\pm1,$$
  then it has a $\Poly$-complete $\FSPRED$ problem.
\end{lemma}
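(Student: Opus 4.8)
The plan is to reduce the three-dimensional von Neumann construction of Theorem~\ref{theorem:VN3d} to the present setting by re-embedding its macrocells inside the new neighborhood $\neighborhood$. The key observation is that the three linearly independent vectors $x,y,z$ play exactly the role of the three canonical unit vectors $e_1,e_2,e_3$ in the von Neumann model of radius one: the condition $ax+by+cz\notin\neighborhood$ for all $(a,b,c)$ other than the six sign patterns listed guarantees that, along the sublattice $\Z x+\Z y+\Z z$, the only neighbors of a cell are its six nearest neighbors $\pm x,\pm y,\pm z$. First I would define a map $\phi\colon\Z^3\to\Z^d$ by $\phi(i,j,k)=ix+jy+kz$, which is injective because $x,y,z$ are linearly independent. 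The image $\phi(\Z^3)$ is a three-dimensional sublattice of $\Z^d$, and the hypothesis says precisely that, restricted to this sublattice, $\neighborhood$ induces exactly the von Neumann radius-one neighborhood of dimension three. This lets us transport every gadget (wire, turn, \emph{and}, \emph{or}, diode, multiplier, cross-over) of Figure~\ref{fig:VNgates} through $\phi$ onto $\phi(\Z^3)$.

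The main subtlety, and the step I expect to be the chief obstacle, is that $\neighborhood$ may contain vectors that do \emph{not} lie in $\phi(\Z^3)$, and these ``extra'' neighbor directions could leak sand out of the embedded three-dimensional slab or bring spurious grains in, destroying the intended chain of reactions. I would handle this by placing no grains at all on cells outside $\phi(\Z^3)$, and by raising the sand content of the gadget cells to compensate for the larger threshold $\theta=\sum_{v\in\neighborhood}\distribution(v)$. Concretely, a cell in the embedded circuit should topple exactly when its six relevant in-neighbors (those along $\pm x,\pm y,\pm z$) have delivered enough grains to reach $\theta$; since off-slab neighbors start empty and, as in the original construction, initially empty cells outside cross-over centers never topple (Proposition~\ref{prop:fsmono} gives single toppling), they contribute nothing, so we can calibrate each gadget cell's initial content so that the firing condition along the slab reproduces the original $\theta=6$ behaviour. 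The distribution $\distribution$ may be non-uniform, but this only changes how many grains each of the six directions receives upon a toppling; since every gadget relied only on \emph{which} neighbors fire and not on a uniform amount, one adjusts the constant contents accordingly, exactly as the last paragraph of the proof of Theorem~\ref{theorem:VN3d} adapts contents ``according to $\theta$''.

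With the embedding in place, the reduction itself mirrors Theorem~\ref{theorem:VN3d}: given an instance of $\MCVP$ (fan-in/fan-out two, layered, on a grid), I would lay out the macrocells on $\phi(\Z^3)$, use the $z$-direction for cross-overs to overcome non-planarity, place diodes between gates so that information flows in a single fixed direction, and designate the output gate as a wire whose central cell is the queried cell $x$. A single grain added at the lexicographically minimal cell triggers the input constants and propagates through the layers, and the queried cell topples if and only if the circuit outputs $1$. Finally I would verify that the whole transformation is computable in $\AC^0$: each constant-size piece of the circuit maps to a constant-size sub-configuration at a position that is a fixed integer combination $\phi(i,j,k)$ of $x,y,z$, computable independently and in parallel, and the coordinate bound of Lemma~\ref{lemma:configbound} keeps the configuration within polynomial size. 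Hence $\FSPRED$ for this model is $\Poly$-hard, and by the Corollary it lies in $\Poly$, so it is $\Poly$-complete. The reduction from $\MCVP$ rather than $\CVP$ is what keeps the gadgets monotone, matching the monotone firing dynamics of sandpiles.
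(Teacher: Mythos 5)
Your reduction founders on one concrete point: you read the hypothesis as saying that $\neighborhood$ induces \emph{exactly} the symmetric von Neumann radius-one neighborhood on the sublattice $\Z x+\Z y+\Z z$. It does not. The statement only asserts $x,y,z\in\neighborhood$; the six triples $(\pm1,0,0)$, $(0,\pm1,0)$, $(0,0,\pm1)$ are merely \emph{exceptions to the exclusion}, \ie $-x$, $-y$, $-z$ are \emph{allowed} to belong to $\neighborhood$ but need not. This is no pathological corner case: the lemma must serve Corollary~\ref{coro:3d}, whose neighborhoods satisfy only the positive-span condition of Equation~\eqref{eq:span} and can be wholly asymmetric (the paper's own proof stresses that ``the reverse direction may or may not be in $\neighborhood$''). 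Consequently your plan to transport every gadget of Figure~\ref{fig:VNgates} through $\phi$ unchanged fails: the planar turns route signals in negative lattice directions, and the cross-over of Theorem~\ref{theorem:VN3d} sends a signal to level $-1$ or $+1$ in the third dimension \emph{and back}, so it needs both $+z$ and $-z$ transmission. If $-z\notin\neighborhood$, a toppling cell sends nothing in that direction, the crossed signal never returns to the plane of the circuit, and nothing in your write-up repairs this.

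The paper's proof is essentially your construction plus the missing ingredient: two layout modifications making the simulation \emph{monotone in the directions $x,y,z$}. First, up to a straightforward layout shift (translating each layer far enough sideways that all routing moves become positive), every planar gadget is re-embedded so that information is transmitted only along the two positive directions $x$ and $y$. Second, the cross-over is redesigned to use only $+z$: instead of returning, it permanently shifts the entire downstream implementation of the circuit by $z$, with one cross-over per layer so that the shift stays uniform across each layer. With that done, the rest of your argument is sound and matches the paper: empty cells off the sublattice (which, since each on-slab cell topples at most once, receive at most $\theta-\distribution(x)-\distribution(y)-\distribution(z)<\theta$ grains and hence never topple), initial contents $\theta-\distribution(u)$ or $\theta-\distribution(u)-\distribution(v)$ calibrated to the expected in-neighbors so as to absorb a non-uniform $\distribution$, and an $\AC^0$ transformation from the layered \MCVP instance. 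So the gap is not the embedding $\phi$ or the calibration --- it is the unjustified assumption of bidirectional transmission, which is precisely what the hypothesis withholds and what the paper's two modifications exist to circumvent.
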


\begin{proof}
  Let us add two modifications to the circuit simulation of von Neumann radius
  one in three dimensions presented in the proof of Theorem~\ref{theorem:VN3d}, so that it
  fits the present context.
  First, up to a straightforward layout shift, we can embed all planar gates
  from Figure~\ref{fig:VNgates} with only two directions of information
  transmission: $x$ and $y$.
  Second, the third direction $z$ is used for the
  cross-over, which shifts in this direction the rest of the sandpile
  implementation of the circuit (see Figure~\ref{fig:3d}). We can simply assume
  that there is one cross-over per layer to fix the $z$-shift everywhere.
  These
  coordinate changes do not change the computational complexity of the problem.

  \begin{figure}
    \centerline{\includegraphics[width=\textwidth]{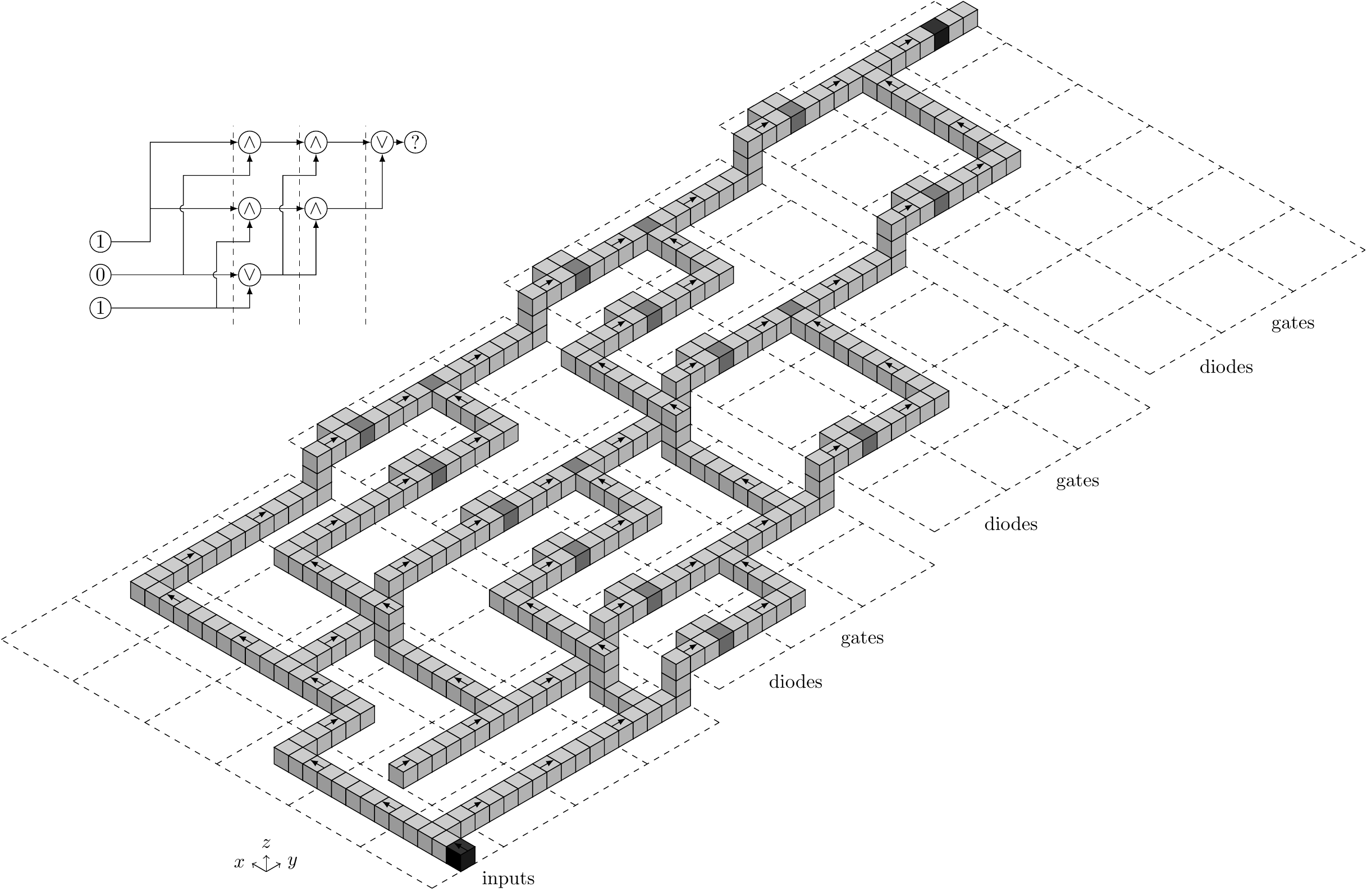}}
    \caption{von Neumann radius one in three dimensions simulating a layered
    circuit,
    with information flowing in three directions which will correspond to
    $x,y,z$. Light colored cubes contain $5$ grains ($\theta-1$), darker cubes
    contain $4$ grains ($\theta-2$), and black cubes are $\{0\}^d$ and the
    questioned cell (they both contain $5=\theta-1$ grains). The circuit
    is implemented with macrocells of size $5 \times
    5$, and a $z$-shift of $2$ units. Arrows indicate the direction of
    information flow.}
    \label{fig:3d}
  \end{figure}
  
  A sandpile model with such $x,y,z$ can simulate at a local level von Neumann
  of radius one in three dimensions (neighbours are given by $x,y,z$), itself
  simulating a circuit as in Figure~\ref{fig:3d}. Indeed, information flows in
  at most one direction ($x,y,z \in \neighborhood$ implies that information
  flows in the expected direction, and the reverse direction may or may not be
  in $\neighborhood$, as it is nor important nor an issue), and in other cases
  transmissions do not interfere one with another from the condition that
  $ax+by+cz \notin \neighborhood$.
  
  Hence taking cells from the three-dimensional grid generated by $\{x,y,z\}$,
  and placing:
  \begin{itemize}
    \item $\theta-\distribution(u)$ grain with $u\in \{x,y,z\}$ for cells with
      $5$ grains in von Neumann ($\theta-1$ in that model),
      depending on the expected in-neighbor
      $u$, or two in-neighbors and the minimum number of sand grains received
      from one of them,
    \item $\theta-\distribution(u)-\distribution(v)$ grains with $u,v \in
      \{x,y,z\}$ for cells with $4$ grains in von Neumann
      ($\theta-2$ in that model), depending on
      the two expected in-neighbors $u,v$,
    \item no grain for empy cells in von Neumann,
  \end{itemize}
  the sandpile model can simulate von Neumann radius one in three dimensions,
  itself simulating a circuit. The transformation is
  performed in constant parallel time, $\AC^0$.
\end{proof}

It follows that all $d$-dimensional sandpile models with $d \geq 3$
(verifying Equation~(\ref{eq:span}))
are $\Poly$-complete to predict.

\begin{corollary}
  \label{coro:3d}
  $\FSPRED$, $\SPRED$ and $\PRED$ are $\Poly$-complete for any sandpile model
  in dimension $d \geq 3$.
\end{corollary}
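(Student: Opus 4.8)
The goal is to establish Corollary~\ref{coro:3d}: that $\FSPRED$, $\SPRED$, and $\PRED$ are $\Poly$-complete for every sandpile model in dimension $d \geq 3$ satisfying the completeness condition~(\ref{eq:span}). The plan is to combine the structural machinery of Lemma~\ref{lemma:3d} with the reduction hierarchy of Proposition~\ref{prop:hierarchy}, reducing the corollary to a single geometric claim about neighborhoods.

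First I would observe that Lemma~\ref{lemma:3d} already delivers $\Poly$-completeness of $\FSPRED$ for any sandpile model whose neighborhood $\neighborhood$ contains three linearly independent vectors $x,y,z$ that generate no spurious short integer combinations (the condition $ax+by+cz \notin \neighborhood$ except for the six unit-axis cases). So the core task is to show that \emph{every} complete $d$-dimensional model with $d \geq 3$ contains a triple $x,y,z$ satisfying the hypothesis of Lemma~\ref{lemma:3d}. This is the step I expect to be the main obstacle. The subtlety is that a given neighborhood might be large and cluttered: any three independent vectors we pick might have some integer combination landing back inside $\neighborhood$. The key idea to resolve this is that we are free to \emph{rescale}: if $x,y,z \in \neighborhood$ are linearly independent, then for a sufficiently large integer $N$ the dilated triple $Nx, Ny, Nz$ spans a sublattice so coarse that any small-coefficient combination $a(Nx)+b(Ny)+c(Nz)$ with $|a|,|b|,|c|$ bounded by the diameter of $\neighborhood$ falls outside $\neighborhood$ unless it is a unit axis vector. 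However $Nx$ need not itself lie in $\neighborhood$, so instead I would pass to a coarser simulation: run the circuit on the sublattice $N\Z^d$, letting each ``logical step'' of the von Neumann simulation of Lemma~\ref{lemma:3d} be realised by a length-$N$ chain of topplings along a straight line in direction $x$ (resp.\ $y$, $z$). Since~(\ref{eq:span}) guarantees $\spanv(\neighborhood)=\Z^d$, and in particular the single-axis moves are available via $x,y,z\in\neighborhood$, a wire segment in direction $x$ is simulated by iterating the elementary move, and the crossover/gate gadgets of Figure~\ref{fig:3d} are placed on the coarse lattice with spacing chosen large enough that non-adjacent gadgets cannot interfere. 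The condition that no unintended combination of the three chosen directions lies in $\neighborhood$ within this coarse spacing is exactly what prevents signals from leaking between wires, and it holds automatically once the spacing exceeds $r=\max\{|v|_\infty \mid v\in\neighborhood\}$.

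Granting the geometric claim, the completeness of $\FSPRED$ follows directly from Lemma~\ref{lemma:3d} (membership in $\Poly$ is the Corollary to Theorem~\ref{theorem:poly}, and $\Poly$-hardness is the reduction from $\MCVP$). To obtain the results for $\SPRED$ and $\PRED$ I would then invoke Proposition~\ref{prop:hierarchy}, namely $\FSPRED \red{\AC^0} \SPRED \red{\AC^0} \PRED$: since $\FSPRED$ is $\Poly$-hard and the reductions are in $\AC^0 \subseteq \Poly$, both $\SPRED$ and $\PRED$ inherit $\Poly$-hardness, while the same Corollary to Theorem~\ref{theorem:poly} places all three problems in $\Poly$. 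This closes the argument for all $d \geq 3$ simultaneously, because the embedding of Lemma~\ref{lemma:3d} uses only three coordinate directions and sets the remaining coordinates to zero, adjusting cell contents by $\theta$ as described in the proof of Theorem~\ref{theorem:VN3d}.

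The cleanest way to present the whole thing is therefore: (i) reduce Corollary~\ref{coro:3d} to verifying the hypothesis of Lemma~\ref{lemma:3d} for an arbitrary complete model; (ii) prove the geometric existence of a suitable triple by picking any three independent vectors of $\neighborhood$ and working on a coarse sublattice with spacing larger than $r$; and (iii) lift $\FSPRED$-hardness to $\SPRED$ and $\PRED$ via Proposition~\ref{prop:hierarchy}. I expect step (ii) to carry all the real content, and the temptation to avoid is to assume the three chosen neighbors are already ``primitive'' in a convenient basis — in general they are not, which is precisely why the coarse-lattice rescaling, rather than a naive choice of axes, is needed.
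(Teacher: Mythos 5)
Your step (iii) --- lifting $\Poly$-hardness from $\FSPRED$ to $\SPRED$ and $\PRED$ via Proposition~\ref{prop:hierarchy}, with membership in $\Poly$ from Theorem~\ref{theorem:poly} --- is exactly the paper's closing move, and you are right that all the real content sits in step (ii). But your step (ii) has a genuine gap. Lemma~\ref{lemma:3d} requires $x,y,z\in\neighborhood$, so it cannot be invoked for the dilated triple $Nx,Ny,Nz$, which in general lies outside the neighborhood, as you yourself note. Your substitute --- realising each logical step by a length-$N$ chain of topplings in direction $x$ --- does not escape the problem the lemma's hypothesis was designed to kill, because the interference is \emph{local}, not global. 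Consecutive cells of such a chain are at difference exactly $x$, hence at $\ell_\infty$-distance at most $r$; no choice of $N$ or of gadget spacing changes that. If, say, $2x\in\neighborhood$, a toppling cell feeds the cell two steps ahead in its own wire; if $kx+jy\in\neighborhood$ for small $k,j$, grains leak around turns and into gate cells; and at the crossover core, combinations $ax+by+cz\in\neighborhood$ let the two crossing signals interact. Such spurious deliveries can cause premature topplings (breaking \emph{and} gates) or multiple topplings at \emph{or} cells (sending doubled pulses downstream, which can fire a later \emph{and} gate off a single input). Absorbing all of this into adjusted preloads would amount to redesigning the entire simulation of Lemma~\ref{lemma:3d}; your proposal asserts rather than proves that spacing beyond $r$ handles it ``automatically'', and it does not, precisely because intra-wire and intra-gadget distances are forced to stay at most $r$.

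The paper's actual route through step (ii) is different and purely extremal: choose $x\in\neighborhood$ of maximal Euclidean norm, then $y\in\neighborhood$ maximizing the norm of its projection orthogonal to $x$, then $z\in\neighborhood$ maximizing the norm of its projection orthogonal to the plane of $x,y$; such a linearly independent triple exists by Equation~(\ref{eq:span}). Maximality then rules out the forbidden combinations directly: $ax\in\neighborhood$ with $|a|\geq 2$ would contradict the maximality of $\lVert x\rVert$, and a combination with two non-zero coefficients lying in $\neighborhood$ either contradicts the maximality of one of the projections or yields an equally extremal vector that can replace $y$ or $z$. This verifies the hypothesis of Lemma~\ref{lemma:3d} for the given neighborhood itself, with no rescaling and no modification of the simulation. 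So you correctly isolated the crux and correctly anticipated that a naive choice of axes fails, but the coarse-sublattice rescaling is not a valid repair; the missing idea is the extremal (maximal-norm, then maximal-orthogonal-projection) selection of $x,y,z$.
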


\begin{proof}
  Let $M=\structure{\neighborhood,\distribution,\theta}$ be a sandpile model in
  dimension $d \geq 3$, and let $\lVert u \rVert$ denote the Euclidean norm
  ($\ell_2$-norm) of cell $u \in \Z^d$. We define $x,y,z$ as follows.
  \begin{itemize}
    \item $x$ is a cell inside $\neighborhood$ of maximal norm, {\em i.e.}
      $$x \in \arg\max_{u \in \neighborhood} \{ \lVert u \rVert \}.$$
    \item $y$ is a cell inside $\neighborhood$ of maximal norm when projected
      onto the $(d-1)$-dimensional subspace orthogonal to the line defined by
      points $\{0\}^d,x$, {\em i.e.}
      $$y \in \arg\max_{u \in \neighborhood} \{ \lVert p_{x^\bot}(u) \rVert \}$$
      with $p_{x^\bot}$ the projection onto $\{ v \in \Z^d \mid v \cdot x = 0 \}$.
    \item $z$ is a cell inside $\neighborhood$ of maximal norm when projected
      onto the $(d-2)$-dimensional subspace orthogonal to the plane defined by
      points $\{0\}^d,x,y$, {\em i.e.}
      $$z \in \arg\max_{u \in \neighborhood} \{ \lVert p_{xy^\bot}(u) \rVert \}$$
      with $p_{xy^\bot}$ the projection onto $\{ v \in \Z^d \mid \forall a,b
      \in \Z: v \cdot (ax+by) = 0 \}$.
  \end{itemize}
  From Equation~(\ref{eq:span}) such $x,y,z$ exist and are non-colinear.

  Furthermore, it is always possible to choose $x,y,z$ such that $ax+by+xz
  \notin \neighborhood$ for $a,b,c \in \Z$ different than in the statement of
  Lemma~\ref{lemma:3d}. Indeed, a linear combination with two non-null
  components such that $ax+by+xz \in \neighborhood$ either gives another vector
  of maximal norm that may replace one of $y,z$ (if the projection onto $x^\bot$ or
  $xy^\bot$ is negative on all components, then it becomes positive on one
  component), or contradicts the maximality of $x$, $y$ or $z$ (if the
  projection onto $x^\bot$ or $xy^\bot$ is positive on one component).

  From Lemma~\ref{lemma:3d} we can conclude that $\FSPRED$ is $\Poly$-complete,
  and Proposition~\ref{prop:hierarchy} implies that $\SPRED$ and $\PRED$ are also
  $\Poly$-complete.
\end{proof}

Writting formal proofs of $\Poly$-completeness via reduction from some $\CVP$
problem requires a substantial amount of precisions, and some obvious details
are often not mentionned, though they may be key in other contexts (such as the
symmetry of von Neumann neighborhood and the dynamics of layers equipped with
diodes in the proof of Theorem \ref{theorem:VN3d}). In \cite{gmpt17} the
authors present a general framework to prove $\Poly$-completeness using Banks'
technic in cellular automata. A neat formalism is introduced, defining what is
meant by ``{\em simulating a gate set}''\footnote{in a nutshell: a macrocell
must be in some {\em valid} state, and depending on the state of its neighboring macrocells,
change to another {\em valid} state, within some common delay so that the
simulation remains synchronised. Valid states ensure that
nothing unexpected happens.}. Then results are presented, of
the form: if
a cellular automaton can simulate a set of gates $A$ (for example $A=\set {\text{{\em
and}, {\em or}, {\em cross-over}}}$), then its prediction problem is hard for
some class $C$ (in this example $C=\Poly$). A novel distinction appears to be
fundamental for the dynamical complexity of discrete dynamical systems: whether
is is possible to build {\em re-usable simulation} of wires and gates, or not
({\em weak simulation}). Indeed, re-usable simulation brings
$\Poly$-completeness from {\em and} and {\em or} gates only, because it is
possible to build a planar cross-over gadget with re-usable
simulation of monotones gates. This result is surprising
compared to the characterization of Boolean gates allowing planar
cross-over presented in~\cite{mccoll81}, which is not the case of
any set of monotone gates. It comes from the dynamical nature of
circuit simulation with discrete dynamical systems, and the
possibility to re-use wires that is not present in the original
circuit model.

In the context of sandpile models, circuit simulation in two dimensions is
harder to achieve precisely because of the difficulty to create cross-over (see
Section \ref{s:2d}). The planar cross-over gadget in re-usable simulation seems
however difficult to apply to sandpile models, since it exploits delays in
signal transmissions, which is in contradiction with the Abelian
Proposition~\ref{prop:abelian} telling that order of topplings do not matter.

\section{The two-dimensional case}
\label{s:2d}

No two-dimensional sandpile model is known to be efficiently predictable in
parallel, \ie such that its prediction problem is in $\NC$.
As we will see, some
slight extensions of von Neumann sandpile model of radius one turn out to have
$\Poly$-complete prediction problems, but the computational complexity of
predicting the original model of Bak, Tang and Wiesenfeld
\cite{1987-BakTangWiesenfeld-SOC} remains open. This is
considered as the major open problem regarding the complexity of prediction in
sandpile models. It is also open for Moore neighborhood of radius one.

\begin{open}
  Consider the von Neumann and Moore sandpile models in dimension two. Are $\PRED,
  \SPRED$ and $\FSPRED$ in $\NC$, $\Poly$-complete, or neither?
\end{open}

The third possibility comes from the fact that under the assumption
$\NC \neq \Poly$,
there exist problems in $\Poly$ that are neither in $\NC$ nor $\Poly$-complete
\cite{regan1997}. The difficulty in applying Banks' approach here is to
overcome planarity imposed by the two-dimensional grid with von Neumann or
Moore neighborhood, since the monotone planar circuit value problem ($\MPCVP$)
is in $\NC$. In fact, it has been proven to be impossible to perform {\em
elementary forms of signal cross-over} in this model
\cite{2006-Goles-CrossingInfo2DBTWSandpile}. The precise statement is a bit
technical to state\footnote{because an impossibility result
requires to define in full generality what is considered as a cross-over.}, but
corresponds neatly to the intuition of having two
potential sequences of cell topplings representing two wires that may convey a
bit of information, and cross each other without interacting ({\em i.e.} they
independently transport information). We shall call these {\em elementary forms
of signaling} since other ways to encode the transportation of information in
sandpile may be found, but as emphasized by Delorme and Mazoyer in
\cite{Delorme2001} to quantify over such possible encodings and give
fully general
impossibility results, is an issue.

\begin{theorem}[\cite{2006-Goles-CrossingInfo2DBTWSandpile}]
  It is impossible to perform {\em elementary forms of signal cross-over} in von
  Neumann and Moore neighborhoods of radius one.
\end{theorem}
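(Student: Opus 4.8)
The plan is to argue by contradiction, showing that the omnidirectional and monotone nature of topplings for these small neighborhoods prevents the two signals from passing through a common region without interfering. First I would fix a formalization of an \emph{elementary cross-over}: a finite sub-configuration with four designated ports, two inputs $A_{in},B_{in}$ and two outputs $A_{out},B_{out}$, arranged in this cyclic order around its boundary, where a signal $1$ on a wire means that a prescribed chain of cells topples and a signal $0$ means the wire stays idle. The gadget would \emph{cross over} the two bits if, for each of the four inputs $(a,b)\in\set{0,1}^2$, the output cell $A_{out}$ topples if and only if $a=1$ and $B_{out}$ topples if and only if $b=1$, the two behaviours being independent.

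Next I would exploit monotonicity. By the Abelian property (Proposition~\ref{prop:abelian}) together with the strong monotonicity of avalanches (Proposition~\ref{prop:fsmono}), each cell topples at most once and the set $T_{ab}$ of cells that eventually topple under input $(a,b)$ is monotone in the input, so that $T_{10}\cup T_{01}\subseteq T_{11}$ while $T_{00}$ is minimal. In particular $A_{out}\in T_{10}\setminus T_{01}$ and $B_{out}\in T_{01}\setminus T_{10}$, and each output arm is fed, in the case where its signal is $1$, by a connected chain of toppling cells running from the corresponding input port.

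The heart of the argument is a planarity obstruction. Because the four ports appear in the cyclic order $A_{in},B_{in},A_{out},B_{out}$, a Jordan-curve argument shows that the $A_{in}$--$A_{out}$ toppling chain in $T_{10}$ and the $B_{in}$--$B_{out}$ toppling chain in $T_{01}$ must cross inside the gadget. For $\neighborhoodVN$ of radius one (only axis moves) two crossing chains are forced to share a cell; for $\neighborhoodM$ they either share a cell or meet diagonally inside a $2\times2$ block whose four cells are pairwise Moore-adjacent. In either case there is a cell $w$ (or a tiny block around it) that topples both in $(1,0)$ and in $(0,1)$ and that is a neighbor of both output arms. Since the firing of $w$ is a single deterministic event---whenever its content reaches $\theta$ it emits the fixed distribution $\distribution$ to all of $w+\neighborhood$, regardless of which signal triggered it---the grains delivered into the $B_{out}$ arm by the crossing region are identical in $(1,0)$ and in $(0,1)$; and any alternative route feeding $B_{out}$ would itself have to cross the $A$-chain and hence also be active in $(1,0)$. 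Combined with monotonicity this forces $B_{out}$ to topple in $(1,0)$ as well, contradicting $B_{out}\notin T_{10}$ (and symmetrically $A_{out}$ would topple in $(0,1)$).

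The main obstacle is making this crossing step fully general, which is exactly why the precise statement is delicate: one must cover every conceivable gadget, in particular rule out designs that try to gate an output arm by an \textit{and} of ``$w$ fired'' with a second, independently routed copy of the incoming signal. Here the radius-one locality and the bounded threshold ($\theta=4$ for $\neighborhoodVN$, $\theta=8$ for $\neighborhoodM$) are essential: they cap how much state the crossing region can store and re-route, so no initial grain pattern can separate the two bits once their chains are topologically forced to meet.
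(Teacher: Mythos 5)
You are attempting to prove a strictly stronger statement than the one the paper actually asserts, and the two decisive steps of your argument break at exactly the point where the extra strength is needed. The survey itself gives no proof here: it imports the theorem from the cited work, and its footnote stresses that the formalization is restricted to \emph{elementary} signals --- single chains of topplings in which each cell is fired by its predecessor --- precisely because quantifying over all conceivable gadgets and encodings is not resolved (cf.\ the remark about Delorme and Mazoyer). Your four-port gadget formalization quantifies over all gadgets, and your first tool for taming them, the single-toppling property, is not available: Proposition~\ref{prop:fsmono} holds only for a grain added at the lexicographically minimal cell of a configuration in the elementary hypercube, whereas a signal entering an interior port of a gadget satisfies no such hypothesis, and in general cells may topple many times (cf.\ Proposition~\ref{prop:smost} and the discussion after Theorem~\ref{theorem:1d}). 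A putative cross-over could in principle exploit multiple topplings, and your proof never excludes this.

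The second and more serious gap is the concluding step. From the fact that the shared cell $w$ fires identically in the inputs $(1,0)$ and $(0,1)$ you infer that $B_{out}$ fires in $(1,0)$; but identical output of $w$ forces identical downstream behaviour only if the $B$-cells beyond the crossing receive grains from nothing but $w$. That is true by definition for an elementary signal --- and there the contradiction is immediate and quantitative: the $B$-cell following $w$ receives exactly one grain (from $w$) in both cases, so its fixed initial content makes it fire in both or in neither --- but it is false for a general gadget, where that cell may be and-gated by a second feeder active only in $(0,1)$. Your patch, ``any alternative route feeding $B_{out}$ would itself have to cross the $A$-chain and hence also be active in $(1,0)$,'' is a non sequitur: topologically crossing the active $A$-chain does not activate a route, since the cells of the alternative route downstream of its own crossing point can be initialized below threshold and simply fail to relay. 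Chasing this objection produces an infinite regress (each gate needs an auxiliary feeder, each feeder needs another crossing), and converting that regress into a finite contradiction is exactly the hard content you leave open, as your own last paragraph concedes. Finally, the Moore case needs a genuinely separate analysis --- two diagonal chains can cross with \emph{no} shared cell, and the grain count inside the $2\times 2$ block (where each passing $A$-cell deposits grains on both $B$-cells, with $\theta=8$ and the constant distribution) is different from the shared-cell count --- which you gesture at but do not carry out. In short: your topological ingredient (a discrete Jordan-curve argument forcing the chains to meet) and your physical intuition (omnidirectional, deterministic firing leaks one signal into the other) are the right ones and do close the argument for elementary signals with von Neumann neighborhood, but the proposal as written neither proves the cited theorem (Moore case missing, local counting only asserted) nor the stronger gadget-level statement it announces.
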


However, as soon as the neighborhood is a bit extended, it turns out to be possible to
perform a cross-over, as expressed in the following result.

\begin{theorem}[\cite{2010-FormentiGolesMartin-KSPMAP,2006-Goles-CrossingInfo2DBTWSandpile}]
  In two dimensions, von Neumann neighborhood of radius $r \geq 2$ has
  $\Poly$-complete prediction problems $\FSPRED$, $\SPRED$ and $\PRED$. It is
  also the case for the non-deterministic Kadanoff sandpile model of radius $r
  \geq 2$.
\end{theorem}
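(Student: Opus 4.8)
The plan is to establish $\Poly$-completeness by reduction from $\MCVP$, following Banks' approach as in the proof of Theorem~\ref{theorem:VN3d}, but now confined to two dimensions. The key difficulty compared with the three-dimensional case is that we no longer have a third dimension available to route wires over one another, so the entire burden of the construction lies in realizing a planar cross-over gadget. Once such a gadget exists, together with the planar \emph{and}, \emph{or}, wire, turn, diode and multiplier gadgets (which can be built essentially as in the two top rows of Figure~\ref{fig:VNgates}), we obtain a simulation of an arbitrary (non-planar) monotone circuit, and hence $\Poly$-hardness; membership in $\Poly$ is already granted by Theorem~\ref{theorem:poly} and its corollary.

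First I would exhibit explicit macrocell gadgets for the extended neighborhood. For von Neumann of radius $r \geq 2$, the extra reach in each axis direction gives room to design a cross-over: two chains of topplings crossing at a central region can be arranged so that a signal entering on the $a$-axis exits only on the $a$-axis, and likewise for $b$, without the two interacting. This is exactly the \emph{elementary form of signal cross-over} that is impossible for $r=1$ (by the cited impossibility theorem) but becomes available once $r\geq 2$. The crucial verification is local: one checks that in every combination of inputs ($00$, $01$, $10$, $11$) the central cells topple in the intended pattern and that a signal on one wire never spuriously triggers the perpendicular output wire. I would then confirm that each gadget respects the strong monotonicity of Proposition~\ref{prop:fsmono} (each cell toppling at most once when the construction is cast as an $\FSPRED$ instance with the added grain at the lexicographically minimal cell feeding the input-constant layer), and that diodes prevent backward propagation between layers as in Theorem~\ref{theorem:VN3d}.

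For the Kadanoff model of radius $r\geq 2$, I would argue analogously, noting that its neighborhood $\neighborhood_K=\{-1,r\}$ with distribution sending $r$ grains backward and one grain forward, when promoted to two dimensions, likewise affords enough geometric freedom for a cross-over gadget; the non-deterministic update is harmless because, by the Abelian Proposition~\ref{prop:abelian}, the stable configuration and the per-cell odometer are independent of the update order, so the circuit value computed by the simulation is well defined. In both models the reduction itself is carried out in $\AC^0$: each constant-size piece of the input circuit is mapped to a constant-size sandpile subconfiguration placed at a fixed grid position, so a PRAM assigns one processor per gate and finishes in constant time, exactly as in the proof of Theorem~\ref{theorem:VN3d}. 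Finally, Proposition~\ref{prop:hierarchy} lifts the $\FSPRED$-hardness to $\SPRED$ and $\PRED$.

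The main obstacle I expect is the explicit design and correctness proof of the cross-over gadget: unlike the three-dimensional construction, where crossing is geometrically trivial, here one must use the extended radius to thread two independent toppling signals through a shared central region and then prove by an exhaustive local case analysis that they do not interfere. Everything else — wires, turns, monotone gates, diodes, multipliers, the layered layout, and the $\AC^0$ bookkeeping — is a routine adaptation of the planar elements already displayed for the three-dimensional von Neumann model, so the real content of the theorem is precisely the point where the $r=1$ impossibility result breaks down and an $r\geq 2$ cross-over becomes constructible.
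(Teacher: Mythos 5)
Your treatment of the von Neumann half follows essentially the same route as the cited works, as the survey itself describes it: membership in $\Poly$ from Theorem~\ref{theorem:poly}, hardness by Banks' reduction from \MCVP with the planar gadgets of Theorem~\ref{theorem:VN3d}, the whole content concentrated in a radius-$r\geq 2$ cross-over gadget (exactly where the $r=1$ impossibility theorem of \cite{2006-Goles-CrossingInfo2DBTWSandpile} ceases to apply), an $\AC^0$ transformation, and Proposition~\ref{prop:hierarchy} to lift from \FSPRED to \SPRED and \PRED. That part is sound and is the intended argument.

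The genuine gap is in the Kadanoff half. The two-dimensional Kadanoff model of \cite{2010-FormentiGolesMartin-KSPMAP} is \emph{not} a sandpile model in the sense of Section~\ref{s:def}: it is the non-deterministic application of the one-dimensional Kadanoff rule in either of the two directions of the plane (subject to a monotonicity constraint), so there is no single distribution $\distribution$ defining the dynamics, and Proposition~\ref{prop:abelian} does not apply to it. The non-determinism lies in \emph{which rule} is applied at an unstable cell, not merely in the order of firings, and confluence genuinely fails: the stable configuration reached depends on the choices made. This is precisely why the prediction problem for this model is defined existentially --- does there \emph{exist} an avalanche reaching the questioned cell --- rather than by appeal to a well-defined limit configuration. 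Your claim that ``the non-deterministic update is harmless because, by the Abelian Proposition~\ref{prop:abelian}, the stable configuration and the per-cell odometer are independent of the update order'' is therefore false for this model, and it hides the actual correctness obligations of the reduction: one must show that \emph{some} sequence of choices realizes the intended circuit computation, and that \emph{no} sequence of choices can produce a spurious signal at the questioned cell. Relatedly, membership in $\Poly$ for the existential problem does not follow from Theorem~\ref{theorem:poly}, which concerns deterministic sandpile models; it follows instead from the fact that in this model every cell topples at most once (the analogue of Proposition~\ref{prop:fsmono}, which is why the problem corresponds to \FSPRED), so the set of cells reachable by some avalanche can be computed greedily in polynomial time.
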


In dimension two, the radius $r$ {\em Kadanoff} sandpile model
is defined in \cite{2010-FormentiGolesMartin-KSPMAP} as the
non-deterministic application of the
one-dimensional Kadanoff sandpile model (see Remark
\ref{remark:kadanoff}) in the two directions of the plane,
plus a monotonicity property that needs to be preserved, and
the prediction problem asks for the existence of an avalanche
reaching the questioned cell (in the circuit implementation
any cell topples at most once, hence it corresponds to $\FSPRED$).

Augmenting a little bit the neighborhood allows to perform cross-over and
simulate $\MCVP$ instances. More has been said in \cite{np18} on cross-over
impossibility, and as an immediate corollary we have that elementary forms of signal
cross-over are impossible when the graph supporting the dynamics is planar
(which is the case for von Neumann of radius one, but not of greater radii).

\begin{lemma}[\cite{np18}]
  If a sandpile model can implement an {\em elementary form of cross-over},
  then it can implement one such that the two {\em elementary signals} do not
  have any cell in common. This is true on any Eulerian digraph.
\end{lemma}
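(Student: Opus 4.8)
The plan is to fix the sandpile model once and for all --- the Eulerian digraph $G$ together with its distribution $\distribution$ --- and to allow \emph{only} the configuration to vary, so that the cross-over produced in the end lives on the very same model. Starting from a configuration $c$ that implements an elementary cross-over, for each $(\alpha,\beta)\in\{0,1\}^2$ I would let $c_{\alpha\beta}$ be $c$ with $\alpha$ (resp. $\beta$) extra grains at the input cell $a$ (resp. $b$), and set $u_{\alpha\beta}$ to be the odometer $\odometer(c_{\alpha\beta},c')$ of its stabilisation; by the Abelian property (Proposition~\ref{prop:abelian}) each $u_{\alpha\beta}$ is well defined, independent of the toppling order. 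I would then define the two \emph{elementary signals} as the supports $A=\{x\in\Z^d:u_{10}(x)>u_{00}(x)\}$ and $B=\{x\in\Z^d:u_{01}(x)>u_{00}(x)\}$, and record that independence of the cross-over means the toppling status of output $a$ depends only on $\alpha$, and that of output $b$ only on $\beta$. The goal is a configuration for this same model whose two elementary signals are vertex-disjoint.

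First I would translate the question into one about flows. Since $G$ is Eulerian every cell has equal in- and out-degree, so along any stabilisation grains are conserved cell by cell except at the inputs and outputs: reading a toppling of cell $x$ as one unit leaving along each out-arc of $x$, the function $f=u_{11}-u_{00}$ induces an arc-flow that is balanced at every internal cell, with imbalances only at the two sources (inputs) and two sinks (outputs). Independence forces this net flow to carry one unit from input $a$ to output $a$ and one unit from input $b$ to output $b$, so the flow-decomposition theorem splits $f$ into an $a$-to-$a$ directed path, a $b$-to-$b$ directed path, and internal circulations.

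The heart of the argument would be to upgrade this to a decomposition in which the two paths are \emph{vertex}-disjoint, and then to realise it by an honest configuration of the fixed model. Whenever the two paths meet at a cell $v$, the Eulerian balance at $v$ lets me re-pair the in- and out-arcs the two signals use; the delicate point is that a naive un-crossing swaps the two sinks, routing signal $a$ to output $b$. To avoid this I would spend the circulations from the decomposition, together with the surplus balanced arcs guaranteed by the Eulerian condition, to detour one signal around $v$ while preserving the pairing $a\!\to\!a$, $b\!\to\!b$; iterating over the finitely many shared cells yields two vertex-disjoint directed paths. Finally I would turn this routing back into sandpile data: prime each cell of the two detoured paths one grain below $\theta$, fill the rest so no spurious toppling is possible, and verify via Proposition~\ref{prop:abelian} that adding a grain at an input triggers exactly the corresponding path and nothing else.

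The main obstacle will be precisely this pairing-preserving un-crossing: generic flow un-crossing swaps endpoints, and it is the Eulerian hypothesis that supplies the extra balanced arcs needed to reroute without swapping --- which is also why the statement is asserted exactly for Eulerian digraphs. A secondary but genuine difficulty is the closing step, namely certifying that the rerouted combinatorial paths are realised as independent avalanches of the fixed model rather than as merely abstract flows.
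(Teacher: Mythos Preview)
The paper does not contain a proof of this lemma: it is quoted verbatim from~\cite{np18} as an external result, with no argument given in the survey itself. There is therefore nothing in the present paper to compare your proposal against.

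On the substance of your sketch: the flow-decomposition idea is natural, but the step you yourself flag as the main obstacle is not actually carried out. You assert that the Eulerian hypothesis ``supplies the extra balanced arcs needed to reroute without swapping'', yet you give no mechanism for this. Generic un-crossing of two paths at a shared vertex $v$ does swap endpoints, and the circulations left over from flow decomposition need not pass through $v$ at all, so it is unclear what you would ``spend'' to detour one signal around $v$. The Eulerian condition guarantees $\deg^+(v)=\deg^-(v)$, but both signals already consume one in-arc and one out-arc each; nothing forces a third out-arc at $v$ to exist, let alone to lead anywhere useful. This is the crux of the lemma and your sketch does not resolve it.

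The closing step is also more delicate than you indicate: priming each cell on the two paths to $\theta-1$ does not by itself guarantee that a toppling at the start of path~$A$ propagates only along~$A$, since a cell on~$A$ may have out-neighbours on~$B$ (vertex-disjointness does not imply arc-disjointness of neighbourhoods), and you have not specified how the ``fill the rest'' part prevents leakage.
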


Also, the fact that a neighborhood can or cannot perform cross-over (given the
distribution $\distribution_1$ sending $1$ grain to each out-neighbor) is
intrinsically discrete. Indeed, if one thinks about the {\em shape} of some
neighborhood as a continuous two-dimensional region that we can scale and
place on a grid to get a (discrete) neighborhood, then impossibility to perform
cross-over cannot be characterized in terms of shape.

\begin{theorem}[\cite{np18}]
  Any shape can perform cross-over starting from some scaling ratio.
\end{theorem}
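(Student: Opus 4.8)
The plan is to make precise what the statement claims and then construct the cross-over by scaling. First I would fix a notion of \emph{shape}: a bounded region $S\subset\R^2$ (say, open and containing the origin in its interior, or more conveniently a finite set of directions/vectors together with a continuous region), and define the discretization at scale $\rho$ to be the neighborhood $\neighborhood_\rho = (\rho\cdot S)\cap\Z^2$, taking the distribution $\distributionone$ sending one grain to each out-neighbor. The claim is that for all sufficiently large $\rho$ the model $\structure{\neighborhood_\rho,\distributionone,|\neighborhood_\rho|}$ can implement an elementary form of cross-over, hence by the reductions surveyed above has $\Poly$-complete prediction. I would state the theorem in this form so that ``starting from some scaling ratio'' means ``there exists $\rho_0$ such that for all $\rho\geq\rho_0$''.

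The core of the argument is geometric. As $\rho$ grows, $\neighborhood_\rho$ contains more and more lattice points, and in particular it contains lattice vectors in essentially every direction covered by $S$; the key quantitative fact is that the set of \emph{directions} realized by primitive vectors inside $\rho S$ becomes dense, and the number of lattice points grows like $\rho^2\,\mathrm{area}(S)$. I would first argue that for large $\rho$ one can select four vectors $u,-u,v,-v\in\neighborhood_\rho$ spanning $\Z^2$ (this uses that $S$ contains a genuine two-dimensional region, so Equation~\eqref{eq:span} holds), giving two transversal ``channels'' along which signals can propagate as chains of topplings. The second and decisive ingredient is to show that these two channels can be made to cross \emph{without interfering}: one exhibits two elementary signals (two sequences of candidate toppling cells) whose supports are disjoint except that the geometry forces them to visually cross in the plane, yet no cell of one channel is an in-neighbor or out-neighbor of a cell of the other, so toppling along one channel never triggers the other. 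Here I would invoke the preceding Lemma of \cite{np18}: it suffices to find \emph{any} elementary cross-over, since that lemma then upgrades it to one where the two signals share no cell.

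The main obstacle, and the step I expect to require real work, is the \emph{non-interference} at the crossing point. For von Neumann radius one the earlier impossibility theorem (\cite{2006-Goles-CrossingInfo2DBTWSandpile}) shows this cannot be done, so the whole content of the present theorem is that enough extra neighborhood vectors let one route the two signals past each other in distinct ``layers'' of the lattice, exploiting that at large scale there are neighbor vectors long enough to jump over the transversal channel. Concretely, I would lay the two signals along lattice lines of directions $u$ and $v$ and arrange, at the meeting region, that the cells of channel $u$ and channel $v$ occupy disjoint residue classes / are separated by more than $r=\max\{|w|_\infty\mid w\in\neighborhood_\rho\}$ in the orthogonal direction, so that no toppling of one reaches the other; the grain contents are then tuned cell-by-cell (as in the proof of Lemma~\ref{lemma:3d}) so that each channel fires exactly along its intended path. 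The delicate point is guaranteeing simultaneously that (i) each channel is a valid propagating signal under $\distributionone$ and (ii) the two channels are mutually $\neighborhood_\rho$-invisible at the crossing; I would handle this by choosing $\rho$ large enough that the available direction set is dense enough to pick nearly-orthogonal $u,v$ with a large gap, reducing the construction to a finite local check at the crossing macrocell. Once the interference-free crossing gadget is built, combining it with the wire, turn, \emph{and}, \emph{or}, diode and multiplier gadgets (available since Equation~\eqref{eq:span} lets us emulate von Neumann-style gates as in Lemma~\ref{lemma:3d}) yields a planar simulation of $\MCVP$ with cross-over, and $\Poly$-completeness of $\FSPRED$ (hence of $\SPRED$ and $\PRED$ via Proposition~\ref{prop:hierarchy}) follows; having an elementary cross-over is exactly the property ruled out in radius one, so the theorem expresses that cross-over becomes achievable at large scale for every shape.
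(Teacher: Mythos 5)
Your proposal cannot be checked against an in-paper argument: the survey states this theorem as a citation of \cite{np18} and gives no proof, so what follows compares your sketch with what such a construction must actually accomplish. Your skeleton is sensible (discretize the scaled shape into $\neighborhood_\rho=(\rho S)\cap\Z^2$ with distribution $\distributionone$, build wires along directions of the shape, reduce everything to exhibiting one elementary cross-over, invoke the lemma of \cite{np18} to make the two signals cell-disjoint, conclude $\Poly$-completeness of $\FSPRED$ and then $\SPRED$, $\PRED$ via Proposition~\ref{prop:hierarchy}). But the decisive step is wrong as you state it. You propose to route the two channels so that \emph{no cell of one is an in- or out-neighbor of a cell of the other}, separated ``by more than $r$'' at the crossing. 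For a filled shape this is geometrically impossible at a genuine planar crossing: every hop of a wire has length at most the neighborhood diameter (about $r$), consecutive cells of the transversal wire are likewise spaced at most $r$ apart, so near the crossing point some cell of each wire inevitably lies within $\neighborhood_\rho$-range of a cell of the other --- the neighborhood of a scaled filled region has no ``hole'' to jump through. Indeed, if mutual invisibility were achievable, crossing would be easy and the impossibility theorem of \cite{2006-Goles-CrossingInfo2DBTWSandpile} would not be the obstruction this whole section is about. Your auxiliary residue-class idea fails for the same reason: for large $\rho$ the set $\neighborhood_\rho$ contains vectors in every residue class.

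Consequently the real content of the theorem, which your sketch compresses into ``grain contents are then tuned cell-by-cell,'' is to make the wires robust against \emph{unavoidable} grain leakage at the crossing. Since sandpile dynamics is monotone under grain additions, stray grains can only cause spurious firings, never prevent intended ones; and a single-chain wire whose next cell sits at $\theta-1$ grains is hair-triggered by any stray grain. So one needs wires whose cells require $k\geq 2$ grains from $k$ toppling predecessors within range, with the crossing geometry guaranteeing that each wire's cells receive strictly fewer than $k$ grains from the passage of the other signal, and one must then verify correct behaviour in all three cases (only $a$ fires, only $b$ fires, both fire --- the last being delicate precisely because received grains accumulate). None of this mechanism appears in your proposal. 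Two smaller gaps: the symmetric quadruple $u,-u,v,-v$ need not exist at any scale, since completeness (Equation~\eqref{eq:span}) only gives \emph{positive} spanning (think of a shape whose directions cluster around three rays at $120^\circ$), so wires, turns and diodes must be assembled from positively-spanning directions; and ``starting from some scaling ratio'' asserts the property for \emph{all} $\rho\geq\rho_0$, which does not follow from exhibiting one good $\rho$ because the discretization $(\rho S)\cap\Z^2$ changes non-monotonically with $\rho$, so a uniform construction in $\rho$ is required.
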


Even a circle, with a big enough scaling ratio, gives a neighborhood that can perform
cross-over and have a $\Poly$-complete prediction problem.

The precise conditions for $\Poly$-completeness of the prediction problems in
two-dimensions are still to be found. Having a precise characterization would
be of great interest to shed light on the universality of Banks' approach: if a
sandpile model has a $\Poly$-complete prediction problem, then for sure it can
simulate circuits. But are there ways of doing so with {\em non-elementary
forms of signaling}?

In \cite{1999-Moore-complexitySandpiles}, after recalling that $\MPCVP$
(monotone planar $\CVP$) is in $\NC$, Moore and Nilsson gave insights on the
possibilities to reduce circuit simulation to sandpile dynamics for von Neumann
radius one in two dimensions: ``{\em two-dimensional
sandpiles differ from planar Boolean circuits in two ways.
First, connections between sites are bidirectional. Secondly,
a given site can change a
polynomial number of times in the course of the sandpile's evolution. Thus
we really have a three-dimensional Boolean circuit of polynomial depth,
with layers corresponding to successive steps in the sandpile's space-time}.''

Let us conclude this section with a lower bound on the complexity of
two-dimensional sandpile models. As noticed by Miltersen in \cite{Miltersen2005}, a
corollary of Theorem~\ref{theorem:VN3d} is that the
two-dimensional von Neumann sandpile model of radius one is
$\NC^1$-hard since it can simulate monotone planar circuits (an $\MPCVP$
instance can be reduced to a $\FSPRED$ instance with an $\AC^0$ algorithm,
without implementing cross-over), and $\MPCVP$ is $\NC^1$-hard since evaluating
a Boolean formula (for which the circuit is a tree) is $\NC^1$-complete
\cite{buss87}. From Lemma~\ref{lemma:3d} and Corollary~\ref{coro:3d} this
observation extends to any two-dimensional sandpile model, since the third
dimension is used exclusively for cross-over gates.

\begin{theorem}
  For any two-dimensional sandpile model, $\FSPRED$, $\SPRED$ and $\PRED$ are
  $\NC^1$-hard.
\end{theorem}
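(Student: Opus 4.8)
The plan is to reduce from $\MPCVP$ (the monotone planar circuit value problem), which is $\NC^1$-hard because evaluating a Boolean formula is $\NC^1$-complete \cite{buss87}, and a formula is simply a tree-shaped, hence planar, circuit. The key observation driving the whole argument is that the third dimension in the proof of Theorem~\ref{theorem:VN3d} is used \emph{exclusively} for the cross-over gadget; every other gadget (wire, turn, \emph{and}, \emph{or}, diode, multiplier) lives entirely within a single two-dimensional plane. Since a planar circuit needs no cross-overs by definition, the planar part of Banks' construction already gives a fully two-dimensional reduction.

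Concretely, first I would invoke the reduction of Theorem~\ref{theorem:VN3d} restricted to a planar instance of $\MCVP$: given an $\MPCVP$ instance, lay out its gates and wires on the grid and replace each circuit element by the corresponding macrocell from the two top rows of Figure~\ref{fig:VNgates}. Because the input circuit is planar, no wire need cross another, so the cross-over macrocell (the only one requiring the third dimension) is never instantiated, and the entire configuration fits at coordinate $0$ in the third dimension, \ie in a genuine two-dimensional von Neumann sandpile of radius one. As in Theorem~\ref{theorem:VN3d}, this transformation is computable in $\AC^0$, and the questioned cell topples if and only if the circuit outputs $1$; this establishes $\NC^1$-hardness of $\FSPRED$ for two-dimensional von Neumann radius one. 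Proposition~\ref{prop:hierarchy} then propagates the hardness to $\SPRED$ and $\PRED$.

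To reach \emph{any} two-dimensional sandpile model, I would then mirror the argument of Lemma~\ref{lemma:3d} and Corollary~\ref{coro:3d}, but one dimension lower. Picking two linearly independent vectors $x,y\in\neighborhood$ of maximal norm (and maximal norm in the projection orthogonal to $x$, respectively) guarantees by Equation~\eqref{eq:span} that they exist, are non-colinear, and satisfy $ax+by\notin\neighborhood$ for integer coefficients other than $(\pm1,0)$ and $(0,\pm1)$, so that the two transmission directions do not interfere. Rescaling the macrocells onto the two-dimensional sublattice generated by $\{x,y\}$ and adjusting each cell's sand content by $\theta-\distribution(u)$ (or $\theta-\distribution(u)-\distribution(v)$ for cells awaiting two in-neighbors) lets the model locally simulate von Neumann radius one in two dimensions, which in turn simulates the planar circuit. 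No cross-over gadget is needed, so no third independent direction is required, which is precisely why the construction collapses cleanly to dimension two.

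The main obstacle I anticipate is not hardness of the reduction but its \emph{soundness} in two dimensions without diodes between layers to fully control signal direction: in the planar layout one must still ensure no unintended backward propagation corrupts a gate output, and that the strong monotonicity of Proposition~\ref{prop:fsmono} (every cell topples at most once) continues to hold so that a toppling faithfully encodes a $1$. Verifying that the planar subset of Banks' gadgets transmits signals correctly under the Abelian dynamics (Proposition~\ref{prop:abelian}), independently of toppling order, is the delicate point; fortunately this is inherited verbatim from the analysis underlying Theorem~\ref{theorem:VN3d}, so the two-dimensional case requires no genuinely new gadget design, only the observation that planarity removes the need for the third dimension.
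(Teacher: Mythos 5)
Your proposal is correct and follows essentially the same route as the paper: reduce from $\MPCVP$ (which is $\NC^1$-hard via Boolean formula evaluation \cite{buss87}), observe that the third dimension in Theorem~\ref{theorem:VN3d} is used exclusively for cross-overs so the planar gadgets alone give an $\AC^0$ reduction in two dimensions, extend to arbitrary two-dimensional models by the two-direction analogue of Lemma~\ref{lemma:3d} and Corollary~\ref{coro:3d}, and propagate from $\FSPRED$ to $\SPRED$ and $\PRED$ via Proposition~\ref{prop:hierarchy}. In fact, your write-up is more detailed than the paper's own one-paragraph justification, and your closing remarks on diode placement and toppling-order independence correctly identify the points the paper inherits implicitly from the analysis of Theorem~\ref{theorem:VN3d}.
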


\section{Simulations between sandpile models}

As seen in the previous sections the prediction problems for sandpile models are dimension sensitive 
but once the dimension is fixed they turn out to be closely related to one another. This section provides
a notion of simulation between sandpile models and show that simulating a run of a sandpile model
by another with different parameters (neighborhood or distribution) has a polynomial cost.
Only sequential update policy is considered in this section. Moreover, all finite configurations mentioned are intended to
belong to the elementary hypercube. For this reason those hypothesis is omitted from the statements.
\medskip

Given a configuration $c\in\cfgs$, a \emph{firing sequence} for $c$ is a sequence $(x_1,x_2,\ldots,x_n)\in\Z^d$
of cells such that there exist configurations $c_1=c, c_2, \ldots, c_n$ such that $c_i\donnes{x_i} c_{i+1}$ for all $i\in\set{1, \ldots, n-1}$ and
\begin{enumerate}
\item $c_1(x_1)=\theta-1$;
\item $x_{i+1}-x_i\in\neighborhood$ for all $i\in\set{1,2,\ldots,n-1}$;
\item $c_i(x_i)=\theta$ for $i\in\set{2, \ldots, n}$.
\end{enumerate}
Remark that when a firing sequence has been triggered \ie when a sand grain has been added at $x_1$ in $c_1$, grains may be distributed on set of cells which are in the neighborhood of some element
of the sequence. The \emph{hitting set} collects precisely this information. More formally, the \emph{hitting set} induced by a
firing sequence $(x_1, x_2, \ldots, x_n)\in\Z^d$ for a finite configuration $c$ is the set of pairs $(x,u)$ where $x\in\Z^d$ is a cell and $u\in\N\setminus\set{0}$ is the total number of grains that $x$ has received
when all the cells of the firing sequence have been fired. 
An $\neighborhood$-path from the cell $x\in\Z^d$ to 
$y\in\Z^d$ is an sequence $(z_1, z_2, \ldots, z_n)$ of cells
such that $z_1=x$, $z_n=y$ and $z_{i+1}-z_i\in\neighborhood$ for all $i\in\set{1,\ldots,n-1}$. Remark that if $\neighborhood$ is complete, then
there always exists an $\neighborhood$-path between any pair of cells. Given a finite configuration $c$, 
let $V(c)\subset\Z^d$ be the convex hull of points $x\in\Z^d$
such that $c_x\ne0$.
A \emph{detector} cell $y$ for a finite configuration $c$ 
with hitting set $H$ is a cell such that there exists an integer $u>0$
such that $(y,u)\in H$ and $y\notin V(c)$.
In other words, a detector cell for a finite configuration $c$ is a cell which is ``outside''
$c$ and which may receive a sand grain if some cell $z$ ``inside'' $c$ is triggered. Of course, if given a finite configuration $c$, for all $x\in\Z^d$ we have that  $c+\indic_x$ is stable, then
$c$ has no detector cells. 
Figure~\ref{fig:simulation-notions} illustrates
all these recent notions.


\begin{figure}[htb]
\newcounter{gridcounterx}
\newcounter{gridcountery}
\begin{center}
\begin{tikzpicture}[scale=.4,font=\footnotesize]
\begin{scope}[xshift=3.5cm,yshift=-8cm]
  \draw[black!20] (-1,-1) grid (3,3);
  \filldraw[fill=black!30] (0,0) rectangle ++ (1,1);
  \foreach \x/\y/\v in {0/2/1,0/1/1,2/0/2,-1/-1/1,0/-1/1}
    \draw (\x,\y) rectangle node{\v} ++ (1,1);
  \node[rectangle,align=center](z) at (-2,2) {cell at\\$(0,0)$};
  \draw[-latex] (z) -- (-2,.5) -- (.5,.5);
\end{scope}
\begin{scope} 
\filldraw[black!30,shift={(-.5,-.5)}] (3,2) rectangle ++ (1,1);
  \draw[shift={(-.5,-.5)}] (-.5,-.5) grid (8.5,6.5);
  \setcounter{gridcounterx}{0} 
  \setcounter{gridcountery}{0} 
  \foreach \v in {
    0,0,0,0,0,0,0,0,
    0,1,2,4,0,3,1,0,
    0,0,1,6,1,4,2,0,
    0,2,5,0,3,2,0,0,
    0,1,3,5,5,3,1,0,
    0,0,0,0,0,0,0,0
  }{
    \node at (\value{gridcounterx},\value{gridcountery}) {$\v$};
    \addtocounter{gridcounterx}{1}
    \ifthenelse{\value{gridcounterx}>7} 
    {
      \setcounter{gridcounterx}{0} 
      \addtocounter{gridcountery}{1}
    }{}
  }
\end{scope}
\begin{scope}[xshift=9cm,yshift=3cm]
\node at (0,0) {$\overset{F}{\mapsto}$};
\end{scope}
\begin{scope}[xshift=11cm]
\filldraw[black!30,shift={(-.5,-.5)}] (5,2) rectangle ++ (1,1);
  \draw[shift={(-.5,-.5)}] (-.5,-.5) grid (8.5,6.5);
  \setcounter{gridcounterx}{0} 
  \setcounter{gridcountery}{0} 
  \foreach \v in {
    0,0,0,0,0,0,0,0,
    0,1,3,5,0,3,1,0,
    0,0,1,0,1,6,2,0,
    0,2,5,1,3,2,0,0,
    0,1,3,6,5,3,1,0,
    0,0,0,0,0,0,0,0
  }{
    \node at (\value{gridcounterx},\value{gridcountery}) {$\v$};
    \addtocounter{gridcounterx}{1}
    \ifthenelse{\value{gridcounterx}>7} 
    {
      \setcounter{gridcounterx}{0} 
      \addtocounter{gridcountery}{1}
    }{}
  }
\end{scope}
\begin{scope}[xshift=20cm,yshift=3cm]
\node at (0,0) {$\overset{F}{\mapsto}$};
\end{scope}
\begin{scope}[xshift=22cm] 
\filldraw[black!30,shift={(-.5,-.5)}] (3	,4) rectangle ++ (1,1);
  \draw[shift={(-.5,-.5)}] (-.5,-.5) grid (8.5,6.5);
  \setcounter{gridcounterx}{0} 
  \setcounter{gridcountery}{0} 
  \foreach \v in {
    0,0,0,0,0,0,0,0,
    0,1,3,5,1,4,1,0,
    0,0,1,0,1,0,2,2,
    0,2,5,1,3,3,0,0,
    0,1,3,6,5,4,1,0,
    0,0,0,0,0,0,0,0
  }{
    \node at (\value{gridcounterx},\value{gridcountery}) {$\v$};
    \addtocounter{gridcounterx}{1}
    \ifthenelse{\value{gridcounterx}>7} 
    {
      \setcounter{gridcounterx}{0} 
      \addtocounter{gridcountery}{1}
    }{}
  }
\end{scope}
\begin{scope}[xshift=20cm,yshift=3cm]
\node at (0,0) {$\overset{F}{\mapsto}$};
\end{scope}
\begin{scope}[xshift=25.5cm,yshift=-2cm]
\node at (0,0) {\rotatebox{-90}{$\overset{F}{\mapsto}$}};
\end{scope}
\begin{scope}[xshift=22cm,yshift=-9.5cm]
\filldraw[black!30,shift={(-.5,-.5)}] (2,3) rectangle ++ (1,1);
  \draw[shift={(-.5,-.5)}] (-.5,-.5) grid (8.5,6.5);
  \setcounter{gridcounterx}{0} 
  \setcounter{gridcountery}{0} 
  \foreach \v in {
    0,0,0,0,0,0,0,0,
    0,1,3,5,1,4,1,0,
    0,0,1,0,1,0,2,2,
    0,2,6,2,3,3,0,0,
    0,1,3,0,5,4,1,0,
    0,0,0,1,0,0,0,0,
    0,0,0,1,0,0,0,0
  }{
    \node at (\value{gridcounterx},\value{gridcountery}) {$\v$};
    \addtocounter{gridcounterx}{1}
    \ifthenelse{\value{gridcounterx}>7} 
    {
      \setcounter{gridcounterx}{0} 
      \addtocounter{gridcountery}{1}
    }{}
  }
\end{scope}
\begin{scope}[xshift=20cm,yshift=-6cm]
\node at (0,0) {$\overset{\,\,F}{\rotatebox{-180}{$\mapsto$}}$};
\end{scope}
\begin{scope}[xshift=11cm,yshift=-9.5cm]
\filldraw[black!15,shift={(-.5,-.5)}] (7,2) rectangle ++ (1,1);
\filldraw[black!15,shift={(-.5,-.5)}] (3,6) rectangle ++ (1,1);
\filldraw[black!15,shift={(-.5,-.5)}] (3,5) rectangle ++ (1,1);
\filldraw[black!15,shift={(-.5,-.5)}] (2,5) rectangle ++ (1,1);
  \draw[shift={(-.5,-.5)}] (-.5,-.5) grid (8.5,6.5);
  \setcounter{gridcounterx}{0} 
  \setcounter{gridcountery}{0} 
  \foreach \v in {
    0,0,0,0,0,0,0,0,
    0,1,\mathbf{3},\mathbf{5},\mathbf{1},\mathbf{4},1,0,
    0,\mathbf{1},\mathbf{2},0,1,0,2,\mathbf{2},
    0,2,0,\mathbf{2},\mathbf{5},\mathbf{3},0,0,
    0,1,\mathbf{4},0,5,\mathbf{4},1,0,
    0,0,\mathbf{1},\mathbf{1},0,0,0,0,
    0,0,0,\mathbf{1},0,0,0,0
  }{
    \node at (\value{gridcounterx},\value{gridcountery}) {$\v$};
    \addtocounter{gridcounterx}{1}
    \ifthenelse{\value{gridcounterx}>7} 
    {
      \setcounter{gridcounterx}{0} 
      \addtocounter{gridcountery}{1}
    }{}
  }
\end{scope}
\end{tikzpicture}
\end{center}
\caption{Sandpile model with $\theta=6$, neighborhood
and distribution function as depicted on bottom left. The evolution
starts from the finite configuration on top left (cells not drawn are supposed to contain $0$). Dark-grayed cells (taken in the order indicated by $\mapsto$) are the firing sequence. In the final stable configuration, cells in the hitting
set have their content in bold face, while detector cells have
a light-gray background.}
\label{fig:simulation-notions}
\end{figure}
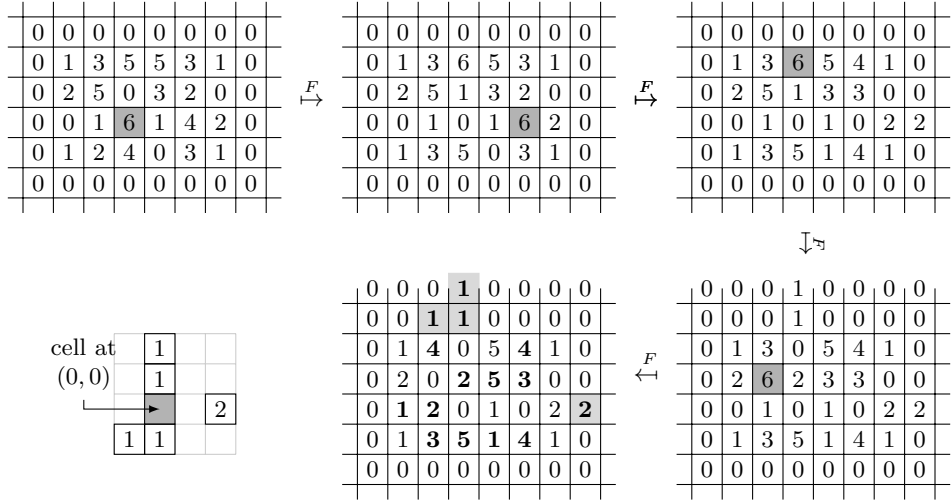

Given two sandpile models $M_1\equiv\structure{\neighborhood_1,\distribution_1,\theta_1}$
and $M_2\equiv\structure{\neighborhood_2,\distribution_2,\theta_2}$, we say that $M_2$
\emph{simulates} $M_1$ if 
there exist a computable transformation $h:\cfgs\to\cfgs$ such that
for all finite configurations $c,c'\in\cfgs$ and all pair of cells $x,y\in\Z^d$ such
that
\begin{enumerate}
\item $c,c'$ are stable for $M_1$;
\item $c+\indic_x\sdonnes c'$ according to $M_1$;
\item $y$ is a detector cell for $c$;
\end{enumerate}
 and there exists a finite configuration $c''$ such that
\begin{enumerate}
\item $h(c),c''$ are stable for $M_2$;
\item $h(c)+\indic_x\sdonnes c''$ according to $M_2$;
\item if $c'(y)>0$ then $c''(y)>0$.
\end{enumerate}
In other words, $M_2$ simulates $M_1$ if starting on a computable encoding of an initial configuration $c$ of $M_1$
the stable configuration which is reached afterwards contains the same bit of information in the detector cell $y$ as
$M_1$.

\begin{lemma}
Consider two complete neighborhoods $\neighborhood_1, \neighborhood_2\subset\Z^d$ 
such that $\neighborhood_1=\neighborhood_2\cup\set{u}$ for $u\in\Z^d$.
For any sandpile model $M_2\equiv\structure{\neighborhood_2,\distribution_2,\theta_2}$, 
there exists a sandpile  model $M_1\equiv\structure{\neighborhood_1,\distribution_1,\theta_1}$ 
which simulates $M_2$.
\end{lemma}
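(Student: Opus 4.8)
The plan is to build $M_1$ by \emph{keeping the distribution of $M_2$ and paying a toll for the new neighbor}. Concretely, set $\distribution_1(v)=\distribution_2(v)$ for every $v\in\neighborhood_2$ and fix an arbitrary weight $w:=\distribution_1(u)\geq 1$ on the extra neighbor, so that $\theta_1=\theta_2+w$; since $\neighborhood_2\subseteq\neighborhood_1$ the enlarged neighborhood is still complete. The encoding $h$ adds exactly $w$ grains to every cell of a finite box $B$ chosen large enough to contain $V(c)$ together with the whole region that any avalanche from $c$ can reach (such a box exists and is polynomial in the input by Lemma~\ref{lemma:configbound}), and leaves all other cells untouched. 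Because $c$ is stable for $M_2$ its cells carry at most $\theta_2-1$ grains, so after adding $w$ every cell of $B$ carries at most $\theta_1-1$: thus $h(c)$ is stable for $M_1$, as required, and Theorem~\ref{theorem:poly} guarantees that $h(c)+\indic_x$ stabilises to a finite $c''$.

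The heart of the argument is a \emph{coverage} statement: every cell toppled by the $M_2$-avalanche from $c+\indic_x$ is also toppled by the $M_1$-avalanche from $h(c)+\indic_x$. Granting this, any grain that $M_2$ delivers to the detector cell $y$ travels along some arc $v\in\neighborhood_2\subseteq\neighborhood_1$ emitted by a cell that also topples in $M_1$, so $M_1$ delivers a grain to $y$ as well, giving $c'(y)>0\Rightarrow c''(y)>0$. To prove coverage I would replay the canonical $M_2$ firing sequence inside $M_1$ and check that it stays legal. Writing $n_{i-1}(\cdot)$ for the number of firings performed before step $i$, the $M_1$-content of the cell $z_i$ about to be fired equals its $M_2$-content augmented by $w\,[\,1-n_{i-1}(z_i)+n_{i-1}(z_i-u)\,]$, the three terms accounting respectively for the pre-filled buffer, the raised threshold, and the grains received along $u$. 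At the \emph{first} firing of $z_i$ one has $n_{i-1}(z_i)=0$, so this offset is at least $w$ and $z_i$ reaches $\theta_1$ precisely when it reached $\theta_2$ in $M_2$. In the regime where each cell topples at most once---the strong-monotonicity situation of Proposition~\ref{prop:fsmono}, which is exactly the firing-sequence/hitting-set setting underlying the detector---this already establishes coverage, and Proposition~\ref{prop:abelian} makes the replay order irrelevant so that the induced odometer is well defined.

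The main obstacle is the grain emitted by the new neighbor $u$, and it appears in two guises. First, when a cell topples several times the offset $w\,[\,1-n_{i-1}(z_i)+n_{i-1}(z_i-u)\,]$ can turn negative, so the constant buffer $w$---which cannot be enlarged, since stable configurations keep values below the fixed $\theta_1$---no longer pre-pays the raised threshold and the naive replay may stall. The fix I would pursue is to stop replaying a fixed order and instead exhibit \emph{some} legal $M_1$-firing that reaches, at every cell, the multiplicity $\odometer(c+\indic_x,\cdot)$ of $M_2$, then invoke the least-action (abelian) principle of Proposition~\ref{prop:abelian} to conclude that the $M_1$ odometer dominates it; the bookkeeping amounts to checking that the cumulated $u$-input $w\,\odometer(\cdot-u)$ offsets the extra threshold cost $w\,\odometer(\cdot)$, which is controlled along the directions generated by $\neighborhood_2$ via completeness~\eqref{eq:span} together with the neighbour gap bound of Lemma~\ref{lemma:configbound}. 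Second, if one reads the statement as the equivalence $c'(y)>0\iff c''(y)>0$, the emitted $u$-grain must additionally be prevented from toppling cells that remain stable in $M_2$; here one checks that such a cell receives at most the single $u$-packet of size $w$, which its residual budget $\theta_1-1-c'$ absorbs, so no spurious toppling reaches $y$. I expect this interaction between repeated topplings and the dynamic $u$-input to be the genuine difficulty; everything else reduces to routine verification of the three clauses of the simulation definition.
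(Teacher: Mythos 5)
You correctly isolate the hard point, but the repair you propose for it does not work, so the proposal has a genuine gap at its load-bearing step. Your plan is to exhibit a legal $M_1$-firing attaining, at every cell, the $M_2$-multiplicities $\odometer(c+\indic_\set{x},c')$, with the bookkeeping resting on the claim that the cumulated $u$-input $w\,\odometer(\cdot-u)$ plus the one-shot buffer $w$ pays the extra threshold cost $w\,\odometer(\cdot)$; this needs $\odometer(z-u)\geq\odometer(z)-1$ at every $z$, which is false in general. By Proposition~\ref{prop:smost} the seeded cell $x$ is the global maximum of the odometer, and the gap to its in-neighbor along $u$ can exceed $1$: take $d=1$, $\neighborhood_2=\set{-1,1}$, $\distribution_2\equiv 1$, $\theta_2=2$, $u=2$, $w=1$ (so $\theta_1=3$), $c$ a stripe of five cells with one grain each, grain added at the center. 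The $M_2$-odometer is $(1,2,3,2,1)$, so at the center's third toppling your offset $w\bigl(1+n(x-u)-n(x)\bigr)$ is already negative ($\odometer(x-u)=1<2$), and a direct simulation of $M_1$ on the padded configuration shows the center topples only \emph{twice}. So no legal $M_1$-firing reaches the $M_2$-multiplicities at all, the $M_1$-odometer does not dominate the $M_2$-odometer, and the least-action appeal cannot rescue the replay. The weaker coverage statement you actually need (every $M_2$-toppled cell topples \emph{at least once} in $M_1$, so the cell feeding the detector still fires) happens to hold in this example, but only thanks to extra topplings of \emph{other} cells fed by the $u$-grains, a compensation mechanism your argument does not capture; as it stands, clause 3 of the simulation definition is unproven for your encoding.

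The paper avoids this entire difficulty by choosing a much blunter $h$: it does not pad $c$ but first computes (in polynomial time, by Theorem~\ref{theorem:poly}) the $M_2$ firing sequence $F_2$ and its hitting set $H_2$ from $c+\indic_\set{x}$, then outputs a bespoke configuration holding $\theta_1-1$ grains at $x$, exactly the deficit $\theta_1-v$ at each cell $z\in F_1=F_2$ with $(z,v)\in H_2$, and $0$ elsewhere (with $\theta_1=\theta_2+1$ and $\distribution_1(u)=1$). Each programmed cell becomes unstable precisely upon receiving its recorded grains, the stray $u$-grains land on empty or already-accounted cells, and the detector conclusion is immediate because the delivering cell $x_i$ satisfies $y-x_i\in\neighborhood_2\subset\neighborhood_1$. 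In other words, the paper trades the faithfulness of your construction (your $h$ is local, does not pre-compute the answer, and, unlike the paper's, does not depend on $x$ --- which is actually what the quantifier order in the simulation definition demands) for a correctness argument that is essentially trivial because the avalanche is hard-coded. If you want to salvage your more natural encoding, you need a genuinely new argument for once-coverage that accounts for the $u$-grain feedback, not odometer domination, which is provably false.
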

\begin{proof}
Let $c\in\cfgs$ 
be a stable configuration for $M_2$ and choose $x\in\Z^d$. Let $c'$ be such that $c+\indic_x\sdonnes c'$
according to $M_2$ and let $y$ be a detector cell for $c'$. Let $F_2$ be the firing
sequence $F_2\equiv(x_1, \ldots, x_n)$ associated with $c+\indic_x$ and let $H_2$ be the induced hitting set. 
The idea is to take the same firing sequence also for the model
$M_1$ that we are going to define but to craft the configuration $c'$ realizing the circuit for $M_1$ in such a way to prevent
unnecessary supplementary firings or before-time firings because of grains dropped forward by the $u$ component of the
neighborhood. Hence, let $F_1=F_2$ be the firing sequence for the model $M_1$.
Define $\theta_1=\theta_2+1$ and
\[
\forall v\in\neighborhood_1,\;\distribution_1(v)=
\begin{cases}
\distribution_2(v)&\text{if }v\in\neighborhood_2\\
1&\text{otherwise.}
\end{cases}
\]
Let us build a new finite configuration $c''$ as follows
\[
\forall z\in\Z^d,\;c''(z)=
\begin{cases}
\theta_1-1&\text{if }z=x\\
\theta_1-v&\text{if } z\in F_1\text{ and }(z,v)\in H_2\\
0&\text{otherwise.}
\end{cases}
\]
Let $c'''$ be the stable configuration such that $c''\sdonnes c'''$ according to $M_1$.
It is clear that $c'(y)>0$ implies $c'''(y)>0$. Indeed, if $c'(y)>0$,
then there exists $x_i\in F_1=F_2$ such that 
$y-x_i\in\neighborhood_2\subset\neighborhood_1$. Hence,
$c'''(y)>0$.
\end{proof}

The proofs of the following lemmas are very similar to the ones of previous lemmas and thus they are omitted.

\begin{lemma}
Consider a neighborhood $\neighborhood\subseteq\Z^d$ (not necessarily complete)
and a sandpile model $M_1\equiv\structure{\neighborhood,\distribution_1,\theta_1}$.
Define the model $M_2\equiv\structure{\neighborhood,\distribution_2,\theta_1+k}$
for some $k\in\N_+$ and such that
\begin{enumerate}
\item there exists a unique $u\in\neighborhood$ for which $\distribution_2(u)=\distribution_1(u)+k$; 
\item $\forall v\in\neighborhood,\;(u\ne v) \Rightarrow \distribution_2(v)=\distribution_1(v)$.
\end{enumerate}
Then, $M_2$ simulates $M_1$.
\end{lemma}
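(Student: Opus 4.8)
The plan is to replay the construction of the preceding lemma, the sole new ingredient being that $M_1$ and $M_2$ now share the neighborhood $\neighborhood$ and differ only in that $M_2$ delivers $k$ extra grains along the single direction $u$. Note first the two structural facts that will drive everything: $\distribution_2(v)\geq\distribution_1(v)$ for every $v\in\neighborhood$ (with equality except at $u$, where $\distribution_2(u)=\distribution_1(u)+k$), and $\theta_2=\theta_1+k$.

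I would fix an arbitrary instance: a configuration $c$ stable for $M_1$, a cell $x$, the stable $c'$ with $c+\indic_x\sdonnes c'$ according to $M_1$, and a detector cell $y$ for $c$. Let $F=(x_1,\dots,x_n)$ with $x_1=x$ be the firing sequence associated with $c+\indic_x$ in $M_1$, and let $H$ be the induced hitting set. Exactly as in the previous proof, the same firing sequence $F$ will be reused for $M_2$, and I would define $h$ so that $h(c)$ carries $\theta_2-1$ grains at $x$, carries $\theta_2-w$ grains at each cell $z\in F$ with $(z,w)\in H$, and is empty elsewhere. Writing $c''$ for the stable configuration with $h(c)+\indic_x\sdonnes c''$ according to $M_2$, it then remains to verify the three requirements of the simulation.

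The detector requirement is the clean half of the argument and relies only on the inequality $\distribution_2\geq\distribution_1$: since each arc carries at least as many grains in $M_2$ as in $M_1$, once the firing sequence $F$ is reproduced, every cell that receives a grain in the $M_1$ avalanche also receives one in $M_2$. In particular, if $c'(y)>0$ then some $x_i\in F$ with $y-x_i\in\neighborhood$ fires, and the same $x_i$ sends $\distribution_2(y-x_i)\geq 1$ grains to $y$ under $M_2$, so $c''(y)>0$.

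The main obstacle is to prove that $M_2$ genuinely reproduces $F$ and stabilizes with no spurious or premature topplings caused by the $k$-grain surplus along $u$. For a cell outside $F$: it is not brought to threshold by $F$ in $M_1$, hence receives at most $\theta_1-1$ grains there, and in $M_2$ its unique $u$-in-neighbor $z-u$ is the only additional source, contributing at most $k$ more, for a total strictly below $\theta_2=\theta_1+k$; thus the threshold increase absorbs the surplus exactly and no new cell fires. For a cell $z\in F$: starting it at $\theta_2-w$ makes it reach $\theta_2$ once its contributing in-neighbors have fired, and a possible extra $k$ grains only pushes it above $\theta_2$ without affecting that it topples exactly once (Proposition~\ref{prop:fsmono} applied to $M_2$). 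The delicate bookkeeping, identical in spirit to the previous lemmas, is to confirm that these grains arrive in an order that never lets a cell cross $\theta_2$ ahead of its turn in $F$; this is precisely the ``before-time firing'' phenomenon that the bump $\theta_2=\theta_1+k$ is designed to neutralize. Finally $h$ is computable, since it only relabels grain counts read off from $F$ and $H$, both obtainable in polynomial time by Theorem~\ref{theorem:poly}, so all three conditions hold and $M_2$ simulates $M_1$.
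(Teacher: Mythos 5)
Your overall route is exactly the paper's: the authors omit the proof of this lemma as ``very similar'' to that of the preceding one, i.e.\ reuse the $M_1$ firing sequence $F$ and hitting set $H$, craft $h(c)$ with $\theta_2-1$ at $x$ and $\theta_2-w$ at each $z\in F$ with $(z,w)\in H$, and let the threshold bump $\theta_2=\theta_1+k$ absorb the $k$-surplus carried along $u$. Your accounting for cells outside $F$ (at most $\theta_1-1$ grains from the $M_1$ avalanche, at most $k$ extra from the unique in-neighbor $z-u$, total below $\theta_2$) and your detector argument via $\distribution_2\geq\distribution_1$ are the right adaptations. One minor point: Proposition~\ref{prop:fsmono} does not apply as you invoke it (the grain is added at an arbitrary cell $x$, not at the minimal cell of a configuration in the elementary hypercube), but you do not need it: a cell $z\in F$ starts at $\theta_2-w$, receives at most $w+k$ grains, and $\theta_2+k<2\theta_2$ already forbids a second toppling.

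There is, however, a substantive gap in the step you yourself flag as delicate, and the bump does \emph{not} neutralize it. Your sentence ``starting it at $\theta_2-w$ makes it reach $\theta_2$ once its contributing in-neighbors have fired'' hides a circularity: $w$ is the \emph{total} hit count over the whole avalanche, so it includes grains that in $M_1$ arrive only \emph{after} $z$'s own toppling (for instance the backward grain sent from $x_{i+1}$ to $x_i$). In $M_1$ the cell can still fire at its turn because its true content $c(z)$ may exceed $\theta_1-w$, but $h(c)$ erases that slack by pinning $z$ at exactly $\theta_2-w$. Premature topplings are harmless by Proposition~\ref{prop:abelian}; the real danger is \emph{stalling}. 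Concretely, take $d=1$, $\neighborhood=\set{-1,1}$, $\distribution_1\equiv 1$, $\theta_1=2$, and $u=-1$, $k=1$ (so $\distribution_2(-1)=2$, $\distribution_2(1)=1$, $\theta_2=3$); for $c$ equal to $1,1,1$ on cells $0,1,2$ and $x=0$, one gets $F=(0,1,2)$ and $w=2$ at cell $1$, hence $h(c)$ equals $2,1,2$ there. After cell $0$ topples, cell $1$ holds $1+\distribution_2(1)=2<\theta_2$ and the $M_2$ avalanche stops, whereas in $M_1$ the detector cell $3$ does receive a grain: condition~(3) of the simulation fails. A repair would initialize each $z\in F$ at $\theta_2$ minus the number of grains $z$ receives (counted with $\distribution_2$) from the topplings \emph{strictly preceding its own} in $F$, followed by an induction along $F$ showing every cell fires exactly once. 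To be fair, this gap is inherited: the paper's written proof of the preceding lemma uses the same total-count initialization and is open to the same objection, so your blind reconstruction is faithful to the source --- but the step, taken literally, fails.
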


\begin{lemma}
Consider a neighborhood $\neighborhood\subseteq\Z^d$ (not necessarily complete)
and a sandpile model $M_1\equiv\structure{\neighborhood,\distribution_1,\theta_1}$.
Define the model $M_2\equiv\structure{\neighborhood,\distribution_2,\theta_2}$
such that
\begin{enumerate}
\item there exists a unique $u\in\neighborhood$ for which $\distribution_2(u)=\distribution_1(u)-k>0$ 
for some $k\in\N_+$; 
\item $\forall v\in\neighborhood,\;(u\ne v) \Rightarrow \distribution_2(v)=\distribution_1(v)$;
\item $\theta_2=\theta_1-k$.
\end{enumerate}
Then, $M_2$ simulates $M_1$.
\end{lemma}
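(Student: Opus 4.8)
My plan is to reuse the template of the first, fully written lemma of this section: take a single avalanche of $M_1$, extract its firing sequence, and replay exactly that sequence inside $M_2$ after re-priming the cells it visits so that they reach the lowered threshold $\theta_2$ at the right moment. First I would fix a configuration $c$ stable for $M_1$, a cell $x$, and the stable $c'$ with $c+\indic_x\sdonnes c'$ according to $M_1$; I would also fix a detector cell $y$, which by definition lies outside $V(c)$ yet belongs to the hitting set. I would then record the firing sequence $F_1=(x_1,\dots,x_n)$ triggered by $c+\indic_x$ together with its hitting set $H_1$, and set $F_2=F_1$. This is legitimate because $M_1$ and $M_2$ share the neighborhood $\neighborhood$, so the adjacency requirement $x_{i+1}-x_i\in\neighborhood$ of a firing sequence carries over verbatim.

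The transformation $h$ I would use primes each visited cell to its new deficit with respect to $\theta_2$. Along the path this reads
\[
  h(c)(z)=
  \begin{cases}
    \theta_2-1 & \text{if } z=x_1,\\
    \theta_2-\distribution_2(x_i-x_{i-1}) & \text{if } z=x_i,\ i\geq 2,\\
    0 & \text{otherwise,}
  \end{cases}
\]
and $h(c)(z)=\theta_2-v'$ in general, where $v'$ is the number of grains that $z$ collects in the $M_2$-replay, read off from $H_1$ after subtracting $k$ for each predecessor that feeds it along the $u$-direction. Since $\distribution_2$ is strictly positive on $\neighborhood$, every such value lies in $\{0,\dots,\theta_2-1\}$, so $h(c)$ is stable for $M_2$ and computable.

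The key point is a one-line arithmetic identity that uses the coupling $\theta_2=\theta_1-k$ together with $\distribution_2(u)=\distribution_1(u)-k$ and $\distribution_2(v)=\distribution_1(v)$ for $v\neq u$. In $M_1$, cell $x_i$ ($i\geq 2$) reaches $\theta_1$ exactly when the firing of $x_{i-1}$ delivers $\distribution_1(x_i-x_{i-1})$ grains; in $M_2$ the same firing delivers $\distribution_2(x_i-x_{i-1})$ grains, which is $k$ fewer precisely when $x_i-x_{i-1}=u$, and the threshold is itself $k$ lower, so after re-priming $x_i$ lands on $\theta_2$ at the very same step. Hence $F_2=F_1$ is a valid firing sequence for $h(c)+\indic_x$ under $M_2$: the same cells topple in the same order. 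Letting $c''$ be the stable configuration with $h(c)+\indic_x\sdonnes c''$, the identity of the neighborhoods and the positivity of $\distribution_2$ ensure that any cell fed in the $M_1$-run is also fed in the $M_2$-run; in particular $y$ is an out-neighbor of some $x_i\in F_1$, so $c''(y)\geq\distribution_2(y-x_i)>0$ whenever $c'(y)>0$, which is the third consequent of the simulation.

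The delicate point, which I expect to be the main obstacle and which is handled exactly as in the first lemma, is that a cell of $F_1$ may be an out-neighbor of several earlier firing cells, so that naive path-priming could let it topple before its turn. This is resolved by priming through the hitting set $H_1$---recording each cell's total intake and choosing its initial content so that the threshold is met only once all the relevant predecessors have fired---guaranteeing that no spurious toppling anticipates the intended sequence. As this bookkeeping is identical to the one already carried out, the proof concludes in the same way.
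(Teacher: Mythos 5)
Your proposal is correct and matches the paper's intent exactly: the paper omits this proof, stating it is ``very similar'' to the first simulation lemma, and you reuse precisely that template---replaying the $M_1$ firing sequence in $M_2$ with cells primed via the hitting set, with the right adjustment (subtracting $k$ per feeding along $u$, offset by $\theta_2=\theta_1-k$) so the sequence goes through unchanged and the detector cell $y$ still receives a grain. Your explicit handling of the multi-predecessor priming subtlety is, if anything, slightly more careful than the paper's own sketch.
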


Remark that from a computational complexity point of view, the constructions of the previous
lemma come at no cost. However, each single simulation has a cost which essentially consists 
in computing the firing sequence for the original system (the hitting set can be computed at a 
constant multiplicative cost while computing the firing sequence). By Theorem~\ref{theorem:poly},
this can be done in polynomial time (using a stack for example).

The notion of simulation between sandpile models induces a preorder structure on the set of
sandpile models in the same dimension. It is an interesting research direction to explore the
properties of such an order and see if and under which form there exists a notion of
universality.

\section{Undecidability on infinite configurations}
\label{s:undecidable}

A further generalisation of sandpile models consists in relaxing the finiteness
of the initial configuration. As noted in~\cite{cairns2018}, allowing any
configuration initially written on the tape of the Turing machine would make
the prediction problem trivially undecidable in any dimension, and restricting to periodic
configurations comes down to studying a finite region on a torus and is
therefore decidable in any dimension (since given a fixed number of sand grains
there are finitely many configurations). On the other hand,
when the initial configuration given as input to
the prediction problem is \emph{ultimately periodic} (\ie periodic except on a finite 
region), then the following result holds.

\begin{theorem}[\cite{cairns2018}]
  $\PRED$ extended to ultimately periodic configurations (the input consists in the finite non-periodic region, plus the finite periodic pattern repeated all around) in dimension three is undecidable.
\end{theorem}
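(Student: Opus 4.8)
The plan is to reduce from the halting problem of Turing machines, which is undecidable, producing for each machine $M$ an ultimately periodic configuration $c$ and a cell $x\in\Z^3$ such that $x$ eventually topples during the evolution from $c$ if and only if $M$ halts on the empty input. The two ingredients are already in place: by Theorem~\ref{theorem:VN3d} and Corollary~\ref{coro:3d} a three-dimensional sandpile model can simulate arbitrary (monotone, fan-in/out $2$, layered) circuits through chains of topplings, using the third dimension only for cross-overs; and the ultimately periodic setting supplies an infinite, translation-invariant ``computational fabric'' that a finite perturbation can ignite. The key new idea compared to the finite case is that an unbounded computation can now unfold in the infinite periodic region, so that a non-halting machine produces an avalanche that escapes to infinity, whereas a halting machine produces one that terminates and triggers $x$.

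The construction I would carry out lays out the space--time diagram of $M$ in the fabric. Fix two coordinate directions to host, layer by layer, a circuit $C$ that computes one transition of $M$: from an encoding of the tape contents and head position at time $t$ it outputs the encoding at time $t+1$, exactly as in the proof of Theorem~\ref{theorem:VN3d}, with the third direction reserved for the cross-overs needed to overcome non-planarity. The periodic background is a single stable pattern whose fundamental domain is this transition circuit (together with the diodes of Theorem~\ref{theorem:VN3d} forcing a one-way flow of information), repeated along the ``time'' axis so that the output wires of the $t$-th copy feed the input wires of the $(t+1)$-th copy. The finite non-periodic region encodes the initial tape and the start state of $M$ and carries the single grain addition that starts the whole process, while a distinguished cell $x$ is wired to fire precisely when the halting state of $M$ is reached.

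The steps, in order, are: (i) verify that the chosen periodic pattern is a \emph{stable} configuration for the model, so that nothing topples before the finite perturbation is applied; (ii) check that igniting the finite region reproduces, in the first layer, the encoding of the initial instantaneous description of $M$; (iii) argue inductively that if layer $t$ carries the correct encoding of $M$'s configuration at time $t$, then the transition circuit forces layer $t+1$ to carry the encoding at time $t+1$, the diodes guaranteeing that no spurious ``backward'' signal (as in an \emph{or} gate with a single active input) can corrupt an earlier layer---here Proposition~\ref{prop:abelian} is convenient, since the order of topplings is irrelevant; and (iv) conclude that the halt signal reaches $x$ after finitely many steps iff $M$ halts. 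Since each copy of $C$ has constant size and constant displacement, the fundamental domain and the finite perturbation are computable from $M$, so the reduction is effective.

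The hard part will be designing a fabric that is simultaneously inert and active: the periodic pattern must be globally stable on its own (condition (i)) yet, once triggered, propagate a well-synchronised signal through infinitely many copies of $C$ without any copy self-activating, leaking grains into a neighbouring ``time layer'' prematurely, or letting the infinite periodic mass create uncontrolled avalanches. A second subtlety, which I would address carefully, is the very meaning of $\PRED$ on a configuration that need not stabilise: when $M$ does not halt the dynamics never reaches a stable configuration, so one must argue at the level of the monotone, at-most-once toppling behaviour (in the spirit of Proposition~\ref{prop:fsmono}, adapted to the infinite fabric) that ``$x$ eventually topples'' is well defined as the existence of a finite time at which $x$ fires, and that this event occurs exactly when the computation halts. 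A counter (Minsky) machine could be substituted for the Turing machine to keep the per-step circuit $C$ especially simple, representing the two counters as unbounded distances along the periodic axes.
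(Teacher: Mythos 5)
Your overall plan---reduce from the halting problem, using the three-dimensional circuit machinery of Theorem~\ref{theorem:VN3d} inside an infinite periodic ``fabric'' that a finite perturbation ignites---is the right shape, and indeed matches the strategy of the cited source (the survey itself gives no proof and defers entirely to \cite{cairns2018}). However, your main construction has a genuine gap. You propose that each layer of the fabric carries a copy of a circuit $C$ computing one Turing-machine transition, ``exactly as in the proof of Theorem~\ref{theorem:VN3d}''; but that proof simulates \emph{monotone} circuits, and the transition function of a Turing machine is not a monotone Boolean function of the layer encoding (overwriting a $1$ by a blank requires a negation). In an avalanche there is no clock: a $0$ is simply the absence of topplings, and absence of a signal can never be detected, so monotone gates alone cannot force layer $t{+}1$ to carry the successor configuration. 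The standard repair, dual-rail logic with negation implemented as a wire crossing (exactly the Bitar--Goles--Margenstern idea quoted in the paper), forces \emph{every} bit of every layer---including the infinitely many blank tape cells---to be actively driven by a firing rail; your finite non-periodic seed can only ignite finitely many of them, so you would additionally need the periodic background to self-initialize the blank rails along the light cone of the computation (say, by a diagonal ignition structure built into the fundamental domain). Your sketch neither identifies this problem nor constructs such a mechanism, and your inductive step (iii) fails at the first layer where a negation, or a blank cell whose rails were never fired, matters.

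Notably, your closing aside---substituting a two-counter (Minsky) machine---is not merely a simplification but essentially the repair, and it is the route actually taken in \cite{cairns2018}: with a counter machine only a bounded number of signals is live at any moment, and the unboundedness is moved into the geometry, each counter being stored as the distance a signal has traveled along an infinite periodic track, with increment, decrement and zero-test gadgets of constant size; every gadget cell then topples at most once and no infinite family of rails ever needs activation. Your remaining points are sound and needed: the periodic pattern must be stable in isolation; $\PRED$ (``does $x$ ever become unstable'') is well defined even when, in the non-halting case, the dynamics never stabilizes; at-most-once toppling must be argued gadget-by-gadget since Proposition~\ref{prop:fsmono} is stated only for finite configurations; and the reduction is effective. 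But as written, the proof does not go through unless the counter-machine substitution (or the dual-rail plus light-cone initialization machinery) is promoted from an afterthought to the core of the construction.
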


Finally, if we further relax the number-conservation property and allows the distribution
functions to ``eat'' grains, then one obtains \emph{sand automata}~\cite{formenti2003}.
Without going into the details of their precise definition (which is a bit involved), we can
just recall that they are a special type of cellular automata particularly adapted to the sandpile
``playground''~\cite{dennunzio2009,dennunzio2008}. The following result is interesting 
in our context.

\begin{theorem}[\cite{formenti2005,formenti2007}]
Ultimate (temporal) periodicity is undecidable for sand automata on finite configurations.
\end{theorem}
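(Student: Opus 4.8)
The plan is to reduce the halting problem for Turing machines to the ultimate periodicity problem, exploiting the computational universality of sand automata. Recall that the orbit $\{A_M^t(c)\}_{t\geq 0}$ of a finite configuration $c$ under a sand automaton $A_M$ is \emph{ultimately periodic} when there exist $t_0\geq 0$ and $p\geq 1$ with $A_M^{t_0+p}(c)=A_M^{t_0}(c)$. Given a Turing machine $M$, I would construct, in a computable way, a sand automaton $A_M$ together with a finite initial configuration $c_M$ encoding the initial state of $M$ on the empty input, in such a way that the orbit of $c_M$ is ultimately periodic if and only if $M$ halts.

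First I would invoke the simulation of Turing machine computations by sand automata: tape symbols and the head position are encoded as local patterns in the \emph{height differences} of the profile, which are exactly the data a sand automaton may read, owing to its invariance under adding a constant to all heights. This step is the technical heart of the argument and is the one carried out in the references; for the reduction it suffices that each step of $M$ is mirrored by a bounded number of steps of $A_M$, and that distinct configurations of $M$ produce distinct configurations of $A_M$.

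The key ingredient for controlling periodicity is to couple the simulation with a \emph{monotone clock}: a marked signal that advances one cell to the right at every step, leaving behind a trail of a dedicated symbol. As long as $M$ has not halted, the trail grows by one cell per step, so every configuration along the orbit is pairwise distinct and the orbit cannot be ultimately periodic. When, and only when, the computation reaches a halting state, a halt signal is emitted that stops the clock and freezes the whole configuration into a fixed point. Hence the orbit of $c_M$ eventually reaches a configuration of period $1$ precisely when $M$ halts, and otherwise visits pairwise distinct configurations forever. This yields the equivalence ``$M$ halts if and only if the orbit of $c_M$ is ultimately periodic'', which is a many-one reduction from the undecidable halting problem, giving the stated undecidability. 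The same construction can moreover be folded into a single universal sand automaton, so that undecidability already holds for a fixed automaton with the finite configuration as the only input.

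The main obstacle I expect lies in the faithful implementation of Turing machine dynamics within the restrictive sand automaton formalism, where the rule reads only relative heights within a bounded horizontal window and must be invariant under a global vertical translation, while simultaneously realising the growing clock and the halt-triggered freeze. In particular, guaranteeing that a non-halting computation produces a \emph{provably} non-ultimately-periodic orbit, rather than one that could secretly re-enter a cycle, is exactly what the monotone clock enforces; verifying that this auxiliary gadget coexists with the simulation without corrupting it, and that the halt signal cleanly reaches a fixed point, is where the careful bookkeeping is concentrated.
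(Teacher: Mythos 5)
The survey states this theorem only by citation, and the cited works of Cervelle--Formenti--Masson \cite{formenti2005,formenti2007} establish it by essentially the reduction you sketch: a machine simulation embedded in a sand automaton, coupled with a monotonically growing component (your ``clock'') so that non-halting computations yield pairwise distinct configurations, and a halt-triggered freeze to a fixed point, giving ``halts iff ultimately periodic.'' Your proposal is therefore correct in outline and follows the same approach as the source proofs; the only caveat is that the technical core --- realising the simulation under the sand-automaton constraints of bounded-window, height-difference-only, vertically translation-invariant rules --- is invoked from the references rather than carried out, which is acceptable here since that simulation machinery is indeed established there.
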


This last result tells that (somewhat unsurprisingly) sand automata are highly unpredictable.

\section*{Acknowledgments}

The authors thank
the \emph{Young Researcher} project ANR-18-CE40-0002-01 ``FANs'',
the project ECOS-CONICYT C16E01,
the project STIC AmSud CoDANet 19-STIC-03 (Campus France 43478PD).

\bibliographystyle{plain}
\bibliography{biblio}

\end{document}